\DeclareMathOperator*{\argmin}{arg\,min}
\newcommand{\opt}{\textsf{opt}}
\newcommand{\Q}{\boldsymbol{q}}
\newcommand{\E}{\mathcal{E}}
\newcommand{\T}{\mathcal{T}}
\newcommand{\eps}{\varepsilon}
\renewcommand{\epsilon}{\varepsilon}
\renewcommand{\Re}{\mathbb{R}}
\renewcommand{\paragraph}[1]{\smallskip\noindent{\bf {#1. }}}
\newcommand{\allattr}{{\mathbf A}}
\newcommand{\allrel}{{\mathbf R}}
\newcommand{\attr}{{\texttt{attr}}}
\newcommand{\dom}{{\texttt{dom}}}
\newcommand{\fhw}{\mathsf{fhw}}
\newcommand{\query}{\Q}
\newcommand{\D}{\mathbf{D}}
\newcommand{\I}{\D}
\newcommand{\dist}{\phi}
\newcommand{\kmedian}{\textbf{v}}
\newcommand{\kmeans}{\mu}
\renewcommand{\opt}{\mathsf{OPT}}
\newcommand{\optD}{\mathsf{OPT}_{\textsf{disc}}}
\newcommand{\kapprox}{X}
\newcommand{\coreset}{\mathcal{C}}
\newcommand{\ret}{\mathcal{S}}
\newcommand{\diameter}{\textsf{diam}}
\newcommand{\tree}{\mathcal{T}}
\newcommand{\node}{u}
\newcommand{\all}{\Q_\node(\I)}
\newcommand{\error}{\mathcal{E}}
\newcommand{\timeMedian}{T^{\mathsf{med}}}
\newcommand{\timeMeans}{T^{\mathsf{mean}}}
\newcommand{\rootnode}{\rho}
\renewcommand{\O}{\tilde{O}}
\newcommand{\kmedianAlg}{\mathsf{GkMedianAlg}}
\newcommand{\DkmedianAlg}{\mathsf{DkMedianAlg}}
\newcommand{\kmeansAlg}{\mathsf{GkMeansAlg}}
\newcommand{\DkmeansAlg}{\mathsf{DkMeansAlg}}
\newcommand{\nn}{\mathcal{N}}
\def\mparagraph#1{\par\noindent\textbf{#1.}\quad}
\begin{document}

\title{Improved Approximation Algorithms for Relational Clustering} 

\author{Aryan Esmailpour}
\affiliation{%
  \institution{Department of Computer Science, University of Illinois Chicago}
  \city{Chicago}
  \country{USA}}
\email{aesmai2@uic.edu}

\author{Stavros Sintos}
\affiliation{%
  \institution{Department of Computer Science, University of Illinois Chicago}
  \city{Chicago}
  \country{USA}}
\email{stavros@uic.edu}

\begin{abstract}
   Clustering plays a crucial role in computer science, facilitating data analysis and problem-solving across numerous fields. By partitioning large datasets into meaningful groups, clustering reveals hidden structures and relationships within the data, aiding tasks such as unsupervised learning, classification, anomaly detection, and recommendation systems. Particularly in relational databases, where data is distributed across multiple tables, efficient clustering is essential yet challenging due to the computational complexity of joining tables. This paper addresses this challenge by introducing efficient algorithms for $k$-median and $k$-means clustering on relational data without the need for pre-computing the join query results. For the relational $k$-median clustering, we propose the first efficient relative approximation algorithm.
   For the relational $k$-means clustering, our algorithm significantly improves both the approximation factor and the running time of the known relational $k$-means clustering algorithms, which suffer either from large constant approximation factors, or expensive running time.
   Given a join query $\Q$ and a database instance $\I$ of $O(N)$ tuples, for both $k$-median and $k$-means clustering on the results of $\Q$ on $\I$, we propose randomized $(1+\eps)\gamma$-approximation algorithms that run in roughly $O(k^2N^{\fhw})+T_\gamma(k^2)$ time, where $\eps\in (0,1)$ is a constant parameter decided by the user, $\fhw$ is the fractional hyper-tree width of $\Q$, while $\gamma$ and $T_\gamma(x)$ are respectively the approximation factor and the running time of a traditional clustering algorithm in the standard computational setting over $x$ points.
   
\end{abstract}




\maketitle
\section{Introduction}
Clustering is a fundamental process in computer science, serving as a vital tool for data analysis, pattern recognition, and problem-solving across various domains. Through clustering, large datasets are partitioned into meaningful groups, unveiling hidden structures and relationships within the data. 
In machine learning, clustering algorithms are used to enable tasks such as classification, anomaly detection, and recommendation systems.
Its significance lies in its ability to transform raw data into useful knowledge, empowering researchers, and businesses to make informed decisions.


In relational databases, data is gathered and stored across various tables.
Each table consists of a set of tuples and two tuples stored in different tables might refer to the same entity.
Relational data is decoupled into different relational tables, and we cannot obtain the full data unless all the tables are joined.
This setting is quite common in real database management systems (DBMS). 
As shown in the DB-engines study~\cite{link1} the vast majority of database systems are relational DBMS.
Kaggle surveys~\cite{kaggle} demonstrate that most of the learning tasks encountered by data scientists involve relational data. More specifically, 70\% of database systems are relational DBMS and 65\% of the data sets in learning tasks are relational data.
As mentioned in~\cite{chen2022coresets}, relational data is expected to reach \$122.38 billion by 2027~\cite{link:relational} in investments.

In order to explore and process relational data, usually two steps are required: data preparation, and data processing. In the data preparation step, tuples from different tables are joined to construct useful data, while in data processing, an algorithm (for example a clustering algorithm) is performed on the join results to analyze the data. This two-step approach is usually too expensive because the size of the join results can be polynomially larger than the total size of the input tables~\cite{ngo2018worst, ngo2014skew}.

In this paper, given a (full) conjunctive query and a database instance, we study efficient $k$-median and $k$-means clustering on the results of the query without first computing the query results.
While there are recent papers~\cite{chen2022coresets, curtin2020rk, moseley2021relational} on designing clustering algorithms on relational data, they usually suffer from i) large additive approximation error, ii) large constant relative approximation error, and iii) expensive running time.

Despite recent attention to relational clustering, the problem of efficiently solving relational $k$-median clustering without additive approximation error remained unsolved.
In this paper, we propose the first efficient relative approximation algorithm for the $k$-median clustering on relational data. Furthermore, by extending our methods, we design an algorithm for the $k$-means clustering on relational data that dominates (with respect to both the running time and the approximation ratio) all the known relative approximation algorithms.

\subsection{Notation and problem definition}
\noindent{\bf Conjunctive Queries.}
We are given a database schema $\allrel$ over a set of $d$ attributes $\allattr$. The database schema $\allrel$ contains $m$ relations $R_1, \ldots, R_m$. Let $\allattr_j\subseteq \allattr$ be the set of attributes associated with relation $R_j\in \allrel$. For an attribute $A\in \allattr$, let $\dom(A)$ be the domain of attribute $A$. We assume that $\dom(A)=\Re$ for every $A\in \allattr$. Let $\I$ be a database instance over the database schema $\allrel$. For simplicity, we assume that each relation $R_j$ contains $N$ tuples in $\I$.
Throughout the paper, we consider data complexity i.e., $m$ and $d$ are constants, while $N$ is a large integer.
We use $R_j$ to denote both the relation and the set of tuples from $\I$ stored in the relation.
For a subset of attributes $B\subseteq \allattr$ and a set $Y\subset\Re^d$, let $\pi_{B}(Y)$ be the set containing the projection of the tuples in $Y$ onto the attributes $B$. Notice that two different tuples in $Y$ might have the same projection on $B$, however, $\pi_{B}(Y)$ is defined as a set, so the projected tuple is stored once. We also define the multi-set $\widebar{\pi}_{B}(Y)$ so that if for two tuples $t_1, t_2\in Y$ it holds that $\pi_{B}(t_1)=\pi_B(t_2)$, then the tuple $\pi_B(t_1)$ exists more than once in $\widebar{\pi}_B(Y)$.

Following the related work on relational clustering~\cite{chen2022coresets, curtin2020rk, moseley2021relational}, we are given a full conjunctive query (join query) $\Q:=R_1\Join \ldots \Join R_m$. The set of results of a join query $\Q$ over the database instance $\I$ is defined as $\Q(\I)=\{t\in \Re^d\mid \forall j\in[1,m]:\pi_{\allattr_j}(t)\in R_j\}$.
For simplicity, all our algorithms are presented assuming that $\Q$ is an acyclic join query, however in the end we extend to any general join query. A join query $\Q$ is acyclic if there exists a tree, called join tree, such that the nodes of the tree are the relations in $\allrel$ and for every attribute $A\in \allattr$, the set of nodes/relations that contain $A$ form a connected component.
Let $\rho^*(\Q)$ be the fractional edge cover of query $\Q$, which is a parameter that bounds the number of join results $\Q(\I)$ over any database instance. More formally, for every database instance $\I'$ with $O(N)$ tuples in each relation, it holds that 
$|\Q(\I')|=O(N^{\rho^*(\Q)})$ as shown in~\cite{atserias2013size}.

We use the notation $\fhw(\Q)$ to denote the \emph{fractional hypertree width}~\cite{gottlob2014treewidth} of the query $\Q$.
The fractional hypertree width roughly measures how close $\Q$ is to being acyclic.
For every acyclic join query $\Q$, we have $\fhw(\Q)=1$.
Given a cyclic join query $\Q$, we convert it to an equivalent acyclic query such that each relation is the result of a (possibly cyclic) join query with fractional edge cover at most $\fhw(\Q)$. Hence, a cyclic join query over a database instance with $O(N)$ tuples per relation can be converted, in $O(N^{\fhw(\Q)})$ time, to an equivalent acyclic join query over a database instance with $O(N^{\fhw(\Q)})$ tuples per relation~\cite{atserias2013size}.
A more formal definition of $\fhw$ is given in Appendix~\ref{appndx:generalQueries}. If $\Q$ is clear from the context, we write $\fhw$ instead of $\fhw(\Q)$.

For two tuples/points\footnote{The terms points and tuples are used interchangeably.} $p,q\in \Re^d$ let $\dist(p,q)=||p-q||=\left(\sum_{j=1,\ldots, d}(\pi_{A_j}(p)-\pi_{A_j}(q))^2\right)^{1/2}$ be the Euclidean distance between $p$ and $q$.
Throughout the paper, we use the Euclidean distance to measure the error of the clustering.

\paragraph{Clustering}
In this paper, we focus on $k$-median and $k$-means clustering. We start with some useful general definitions.
Let $P$ be a set of points in $\Re^d$ and let $C$
be a set of $k$ centers/points in $\Re^d$.
Let $w:\Re^d\rightarrow \Re_{>0}$ be a weight function such that $w(p)$ is the weight of point $p\in P$.
For a point $p\in\Re^d$, let $\dist(p,C)=\min_{c\in C}\dist(p,c)$. We define
\vspace{-0.1em}
$$\kmedian_{C}(P)=\sum_{p\in P}w(p)\dist(p,C),\quad \text{ and } \quad \kmeans_{C}(P)=\sum_{p\in P}w(p)\dist^2(p,C).$$
If $P$ is an unweighted set, then $w(p)=1$ for every $p\in P$.

\hspace{-1em}\textbf{$k$-median clustering}: Given a weight function $w$, a set of points $P$ in $\Re^d$ and a parameter $k$, the goal is to find a set $C\subset \Re^d$ with $|C|=k$ such that $\kmedian_{C}(P)$ is minimized. This is also called the \emph{geometric $k$-median clustering} problem. Equivalently, we define the \emph{discrete $k$-median clustering} problem, where the goal is to find a set $C\subseteq P$ with $|C|=k$ such that $\kmedian_{C}(P)$ is minimized.
Let $\opt(P)=\argmin_{S\in \Re^d, |S|= k}\kmedian_S(P)$ be a set of $k$ centers in $\Re^d$ with the minimum $\kmedian_{\opt(P)}(P)$. For the discrete $k$-median problem let $\optD(P)=\argmin_{S\subseteq P, |S|= k}\kmedian_S(P)$. It is always true that $\kmedian_{\opt(P)}(P)\leq \kmedian_{\optD(P)}(P)\leq 2\kmedian_{\opt(P)}(P)$.
Let $\kmedianAlg_\gamma$  (resp. $\DkmedianAlg_\gamma$) be a (known) $\gamma$-approximation algorithm for the geometric $k$-median problem (resp. discrete $k$-median problem) in the standard computational setting\footnote{In the standard computational setting data is stored in one table.} that runs in $\timeMedian_\gamma(|P|)$ time, where $\gamma$ is a constant.

\hspace{-1em}\textbf{$k$-means clustering}: Given a weight function $w$, a set of points $P$ in $\Re^d$ and a parameter $k$, the goal is to find a set $C\subset \Re^d$ with $|C|=k$ such that $\kmeans_{C}(P)$ is minimized. This is also called the \emph{geometric $k$-means clustering} problem. Equivalently, we define the \emph{discrete $k$-means clustering} problem, where the goal is to find a set $C\subseteq P$ with $|C|=k$ such that $\kmeans_{C}(P)$ is minimized.
Let $\opt(P)=\argmin_{S\in \Re^d, |S|= k}\kmeans_S(P)$ be a set of $k$ centers in $\Re^d$ with the minimum $\kmeans_{\opt(P)}(P)$. For the discrete $k$-means problem let $\optD(P)=\argmin_{S\subseteq P, |S|= k}\kmeans_S(P)$. It is always true that $\kmeans_{\opt(P)}(P)\leq \kmeans_{\optD(P)}(P)\leq 4\kmeans_{\opt(P)}(P)$.
Let $\kmeansAlg_\gamma$ (resp. $\DkmeansAlg_\gamma$) be a (known) $\gamma$-approximation algorithm for the geometric $k$-means problem (resp. discrete $k$-means problem) in the standard computational setting that runs in $\timeMeans_\gamma(|P|)$ time, where $\gamma$ is a constant.


For simplicity, we use the same notation $\timeMedian_\gamma(\cdot)$ for the running time of $\kmedianAlg$ and $\DkmedianAlg$. Similarly, we use the same notation $\timeMeans_\gamma(\cdot)$ for the running time of $\kmeansAlg$ and $\DkmeansAlg$. We also use the same notation $\opt(\cdot)$ for the optimum solution for $k$-means and $k$-median clustering problems. It is always clear from the context whether we are referring to $k$-means or $k$-median clustering.

In this paper, we study (both geometric and discrete) $k$-median and $k$-means clustering on the result of a join query:

\begin{definition}
\label{def:kmedian}
    [Relational $k$-median clustering] Given a database instance $\I$, a join query $\Q$, and a positive integer parameter $k$, the goal is to find a set $\ret\subseteq \Re^d$ of size $|\ret|=k$ such that $\kmedian_C(\Q(\I))$ is minimized. In the discrete relational $k$-median clustering the set $\ret$ should be a subset of $\Q(\I)$.
\end{definition}

\begin{definition}
\label{def:kmeans}
    [Relational $k$-means clustering] Given a database instance $\I$, a join query $\Q$, and a positive integer parameter $k$, the goal is to find a set $\ret\subseteq \Re^d$ of size $|\ret|=k$ such that the $\kmeans_\ret(\Q(\I))$ is minimized. In the discrete relational $k$-means clustering the set $\ret$ should be a subset of $\Q(\I)$.
\end{definition}

By relational $k$-median or $k$-means clustering, we are referring to the geometric versions, unless explicitly stated otherwise for the discrete variants. We propose results for both versions.
We note that the clustering problem on relational data is defined over the unweighted set $\Q(\I)$.
In order to propose efficient algorithms we will need to construct weighted subset of tuples (\emph{coresets}) that are used to derive small approximation factors for the relational clustering problems.

\paragraph{Approximation}
We say that an algorithm is a relative $\beta$-approximation algorithm for the relational $k$-median clustering if it returns a set $\ret$ of size $k$ such that $\kmedian_\ret(\Q(\I))\leq \beta \cdot\kmedian_{\opt(\Q(\I))}\Q(\I)$. Next, we say that an algorithm is an additive $\beta$-approximation algorithm for the relational $k$-median clustering if it returns a set $\ret$ of size $k$ such that $\kmedian_\ret(\Q(\I))\leq \kmedian_{\opt(\Q(\I))}\Q(\I)+\beta$. 
In this paper, we focus on relative approximation algorithms, so when we refer to a $\beta$-approximation algorithm we always mean relative $\beta$-approximation.
Equivalently, we define relative and additive approximation algorithms for the relational $k$-means clustering.

\vspace{-0.5em}
\subsection{Related work}
There is a lot of research on $k$-median and $k$-means clustering in the standard computational setting. For the $k$-median clustering there are several polynomial time algorithms with constant approximation ratio~\cite{li2013approximating, charikar1999constant, arya2001local}.
For the $k$-means clustering, there are also several constant approximation algorithms such as~\cite{kanungo2002local}. In practice, a local search algorithm~\cite{lloyd1982least} is mostly used with a $O(\log k)$ approximation ratio. 
Furthermore, coresets (formal definition in Section~\ref{sec:prelim}) have been used to design efficient clustering algorithms~\cite{har2004coresets, bachem2018scalable, har2005smaller}.
Coresets are also used to propose efficient clustering algorithms in the streaming setting~\cite{braverman2017clustering, guha2003clustering} or the MPC model~\cite{bahmani2012scalable, ene2011fast}.
All these algorithms work in the standard computational setting and it is not clear how to efficiently extend them to relational clustering.

In its most general form, relational clustering is referring to clustering objects connected by links representing persistent relationships between them.
Different variations of relational clustering have been studied over the years in the database and data mining community, for example~\cite{long2010relational, frigui2007clustering, havens2010clustering, anderson2008clustering, breiger1975algorithm, kirsten1998relational, neville2003clustering}. However, papers in this line of work either do not handle clustering on join results or their methods do not have theoretical guarantees.

The discrete relational $k$-means clustering problem has been recently studied in the literature.
Khamis et al.~\cite{khamis2020functional}, gave an efficient implementation of the Lloyd’s $k$-means heuristic in the relational setting, however it is known that the algorithm terminates in a local minimum without any guarantee on the approximation factor.
Relative approximation algorithms are also known for the relational $k$-means clustering problem.
Curtin et al.~\cite{curtin2020rk} construct a weighted set of tuples (\emph{grid-coreset}) such that an approximation algorithm for the weighted $k$-means clustering in the grid-coreset returns an approximation solution to the relational $k$-means clustering. Their algorithm runs in $\O(k^m N^\fhw +\timeMeans_\gamma(k^m))$ time and has a $(\gamma^2+4\gamma\sqrt{\gamma}+4\gamma)$-approximation factor. For some join queries, the approximation factor can be improved to
$(4\gamma+2\sqrt{\gamma}+1)$.
Moseley et al.~\cite{moseley2021relational} designed a relational implementation of the $k$-means++ algorithm~\cite{aggarwal2009adaptive, arthur2007k} to derive a better weighted coreset. For a constant $\eps\in(0,1)$, their algorithm runs in 
$O(k^4N^\fhw\log N +k^2N^\fhw\log^9 N+\timeMeans_\gamma(k\log N))$ expected
time and has a $(320+644(1+\eps)\gamma)$-approximation factor.
No efficient relative approximation algorithm is known for the relational $k$-median problem.

Additive approximation algorithms are also known for relational clustering problems.
Chen et al.~\cite{chen2022coresets}, constructed coresets for empirical risk minimization problems in relational data. Their algorithm is quite general and can support relational clustering, i.e., $k$-median and $k$-means can be formulated as risk minimization problems. They work independently in every relation to compute a good enough coreset and then by the \emph{aggregation tree} algorithm they merge the solutions carefully using properties of the $k$-center clustering. We note that the authors do not design algorithms for relational clustering problems, instead, they compute a coreset such that any approximation algorithm (for $k$-median or $k$-means) on the coreset returns an (additive) approximation of the relational $k$-median or $k$-means problem. Running 
$\kmedianAlg$ on top of their coreset leads to a randomized algorithm that runs in 
$O(N^\fhw+ \timeMedian_\gamma(1))$
time and has an $\eps\cdot\gamma\cdot\diameter(\Q(\I))$ additive approximation term, with high probability, where $\diameter(\Q(\I))$ is the largest Euclidean distance between two tuples in $\Q(\I)$.
For the relational $k$-means clustering, 
their algorithm runs in
$O(N^\fhw+\timeMeans_\gamma(1))$ 
time, and has an $\eps\cdot\gamma\cdot\diameter^2(\Q(\I))$ additive approximation term, with high probability. In all cases we assume that $\eps\in (0,1)$ is a small constant.

Finally, there is a lot of recent work on relational algorithms for learning problems such as, linear regression and factorization~\cite{rendle2013scaling, khamis2018ac, kumar2015learning, schleich2016learning}, SVMs~\cite{abo2021relational, abo2018database, yang2020towards}, Independent Gaussian Mixture models~\cite{cheng2019nonlinear, cheng2021efficient}. In~\cite{schleich2019learning} the authors give a nice survey about learning over relational data.
Generally, in databases there is an interesting line of work solving combinatorial problems over relational data without first computing the join results, such as ranked enumeration~\cite{deep2021ranked, deep2022ranked, tziavelis2020optimal},  quantiles~\cite{tziavelis2023efficient}, direct access~\cite{carmeli2023tractable}, diversity~\cite{merkl2023diversity, arenas2024towards, agarwal2025computing}, and top-$k$~\cite{tziavelis2020optimal}.

\vspace{-0.5em}
\subsection{Our results}
\begin{table*}[t]
\begin{tabular}{|c|c|c|c|c|}
\hline
\textbf{Problem}&\textbf{Method}&\textbf{Approximation}&\textbf{Running time}&\textbf{Type}\\\hline
\multirow{2}{3.9em}{$k$-median}&\textbf{NEW}&$(2+\eps)\gamma$&$\O(k^{2d+2}N^\fhw+\timeMedian_\gamma(k^2))$&D\\\cline{2-5}
&\textbf{NEW}&$(2+\eps)\gamma$&$\O(k^2N^\fhw+k^4+\timeMedian_\gamma(k^2))$&R\\\hline\hline
\multirow{4}{3.9em}{$k$-means}
&\cite{curtin2020rk}&$\gamma^2+4\gamma\sqrt{\gamma}+4\gamma$&$\O(k^m N^\fhw +\timeMeans_\gamma(k^m))$&D\\\cline{2-5}
&\cite{moseley2021relational}&$320+644(1+\eps)\gamma$&$\O(k^4N^\fhw+\timeMeans_\gamma(k))$&R\\\cline{2-5}
&\textbf{NEW}&$(4+\eps)\gamma$&$\O(k^{2d+2}N^\fhw+\timeMeans_\gamma(k^2))$&D\\\cline{2-5}
&\textbf{NEW}&$(4+\eps)\gamma$&$\O(k^2N^
\fhw+k^4+\timeMeans_\gamma(k^2))$&R\\\hline
\end{tabular}
\caption{Comparison of our new algorithms with the state-of-the-art relative approximation algorithms. For our new algorithms we show the approximation for the discrete relational $k$-median and $k$-means clustering problems. For the geometric version of the studied problems the approximation factor of our algorithms is always $(1+\eps)\gamma$.
The running time is shown in data complexity. We assume that $\eps\in(0,1)$ is a small constant. The notation $\O$ is used to hide $\log^{O(1)}N$ factors from the running time. $\timeMedian_\gamma(y)$ (resp. $\timeMeans_\gamma(y)$) is the running time of a known $\gamma$-approximation algorithm over $y$ points for the $k$-median (resp. $k$-means) clustering in the standard computational setting. $\fhw$ is the fractional hypertree width of $\Q$.
The number of attributes in the query $\Q$ is denoted by $d$. The letter R stands for randomized, while D stands for deterministic algorithm.
}
\label{table:results}
\vspace{-1em}
\end{table*}
The main results for the discrete relational clustering in this paper are summarized in Table~\ref{table:results}. In all cases, we assume that $\eps\in(0,1)$ is a small constant decided by the user.

First, we present a deterministic $(1+\eps)\gamma$-approximation algorithm for the (geometric) relational $k$-median clustering that runs in $O(k^{2d+2}N^\fhw\log^{d+2} N + \timeMedian_\gamma(k^2\log N))$ time.
Then, we show a randomized $(1+\eps)\gamma$-approximation algorithm for the (geometric) relational $k$-median clustering that works with high probability and runs in $O(k^2N^\fhw\log N + k^4\log^3 (N)\log^d (k)+\timeMedian_\gamma(k^2\log N))$ time. If $k^2\log (N)<\log^{d}(k)$ the running time can be improved to $O(k^2N^\fhw\log N + k^6\log^4 (N)+\timeMedian_\gamma(k^2\log N))$.
In the discrete case, the approximation factor is $(2+\eps)\gamma$. These are the first known efficient relative approximation algorithms for the relational $k$-median clustering problem.

We extend our methods to show algorithms with the same guarantees for the relational $k$-means clustering. We give a deterministic $(1+\eps)\gamma$-approximation algorithm for the (geometric) relational $k$-means clustering that runs in $O(k^{2d+2}N^\fhw\log^{d+2} N + \timeMeans_\gamma(k^2\log N))$ time. Furthermore, we give a randomized $(1+\eps)\gamma$-approximation algorithm  that runs in $O(k^2N^\fhw\log N + k^4\log^3 (N)\log^d (k)+\timeMeans_\gamma(k^2\log N))$ time. If $k^2\log (N)<\log^{d}(k)$ the running time can be improved to $O(k^2N^\fhw\log N + k^6\log^4 (N)+\timeMeans_\gamma(k^2\log N))$. In the discrete case the approximation factor is $(4+\eps)\gamma$. Our algorithms significantly improve both the approximation factor and the running time of the known algorithms on relational $k$-means clustering.
Due to space limit, in the next sections we focus on the relational $k$-median clustering. We show all the results for the $k$-means clustering in Appendix~\ref{appndx:kmeans}.


\mparagraph{Remark 1} While our algorithms work for both acyclic and cyclic join queries, we first present all our results assuming that $\Q$ is an acyclic join query. In Section~\ref{sec:cycRes} we extend our results for every join query using the generalized hypetree decomposition~\cite{gottlob2014treewidth}.
From now on we consider that $\Q$ is an acyclic join query, so $\fhw=1$.

\mparagraph{Remark 2} For the relational $k$-means clustering, the algorithm in~\cite{moseley2021relational} is generally faster than the other algorithms for $k=O(1)$, by $\log N$ factors. 

\section{Preliminaries}
\label{sec:prelim}
\subsection{Coreset}
We give the definition of coresets for both the relational $k$-median and $k$-means clustering problems. Our algorithms construct small enough coresets to approximate the cost of the relational clustering.

A weighted set $\coreset\subseteq \Re^d$ is an $\eps$-coreset for the relational $k$-median clustering problem on $\Q(\I)$ if for any set of $k$ centers $Y\subset \Re^d$,
\begin{equation}
\label{def:coreset-kmedian}
(1-\eps)\kmedian_Y(\Q(\I))\leq \kmedian_{Y}(\coreset)\leq (1+\eps)\kmedian_Y(\Q(\I)).    
\end{equation}

Similarly, a weighted set $\coreset$ is an $\eps$-coreset for the relational $k$-means clustering problem, if for any set of $k$ centers $Y\subset \Re^d$,
\begin{equation}
\label{def:coreset-kmeans}
(1-\eps)\kmeans_Y(\Q(\I))\leq \kmeans_{Y}(\coreset)\leq (1+\eps)\kmeans_Y(\Q(\I)).    
\end{equation}

\subsection{Data structures for aggregation queries}
In this subsection, we show how we can combine known results in database theory to answer aggregation queries in linear time with respect to the size of the database.

Let $\mathcal{R}$ be an axis-parallel hyper-rectangle in $\Re^d$.
The goal is i) count the number of tuples $|\Q(\I)\cap \mathcal{R}|$, and ii) sample uniformly at random from $\Q(\I)\cap \mathcal{R}$.
The axis-parallel hyper-rectangle $\mathcal{R}$ is defined as the product of $d$ intervals over the attributes, i.e., $\mathcal{R}=\{I_1\times \ldots\times I_d)$, where $I_i=[a_i,b_i]$ for $a_i,b_i\in \Re$.
Hence, $\mathcal{R}$ defines a set of $d$ linear inequalities over the attributes, i.e., a tuple $t$ lies in $\mathcal{R}$ if and only if $a_j\leq \pi_{A_j}(t)\leq b_j$ for every $A_j\in\allattr$.
Let $p$ be a tuple in a relation $R_i$.
If $a_j\leq \pi_{A_j}(p)\leq b_j$ for every $A_j\in \allattr_i$, then we keep $p$ in $R_i$. Otherwise, we remove it. The set of surviving tuples is exactly the set of tuples that might lead to join results in $\mathcal{R}$. Let $\I'\subseteq \I$ be the new database instance such that $\Q(\I')=\Q(\I)\cap\mathcal{R}$. The set $\I'$ is found in $O(N)$ time.
Using Yannakakis algorithm~\cite{yannakakis1981algorithms} we can count $|\Q(\I')|$ in $O(N\log N)$ time and using~\cite{zhao2018random} we can sample $z$ tuples from $\Q(\I')$ in $O((N+z)\log N)$ time. Using hashing, we can improve the running time of the algorithms to $O(N)$ and $O(N+z\log N)$, respectively.
\vspace{-0.5em}
\begin{lemma}
\label{lem:Rects}
    Let $\mathcal{R}$ be a rectangle in $\Re^d$. There exists an algorithm $\mathsf{CountRect}(\Q, \I,\mathcal{R})$ to count $|\Q(\I)\cap \mathcal{R}|$ in $O(N\log N)$ time or $O(N)$ time with high probability. Furtheremore, there exists an algorithm $\mathsf{SampleRect}(\Q, \I,\mathcal{R}, z)$ to sample $z$ samples from $\Q(\I)\cap \mathcal{R}$ in $O((N+z)\log N)$  time or $O(N+z\log N)$ time with high probability.
\end{lemma}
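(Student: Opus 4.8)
The plan is to reduce both tasks to the corresponding aggregation problems (counting and uniform sampling) over an \emph{unrestricted} acyclic join query, for which black-box database algorithms already exist, and to show that the rectangle restriction $\mathcal{R}$ can be enforced by a single linear-time pre-filtering of the input relations.

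First I would make the geometric constraint local. Since $\mathcal{R}=I_1\times\cdots\times I_d$ with $I_j=[a_j,b_j]$, a tuple $t\in\Re^d$ lies in $\mathcal{R}$ if and only if $a_j\le\pi_{A_j}(t)\le b_j$ for every attribute $A_j\in\allattr$. Because $\Q$ is a full join query, every attribute appears in at least one relation, and a join result $t$ is determined by consistent projections $\pi_{\allattr_i}(t)\in R_i$. Hence I would define, for each relation $R_i$, the filtered relation $R_i'$ obtained by deleting every tuple $p\in R_i$ that violates $a_j\le\pi_{A_j}(p)\le b_j$ for some $A_j\in\allattr_i$, and let $\I'$ be the instance with relations $R_1',\dots,R_m'$. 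Scanning each of the $mN$ input tuples once and testing its $O(d)$ coordinates against the corresponding intervals produces $\I'$ in $O(N)$ time, since $m$ and $d$ are constants.

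The key correctness step, and the part I would argue most carefully, is the identity $\Q(\I')=\Q(\I)\cap\mathcal{R}$. For the inclusion $\supseteq$, any $t\in\Q(\I)\cap\mathcal{R}$ satisfies all interval constraints, so each projection $\pi_{\allattr_i}(t)$ survives the filtering and thus $t\in\Q(\I')$. For $\subseteq$, any $t\in\Q(\I')$ has every coordinate within its interval, because each attribute $A_j$ belongs to some surviving relation that enforced the constraint on $A_j$; hence $t\in\mathcal{R}$, and since $\I'\subseteq\I$ we also have $t\in\Q(\I)$. This reduces counting $|\Q(\I)\cap\mathcal{R}|$ and sampling from $\Q(\I)\cap\mathcal{R}$ to counting and sampling over $\Q(\I')$.

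Finally I would invoke the cited subroutines on the acyclic instance $\I'$. Counting $|\Q(\I')|$ follows from Yannakakis' algorithm~\cite{yannakakis1981algorithms} in $O(N\log N)$ time, and drawing $z$ uniform samples follows from~\cite{zhao2018random} in $O((N+z)\log N)$ time; adding the $O(N)$ filtering cost leaves the stated bounds unchanged. For the high-probability variants, I would replace the balanced-search-tree lookups underlying these algorithms with hashing, which removes the per-operation $O(\log N)$ factor and yields $O(N)$ counting and $O(N+z\log N)$ sampling with high probability. The only remaining subtlety is bookkeeping: confirming that the filtering preserves acyclicity (it does, since it only deletes tuples and leaves the join tree and $\fhw=1$ intact), so that the black-box guarantees apply verbatim to $\I'$.
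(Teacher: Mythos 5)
Your proposal is correct and follows essentially the same route as the paper: filter each relation $R_i$ by the interval constraints on its own attributes to obtain $\I'$ with $\Q(\I')=\Q(\I)\cap\mathcal{R}$ in $O(N)$ time, then invoke Yannakakis for counting and the sampling algorithm of Zhao et al., using hashing for the high-probability bounds. Your explicit two-way argument for the identity $\Q(\I')=\Q(\I)\cap\mathcal{R}$ and the remark that filtering preserves acyclicity are slightly more detailed than the paper's one-line justification, but the substance is identical.
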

Let $B\subseteq \allattr$ be a subset of the attributes. The result in Lemma~\ref{lem:Rects} can also be used (straightforwardly) to compute $|\widebar{\pi}_{B}(\Q(\I))\cap \mathcal{R}|$ or sample $z$ samples from $\widebar{\pi}_{B}(\Q(\I))\cap \mathcal{R}$ in the same running time, since $|\widebar{\pi}_{B}(\Q(\I))\cap \mathcal{R}|=|\Q(\I)\cap \mathcal{R}|$.

\subsection{High level ideas}
Before we start with the technical sections of the paper, we give high level ideas of our methods.

In Section~\ref{sec:coresets}, we assume that a set of centers $\kapprox$ for the relational $k$-median clustering is given such that $\kmedian_{\kapprox}(\Q(\I))$ is within a constant factor from the optimum $k$-median error with $|\kapprox|>k$.
Using $\kapprox$, we construct a coreset $\coreset$ for the relational $k$-median problem.
Then we run $\kmedianAlg_\gamma$ (or $\DkmeansAlg_\gamma$ for the discrete version) on $\coreset$ to obtain the final solution $\ret$. While all previous papers on relational clustering~\cite{chen2022coresets, curtin2020rk, moseley2021relational} construct weighted coresets, the coresets that were used are sub-optimal. Instead, we use and modify the coreset construction from~\cite{har2004coresets}, which is a more optimized coreset for the $k$-median (and $k$-means) clustering under the Euclidean metric, in the standard computational setting. Of course, using such a coreset comes at a cost.
First, it is not clear how to construct a set $\kapprox$ with the desired properties on relational data, efficiently. In Section~\ref{sec:kmedianAlg} we show how to construct $\kapprox$ designing a hierarchical method over the attributes $\allattr$.
Second, it is not straightforward to construct the coreset in~\cite{har2004coresets} on relational data, given $\kapprox$. In fact, the coreset construction in~\cite{har2004coresets} cannot be applied (efficiently) in the relational setting. Hence, inspired by ~\cite{har2004coresets}, we design a novel small coreset based on $\kapprox$ and show that it can be applied to relational data.

More specifically, for any center $x_i\in\kapprox$, in \cite{har2004coresets}, the authors first compute all points in the dataset that have $x_i$ as the closest center in $\kapprox$. Let $P_i$ be this set of points. Then, they construct a grid around $x_i$, and from every cell $\square$ in the grid, they add one representative point from $P_i\cap \square$ with weight $|P_i\cap \square|$.
Unfortunately, in our setting we cannot compute $P_i$ and/or $|P_i\cap \square|$ efficiently.

We resolve the two issues as follows.
First, we construct a grid around $x_i$, but instead of computing all tuples that have $x_i$ as the closest center and process all cells, we check whether a cell $\square$ is close enough to the center $x_i$ (Equation~\eqref{eq:cond} in the next section). If not then we skip the cell. If yes, then we take a representative tuple from $\square$ and we set its weight to be the number of tuples in $\square$ that do not lie in a different cell that has been already processed by our algorithm. Of course, the weight of a representative tuple now is not the same as in~\cite{har2004coresets} and we might count tuples even outside of $P_i$, however with a careful analysis we make sure that the overall error is still bounded.
The second problem though, still remains; we need to count the number of tuples in $\square$ excluding the cells we have already visited from previous centers in $\kapprox$ (notice that the grid cells of two different centers in $\kapprox$ might intersect). We propose two methods to achieve it. In Section~\ref{subsec:detmed} we give a deterministic method that constructs the arrangement of the complement of the visited cells and we use Lemma~\ref{lem:Rects} to count the number of tuples, exactly. However, this algorithm requires $\Omega(|\kapprox|^dN)$ time. In Section~\ref{subsec:medrand} we give a more involved and faster randomized approximation algorithm to count the number of tuples based on sampling.



\section{From many centers to exactly $k$ centers}
\label{sec:coresets}
We describe an algorithm that constructs a coreset for the relational $k$-median clustering over the (multi-set) projection of $\Q(\I)$ on an arbitrary subset of attributes.
Let $\allattr_u\subseteq \allattr$ be a subset of the attributes. Let $\Q_u(\I)$ be the \emph{multi-set} $\Q_u(\I):=\widebar{\pi}_{\allattr_u}(\Q(\I))$. 
Notice that $|\Q_u(\I)|=|\Q(\I)|$ and its size can be computed in $O(N)$ time using Yannakakis algorithm. Let $n=|\Q_u(\I)|$ and $d_u=|\allattr_u|$.
Assume that $\kapprox$ is a set of points in $\Re^{d_u}$ such that $\kmedian_{\kapprox}(\Q_u(\I))\leq \alpha \cdot\kmedian_{\opt(\Q_u(\I))}(\Q_u(\I))$, where $\alpha>1$ is a constant.
Notice that $$\kmedian_{\kapprox}(\Q_u(\I))=\sum_{t\in \Q(\I)}\dist(\pi_{\allattr_u}(t), \kapprox).$$
We also assume that $r$ is a real number such that $\kmedian_{\kapprox}(\Q_u(\I))\leq r\leq \alpha \cdot\kmedian_{\opt(\Q_u(\I))}(\Q_u(\I))$.
In fact, it follows from the following sections that we can also assume $\frac{r}{(1+\eps)\sqrt{2}}\leq \kmedian_{\kapprox}(\Q_u(\I))$.
Note that we do not assume anything about the size of $\kapprox$. In the next section, we show that we can always consider $|\kapprox|=O(k^2)$. 

In this section we propose two algorithms that take as input $\kapprox$ and $r$ and return a set $\ret\subset \Re^{d_u}$ of cardinality $|\ret|=k$ such that $\kmedian_\ret(\Q_u(\I))\leq (1+\eps)\gamma \kmedian_{\opt(\Q_u(\I))}(\Q_u(\I))$, i.e., $\ret$ is a $(1+\eps)\gamma$-approximation for the $k$-median problem in $\Q_u(\I)$. They also return a number $r_u$ such that $\kmedian_\ret(\Q_u(\I))\leq r_u\leq (1+\eps)\gamma \kmedian_{\opt(\Q_u(\I))}(\Q_u(\I))$. We note that given a set of centers $S$, we cannot compute the error $\kmedian_S(\Q_u(\I))$ efficiently, hence $r_u$ is needed to estimate the error of the clustering.
We also note that if $\allattr_u=\allattr$, then the returned set $\ret$ is an $(1+\eps)\gamma$ approximation solution for the relational $k$-median clustering problem on $\Q(\I)$. 
For the discrete relational $k$-median clustering, we return $\ret\subseteq \Q(\I)$ and $r_u$ such that, $\kmedian_\ret(\Q_u(\I))\leq r_u\leq (2+\eps)\gamma \kmedian_{\opt(\Q_u(\I))}(\Q_u(\I))$.

The first algorithm is a slow deterministic algorithm that runs in $\Omega(|\kapprox|^{d_u+1}N)$ time. The second algorithm is a faster randomized algorithm that runs in time roughly $O(|\kapprox|N)$ time.
We mostly focus on the geometric version of the relational $k$-median clustering, however, we always highlight the differences with the discrete version.

\subsection{Slow deterministic algorithm}
\label{subsec:detmed}

\begin{algorithm}[t]
\caption{{\sc RelClusteringSlow}$(\Q, \I, \allattr_u, \kapprox, \alpha, r, \eps)$}
\label{alg:Determ}
$\eps'=\eps/4$\;
$n=|\Q_u(\I)|=|\Q(\I))|\quad\quad$ (using Lemma~\ref{lem:Rects})\;
$\Phi=\frac{r}{\alpha n}$;
$\quad G=\emptyset$; $\quad \coreset=\emptyset$\;
\ForEach{$x_i\in \kapprox$}{
    \ForEach{$j=0,1,\ldots, 2\log(\alpha n)$}{
        $Q_{i,j}\leftarrow$ axis parallel square with length $2^j\Phi$ centered at $x_i$\;
        $V_{i,j}=Q_{i,j}\setminus Q_{i,j-1}$;
        $\quad \widebar{V}_{i,j}\leftarrow$ grid of side length $\eps'2^j\Phi/(10\alpha d_u)$ in $V_{i,j}$\;
    }
    $\widebar{V}_i=\bigcup_{j}\widebar{V}_{i,j}$\;
    \ForEach{$\square\in \widebar{V}_i$}{
        \If{$\dist(x_i,\square)\leq \dist(\kapprox,\square)+\diameter(\square)$}{
            $\mathsf{Arr}(G)\leftarrow$ arrangement of $G$;
$\quad \widebar{\mathsf{Arr}}(G)\leftarrow$ complement of $\mathsf{Arr}(G)$\;
$\mathsf{Arr}'(G)\leftarrow$ partition of $\widebar{\mathsf{Arr}}(G)$ into hyper-rectangles\;
$K_\square=0$\;
\ForEach{$\mathcal{R}\in \mathsf{Arr}'(G)$}{
    $\square_{\mathcal{R}}=\square\cap \mathcal{R}$;
    $\quad K_\square=K_\square+\textsf{CountRect}(\Q,\I,\square_{\mathcal{R}})\quad\quad$ (using Lemma~\ref{lem:Rects})\;
}
            \If{$K_\square>0$}{
                $s_\square\leftarrow$ arbitrary tuple in $\Q_u(\I)\cap (\square\setminus G)$\;
                $w(s_\square)=K_\square$\;
                $\coreset\leftarrow\coreset\cup\{s_\square\}$\;
            }
            $G\leftarrow G\cup\{\square\}$\;
        }
    }
    }
    $\ret=\kmedianAlg_\gamma(\coreset)\quad\quad$ (or $\ret=\DkmedianAlg_\gamma(\coreset)$ for the discrete version)\;
    $r_u=\frac{1}{1-\eps'}\kmedian_\ret(\coreset)$\;
\Return $(\ret, r_u)$\;
\end{algorithm}

For the slow deterministic algorithm we construct a hierarchical grid around every center $x_i\in \kapprox$. Then, for every cell that is close enough to $x_i$, we compute the number of join results in $\Q_u(\I)$ inside the cell that do not lie inside another cell previously processed by our algorithm. We count the number of join results using Lemma~\ref{lem:Rects}.

\paragraph{Algorithm}
The pseudocode of this algorithm is shown in Algorithm~\ref{alg:Determ}.
We first set $\eps'=\eps/4$ and define $\Phi=\frac{r}{\alpha n}$ as a lower bound estimate of the average radius $\frac{\kmedian_{\opt(\Q_u(\I))}(\Q_u(\I))}{n}$.
For every point $x_i\in \kapprox$ we construct an exponential grid around $x_i$. Let $Q_{i,j}$ be an axis parallel square with side length $\Phi\cdot 2^j$ centered at $x_i$, for $j=0,1,\ldots, 2\log(\alpha n)$. Let $V_{i,0}=Q_{i,0}$ and let
$V_{i,j}=Q_{i,j}\setminus Q_{i,j-1}$. We partition $V_{i,j}$ into a grid $\widebar{V}_{i,j}$ of side length $\eps'\Phi 2^j/(10\alpha d_u)$. Let $\widebar{V}_i=\bigcup_j \widebar{V}_{i,j}$.
\begin{figure}[ht]
    \centering
    \begin{minipage}{0.49\textwidth}
        \centering
        \begin{tikzpicture}[scale=1.5]
    \draw[step=0.1cm,black, thick] (0,0) grid (0.4,0.4);

    \draw[step=0.2cm,black,thick] (-0.4,-0.4) grid (0.8,0.8);

     \draw[step=0.4cm,black,thick] (-1.2,-1.2) grid (1.6,1.6);

    \draw[thick] (-1.2,-1.2) rectangle (1.6,1.6);
    
    \fill[red] (0.2,0.2) circle (1pt);
\end{tikzpicture}
\caption{The grid construction $\bar{V}_i$ around the red point $x_i\in \kapprox$.}
\label{fig:grid}
    \end{minipage}\hfill
    \begin{minipage}{0.49\textwidth}
        \centering
        \begin{tikzpicture}[scale=0.6]
    \draw[thick] (5,5) rectangle (7,7);
    \node[right] at (7,6.8) {$\square_1$};
    \draw[thick] (11.5,1.5) rectangle (13,3);
    \node[right] at (13,2.8) {$\square_2$};
    
    \fill[black] (10,2) circle (4pt);
    \node[above] at (10.2,2.1) {$x_i$};
    \fill[black] (5.5,4) circle (4pt);
    \node[above] at (6.1,4) {$x_{i+1}$};
    \fill[black] (12.5,4) circle (4pt);
    \node[above] at (12.8,4) {$x_{i+2}$};
    \fill[black] (9.5,6) circle (4pt);
    \node[above] at (9.8,6) {$x_{i+3}$};

    \draw[red, dashed, thick] (10,2) -- (7,5);
    \draw[red, dashed, thick] (5.5,4) -- (5.5,5);
    \draw[red, dashed, thick] (5,5) -- (7,7);

    \draw[blue, dashed, thick] (10,2) -- (11.5,2);
     \draw[blue, dashed, thick] (12.5,4) -- (12.5,3);
     \draw[blue, dashed, thick] (11.5,1.5) -- (13,3);
\end{tikzpicture}
\caption{Let $\square_1, \square_2\in \bar{V}_i$. It holds that $\dist(x_i,\square_1)>\dist(x_{i+1},\square_1)+\diameter(\square_1)$ so $\square_1$ is not processed by the algorithm. $x_{i+2}$ is the closest center to $\square_2$, i.e., $\dist(\kapprox,\square_2)=\dist(x_{i+2},\square_2)$ and it holds $\dist(x_i,\square_2)<\dist(x_{i+2},\square_2)+\diameter(\square_2)$, so $\square_2$ is processed by the algorithm. The red (blue) dashed segments represent the distances of $x_i$ to $\square_1$ ($\square_2$), $x_{i+1}$ ($x_{i+2}$) to $\square_1$ ($\square_2$), and the diameter of $\square_1$ ($\square_2$).}
\label{fig:condition}
    \end{minipage}
\end{figure}

An example of the grid construction can be seen in Figure~\ref{fig:grid}.
The construction so far is similar to the exponential grid in~\cite{har2004coresets}. However, from now on the algorithm and the analysis is different.
Let $G=\emptyset$ be an empty set of grid cells. Let also $\coreset=\emptyset$ be an empty point set in $\Re^{d_u}$ and $w$ be a weight function that we are going to define on $\coreset$.

For every $x_i\in X$, and for every cell $\square\in \widebar{V}_i$ we repeat the following steps. Let $\diameter(\square)$ be the diameter of $\square$. If
\vspace{-1em}
\begin{equation}
    \label{eq:cond}
    \dist(x_i,\square)\leq \dist(X,\square)+\diameter(\square),
\end{equation}
(an example of applying Equation~\eqref{eq:cond} is shown in Figure~\ref{fig:condition}) then we proceed as follows. Let $G_\square$ be the set $G$ just before the algorithm processes the cell $\square$.
The goal is to identify $|\Q_u(\I)\cap (\square \setminus G_\square)|$ and if $|\Q_u(\I)\cap (\square \setminus G_\square)|>0$ then add a representative point $s_{\square}\in \Q_u(\I)\cap (\square \setminus G_\square)$ in $\coreset$ with weight $w(s_\square)=|\Q_u(\I)\cap (\square \setminus G_\square)|$.
Next, we construct the arrangement of $G_\square$.
The arrangement~\cite{agarwal2000arrangements, halperin2017arrangements} of $G_\square$, denoted $\textsf{Arr}(G_\square)$, is a partitioning of $G_\square$ into
rectangular contiguous regions, such that for every region $\mathsf{reg}$ in the arrangement, $\mathsf{reg}$ lies in the same subset of $G_\square$.
Let $\widebar{\mathsf{Arr}}(G_\square)$ be the complement of $\textsf{Arr}(G_\square)$, and let $\textsf{Arr}'(G_\square)$ be a partition of $\widebar{\mathsf{Arr}}(G_\square)$ into hyper-rectangles.
For each rectangle $\mathcal{R} \in \textsf{Arr}'(G_\square)$, we compute $\square_{\mathcal{R}}=\square\cap \mathcal{R}$, which is also a hyper-rectangle in $\Re^{d_u}$. Using the $\textsf{CountRect}(\cdot)$ procedure form Lemma~\ref{lem:Rects} we compute $K_{\square}=\sum_{\mathcal{R}\in\textsf{Arr}'(G)}|\square_{\mathcal{R}}\cap \Q_u(\I)|$. If $K_{\square}>0$ we also get an arbitrary point $s_{\square}$ from $\bigcup_{\mathcal{R}\in \textsf{Arr}'(G)} \square_{\mathcal{R}}\cap \Q_u(\I)$. We add $s_\square$ in $\coreset$ and we set its weight $w(s_\square)=K_{\square}$. We say that all tuples in $\Q_u(\I)\cap (\square \setminus G_\square)$ are \emph{assigned} to $s_\square$, and we add $\square$ in $G$.
An example of the arrangement can be seen in Figure~\ref{fig:arrangement}.
\begin{figure}[ht]
\begin{minipage}{0.49\textwidth}
    \centering
    \begin{tikzpicture}[scale=0.25]
    \draw[thick] (0,0) rectangle (10,10);
    \draw[very thick,blue] (-0.5,-2.5) rectangle (6,4);
    \draw[very thick,blue] (4.5,-0.5) rectangle (7,2);
     \draw[very thick,blue] (5,7) rectangle (10.5,12.5);
     
      \draw[red, dashed,  thick] (4.9,4.1) -- (4.9,9.9);
      \draw[red, dashed, thick] (0.1,4.1) -- (0.1,9.9);
      \draw[red, dashed, thick] (0.1,4.1) -- (9.9,4.1);
      \draw[red, dashed, thick] (0.1,9.9) -- (4.9,9.9);
      \draw[red, dashed, thick] (0.1,6.9) -- (9.9,6.9);
      \draw[red, dashed, thick] (7.1,0.1) -- (7.1,6.9);
      \draw[red, dashed, thick] (7.1,0.1) -- (9.9,0.1);
       \draw[red, dashed, thick] (6.1,2.1) -- (9.9,2.1);
       \draw[red, dashed, thick] (9.9,6.9) -- (9.9,0.1);
       \draw[red, dashed, thick] (6.1,2.1) -- (6.1,6.9);

    \node[right] at (3.2,11) {{\color{blue}$G$}};
\end{tikzpicture}
\caption{The black square is a cell $\square\in\bar{V}_i$. The blue squares define the set of cells in $G_\square$ (cells already processed by the algorithm) that intersect $\square$. The red dashed segments show the arrangement of the complement of $G_{\square}$ in $\square$, i.e., the red dashed segments show $\square\cap \textsf{Arr}'(G_\square)$.}
\label{fig:arrangement}
    \end{minipage}\hfill
    \begin{minipage}{0.49\textwidth}
        \centering
        \begin{tikzpicture}[scale=0.25]
    \draw[thick] (0,0) rectangle (10,10);
    \draw[very thick,blue] (-0.5,-2.5) rectangle (6,4);
    \draw[very thick,blue] (4.5,-0.5) rectangle (7,2);
     \draw[very thick,blue] (5,7) rectangle (10.5,12.5);

    \fill[blue] (1,3) circle (4pt);
    \fill[blue] (2.5,3.7) circle (4pt);
    \fill[blue] (4.8,1.2) circle (4pt);
    \fill[blue] (6.2,8.5) circle (4pt);

     \fill[black] (0.5,7.5) circle (4pt);
     \fill[black] (1.3,9) circle (4pt);
     \fill[black] (3.1,5.2) circle (4pt);
     \fill[black] (3.8,8.2) circle (4pt);
     \fill[black] (6.5,5.8) circle (4pt);
     \fill[black] (8,2.2) circle (4pt);
     \fill[black] (9,6.5) circle (4pt);
     \fill[black] (7,6) circle (4pt);
     \node[right] at (3.2,11) {{\color{blue}$B$}};
\end{tikzpicture}
\caption{The black square is a cell $\square\in\bar{V}_i$. The blue squares define the set of heavy cells in $B_\square$ that intersect $\square$. The points represent the set of samples $H_{\square}$. Blue points are the samples in $B_\square$ while black points are the samples in $\square\setminus B_\square$. Hence, $g_\square=8$ and $M=|H_\square|=12$. If $\tau=0.3$ then $g_\square/M\geq 2\cdot \tau$, and a black point is selected as $s_\square$ with weight $\frac{8}{12}\cdot\frac{n_\square}{1-\eps'}$.}
\label{fig:sample}
    \end{minipage}
\end{figure}

On the other hand, if the condition~\eqref{eq:cond} is not satisfied, then we skip $\square$ and we continue with the next cell in the exponential grid.


In the end, after repeating the algorithm for each $x_i\in \kapprox$, we get a weighted set $\coreset$. We run the standard algorithm for the weighted $k$-median problem $\kmedianAlg_\gamma$ (or $\DkmedianAlg_\gamma$ for the discrete $k$-median problem) on $\coreset$ to get a set of $k$ centers $\ret$, and we return $\ret$ as the answer. Furthermore, we return $r_u=\frac{1}{1-\eps'}\kmedian_{\ret}(\coreset)$.

\paragraph{Correctness}
We first show that every tuple in $\Q_u(\I)$ is assigned to a point in $\coreset$. Then, we show that $\coreset$ is an $\eps'$-coreset for $\Q_u(\I)$. Next, we show that $\ret$ is a good approximation of the optimum $k$-median solution and $r_u$ is a good approximation of $\kmedian_\ret(\Q_u(\I))$. Finally, we show that $\kmedian_\ret(\Q_u(\I))\leq r_u\leq (1+\eps)\gamma \kmedian_{\opt(\Q_u(\I))}(\Q_u(\I))$.
All missing proofs can be found in Appendix~\ref{appndx:slowAlg}.

\begin{lemma}
\label{lem:tech1}
    Every tuple $t\in \Q_u(\I)$ is assigned to a point in $\coreset$. Furthermore, the number of tuples in $\Q_u(\I)$ that are assigned to a point $s\in \coreset$ is $w(s)$.
\end{lemma}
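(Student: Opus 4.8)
The plan is to prove the two assertions separately: that the assignment is \emph{total} (every tuple is assigned) and \emph{well-defined} (each tuple is assigned to exactly one coreset point), and then that $w(s_\square)$ equals the number of tuples assigned to $s_\square$. The weight claim will be almost immediate once I verify that the arrangement-based count computes the intended quantity, so most of the work goes into the assignment claim. The reduction I would use is: for a fixed tuple $t$, exhibit at least one grid cell that contains $t$ \emph{and} is processed by the algorithm; then a minimality argument over processed cells pins down the unique point $t$ is assigned to.

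First I would fix $t\in\Q_u(\I)$ and let $x_i\in\kapprox$ be a nearest center, so $\dist(t,x_i)=\dist(t,\kapprox)$. Since $\kmedian_{\kapprox}(\Q_u(\I))=\sum_{t'\in\Q_u(\I)}\dist(t',\kapprox)\le r$ and every summand is nonnegative, $\dist(t,x_i)\le r$, hence $\|t-x_i\|_\infty\le r$. The outermost square $Q_{i,2\log(\alpha n)}$ centered at $x_i$ has side length $2^{2\log(\alpha n)}\Phi=\alpha n r$, so it contains the $L_\infty$-ball of radius $\tfrac{\alpha n r}{2}\ge r$ around $x_i$ (using $\alpha n\ge 2$); therefore $t$ lies in some cell $\square_t$ of $\widebar{V}_i$, where I fix a half-open tie-breaking convention so that $t$ belongs to exactly one cell of each grid.

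The crux is to show $\square_t$ satisfies condition~\eqref{eq:cond}, so that the algorithm truly processes it. Since $t\in\square_t$, we have $\dist(x_i,\square_t)\le\dist(x_i,t)$. For every $p\in\square_t$, the triangle inequality gives $\dist(\kapprox,p)\ge\dist(\kapprox,t)-\dist(t,p)\ge\dist(t,x_i)-\diameter(\square_t)$, so $\dist(\kapprox,\square_t)\ge\dist(t,x_i)-\diameter(\square_t)$. Combining these yields $\dist(x_i,\square_t)\le\dist(x_i,t)=\dist(t,x_i)\le\dist(\kapprox,\square_t)+\diameter(\square_t)$, which is exactly~\eqref{eq:cond}; hence $\square_t$ is processed. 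Now, among all processed cells containing $t$ (a nonempty family, as it contains $\square_t$), let $\square'$ be the first one processed. At that moment $G_{\square'}$ holds only strictly-earlier cells, none containing $t$ by minimality, so $t\in\square'\setminus G_{\square'}$ and $t$ is assigned to $s_{\square'}$. Conversely, for any later processed cell $\square''\ni t$ we have $\square'\in G_{\square''}$, so $t\in\square'\subseteq G_{\square''}$ and $t\notin\square''\setminus G_{\square''}$; thus $t$ is assigned to no other point. This gives existence and uniqueness. For the weight claim I would note that $\mathsf{Arr}'(G_\square)$ partitions the complement $\Re^{d_u}\setminus G_\square$ into interior-disjoint hyper-rectangles, so the sets $\{\square\cap\mathcal R\}_{\mathcal R\in\mathsf{Arr}'(G_\square)}$ are disjoint with union $\square\setminus G_\square$; summing the exact (multiplicity-respecting) counts from \textsf{CountRect} over them gives $K_\square=|\Q_u(\I)\cap(\square\setminus G_\square)|=w(s_\square)$, which is precisely the number of tuples assigned to $s_\square$.

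The main obstacle is the geometric step showing $\square_t$ satisfies~\eqref{eq:cond}; everything else is bookkeeping on $G$ and the exactness of the arrangement count. A secondary subtlety I would flag is the boundary/tie-breaking convention: it must be fixed consistently so that ``the cell containing $t$'' is well-defined and so that adjacent cells do not double-count shared boundary tuples, ensuring the grid partitions $\Q_u(\I)$ exactly.
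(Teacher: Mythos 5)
Your proposal is correct and follows essentially the same route as the paper: you locate $t$ in a cell of the exponential grid around its nearest center $x_i$ (using $\dist(t,x_i)\le r=\alpha n\Phi$) and verify condition~\eqref{eq:cond} for that cell via the chain $\dist(x_i,\square_t)\le\dist(t,x_i)\le\dist(\kapprox,\square_t)+\diameter(\square_t)$, which is the paper's argument with the triangle inequality routed through an arbitrary point of the cell instead of the minimizing center. The extra bookkeeping you supply (uniqueness via the first processed cell containing $t$, and the arrangement count equaling $|\Q_u(\I)\cap(\square\setminus G_\square)|$) is just an expansion of what the paper dispatches as ``by definition.''
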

\begin{proof}
    Let $x_i\in \kapprox$ be the center that is closest to $t$. By definition, there will be a cell $\square$ defined by the exponential grid around $x_i$ that contains $t$, since $\phi(t, x_i) \leq \alpha n \Phi$. We show that for the cell $\square$ the condition~\eqref{eq:cond} holds. Let $x_j$ be the center such that $\dist(x_j,\square)=\dist(\kapprox, \square)$. We have $\dist(x_i,\square)\leq \dist(x_i,t)\leq \dist(x_j,t)\leq \dist(\kapprox,\square)+\diameter(\square)$. Hence, $t$ is assigned in $s_{\square}$.
    The second part of the lemma holds by definition.
\end{proof}
For a tuple $t\in\Q_u(\I)$, let $\square_t$ be the cell from the exponential grid defined by the algorithm such that 
$\square_t$ is the first cell processed by the algorithm that contains $t$, i.e.,
$t\in\Q_u(\I)\cap (\square_t\setminus G_{\square_t})$.
Next, we denote the assignment of each tuple $t\in \Q_u(\I)$ to a point in $\coreset$ by $\sigma$, i.e., $\sigma(t):=s_{\square_t}\in \coreset$.
Let $i(t)$ be the index such that $\square_t\in \widebar{V}_{i(t)}$.
By definition, notice that $x_{i(t)}\in X$ is the center visited by our algorithm at the moment that $t$ is assigned to $\sigma(t)$.

\begin{lemma}
    \label{lem:coreset1}
    $\coreset$ is an $\eps'$-coreset for $\Q_u(\I)$.
\end{lemma}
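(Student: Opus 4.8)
The plan is to establish the two-sided coreset inequality for an arbitrary set $Y$ of $k$ centers by controlling the total distance each tuple is ``moved'' to its representative. First I would rewrite the coreset cost using the assignment $\sigma$: by Lemma~\ref{lem:tech1} every tuple $t\in\Q_u(\I)$ is assigned to exactly one point $\sigma(t)=s_{\square_t}\in\coreset$, and $w(s_\square)$ counts precisely the tuples assigned to $s_\square$, so $\kmedian_Y(\coreset)=\sum_{s\in\coreset}w(s)\dist(s,Y)=\sum_{t\in\Q_u(\I)}\dist(\sigma(t),Y)$. Applying the triangle inequality $|\dist(\sigma(t),Y)-\dist(t,Y)|\le\dist(t,\sigma(t))$ termwise then yields $|\kmedian_Y(\coreset)-\kmedian_Y(\Q_u(\I))|\le\sum_{t}\dist(t,\sigma(t))$. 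Since $t$ and $\sigma(t)$ lie in the same cell $\square_t$, we have $\dist(t,\sigma(t))\le\diameter(\square_t)$, so the whole proof reduces to bounding $\sum_t\diameter(\square_t)$ by $\eps'\kmedian_{\opt(\Q_u(\I))}(\Q_u(\I))$.

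The heart of the argument is to bound $\diameter(\square_t)$ by a small multiple of $\dist(t,\kapprox)$. Suppose $\square_t\in\widebar{V}_{i(t),j}$. For $j\ge1$ the cell lies in the annulus $V_{i(t),j}=Q_{i(t),j}\setminus Q_{i(t),j-1}$, so $t$ lies outside the $L_\infty$-ball $Q_{i(t),j-1}$ of radius $2^{j-2}\Phi$, giving $\dist(t,x_{i(t)})\ge 2^{j-2}\Phi$, i.e. $2^j\Phi\le4\dist(t,x_{i(t)})$; combined with the cell side length $\eps'2^j\Phi/(10\alpha d_u)$ this gives $\diameter(\square_t)\le\frac{2\eps'}{5\alpha\sqrt{d_u}}\dist(t,x_{i(t)})$. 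The main obstacle is replacing $\dist(t,x_{i(t)})$, the distance to the center whose grid cell captured $t$, by $\dist(t,\kapprox)$, the distance to the nearest center. Here I would exploit that $\square_t$ passed condition~\eqref{eq:cond}: since $t\in\square_t$ we have $\dist(t,x_{i(t)})\le\dist(x_{i(t)},\square_t)+\diameter(\square_t)$ and $\dist(\kapprox,\square_t)\le\dist(t,\kapprox)$, so \eqref{eq:cond} gives $\dist(t,x_{i(t)})\le\dist(t,\kapprox)+2\diameter(\square_t)$. This couples $\diameter(\square_t)$ to itself, and I would resolve the circular dependency by substituting the diameter bound and solving the resulting linear inequality; because $\eps'=\eps/4<1/4$, $\alpha>1$, and $d_u\ge1$, the coefficient $\beta=\frac{2\eps'}{5\alpha\sqrt{d_u}}$ satisfies $2\beta<1/5$, which yields $\dist(t,x_{i(t)})\le\frac{5}{4}\dist(t,\kapprox)$ and hence $\diameter(\square_t)\le\frac{\eps'}{2\alpha\sqrt{d_u}}\dist(t,\kapprox)$.

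Finally I would sum over all tuples, treating the two regimes separately. Summing the bound over tuples whose capturing cell has scale $j\ge1$ gives $\sum_{t}\diameter(\square_t)\le\frac{\eps'}{2\alpha\sqrt{d_u}}\kmedian_{\kapprox}(\Q_u(\I))\le\frac{\eps'}{2\sqrt{d_u}}\kmedian_{\opt(\Q_u(\I))}(\Q_u(\I))$, using $\kmedian_{\kapprox}(\Q_u(\I))\le r\le\alpha\kmedian_{\opt(\Q_u(\I))}(\Q_u(\I))$. The scale-$0$ cells must be handled apart, since there $\dist(t,x_{i(t)})$ can be arbitrarily small; but each such cell has diameter $\eps'\Phi/(10\alpha\sqrt{d_u})$ and there are at most $n$ tuples, so their total contribution is at most $n\cdot\eps'\Phi/(10\alpha\sqrt{d_u})=\eps' r/(10\alpha^2\sqrt{d_u})\le\frac{\eps'}{10}\kmedian_{\opt(\Q_u(\I))}(\Q_u(\I))$, using $\Phi=r/(\alpha n)$. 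Adding the two contributions gives $\sum_t\dist(t,\sigma(t))\le\eps'\kmedian_{\opt(\Q_u(\I))}(\Q_u(\I))$. Since $\opt(\Q_u(\I))$ minimizes the $k$-median cost, $\kmedian_{\opt(\Q_u(\I))}(\Q_u(\I))\le\kmedian_Y(\Q_u(\I))$ for every $Y$ with $|Y|=k$, so $|\kmedian_Y(\coreset)-\kmedian_Y(\Q_u(\I))|\le\eps'\kmedian_Y(\Q_u(\I))$, which is exactly the $\eps'$-coreset inequality~\eqref{def:coreset-kmedian} for $\Q_u(\I)$, completing the proof.
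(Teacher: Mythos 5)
Your proposal is correct and follows essentially the same route as the paper's proof: reduce the coreset error to $\sum_t \dist(t,\sigma(t))$ via the triangle inequality, bound each cell diameter by a small multiple of $\dist(t,\kapprox)$ using condition~\eqref{eq:cond} (resolving the same self-referential inequality the paper resolves), and sum using $\kmedian_{\kapprox}(\Q_u(\I))\leq r\leq\alpha\kmedian_{\opt(\Q_u(\I))}(\Q_u(\I))$ and $\kmedian_{\opt(\Q_u(\I))}(\Q_u(\I))\leq\kmedian_Y(\Q_u(\I))$. The only cosmetic difference is that you split cases by the ring index $j=0$ versus $j\geq 1$ while the paper splits by $\dist(t,x_{i(t)})\leq\Phi$ versus $>\Phi$, and you track the constants (including the $\sqrt{d_u}$ factors) somewhat more explicitly.
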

\begin{proof}
    Let $Y$ be an arbitrary set of $k$ points in $\Re^{d_u}$. The error is defined as $\mathcal{E}=|\kmedian_Y(\Q_u(\I))-\kmedian_Y(\coreset)|\leq \sum_{t\in \Q_u(\I)}|\dist(t,Y)-\dist(\sigma(t),Y)|$. By the triangle inequality, $\dist(t,Y)\leq \dist(\sigma(t),Y)+\dist(t,\sigma(t))$ and $\dist(\sigma(t),Y)\leq \dist(t,Y)+\dist(t,\sigma(t))$. Hence, $|\dist(t,Y)-\dist(\sigma(t),Y)|\leq \dist(t,\sigma(t))$.
    We have, 
    \begin{align*}
    \mathcal{E}\leq \sum_{t\in\Q_u(\I)}\dist(t,\sigma(t))= \sum_{t\in\Q_u(\I), \dist(t,x_{i(t)})\leq \Phi}\dist(t,\sigma(t))\quad\quad+\sum_{t\in\Q_u(\I), \dist(t,x_{i(t)})> \Phi}\dist(t,\sigma(t)).
    \end{align*}
    For $\dist(t,x_{i(t)})\leq \Phi$, by the construction of the exponential grid, we have $\dist(t,\sigma(t))\leq \diameter(\square_t)\leq\frac{\eps'}{10\alpha}\Phi$, hence
    \vspace{-0.8em}$$\sum_{t\in\Q_u(\I), \dist(t,x_{i(t)})\leq \Phi}\dist(t,\sigma(t))\leq \frac{\eps'}{10\alpha}n\Phi
    \leq \frac{\eps'}{10\alpha}\kmedian_{\opt(\Q_u(\I))}(\Q_u(\I)).$$

    For $\dist(t,x_{i(t)})>\Phi$, by the construction of the exponential grid we have $\dist(t,\sigma(t))\leq \diameter(\square_{t}) \leq \frac{\eps'}{10\alpha}\dist(t,x_{i(t)})$. 
    Notice that $\dist(x_{i(t)},\square_t)\leq \dist(X,\square_t)+\diameter(\square_t)$, by condition~\eqref{eq:cond}.
    We have,
    \begin{align*}
        \dist(t,x_{i(t)})&\leq \dist(x_{i(t)},\square_{t})+\diameter(\square_{t})\leq \dist(\kapprox,\square_t)+2\diameter(\square_t)\leq \dist(t,\kapprox)+\frac{2\eps'}{10\alpha}\dist(t,x_{i(t)})\\&\Leftrightarrow \dist(t,x_{i(t)})\leq \frac{1}{1-\frac{2\eps'}{10\alpha}}\dist(t,X)\leq (1+\frac{4\eps'}{10\alpha})\dist(t,\kapprox),
    \end{align*}
    so $\dist(t,\sigma(t))\leq \frac{\eps'}{10\alpha}(1+\frac{4\eps'}{10\alpha})\dist(t,\kapprox)$.
    Then, we get
    \begin{align*}
    \mathcal{E}&\leq \frac{\eps'}{10\alpha}\kmedian_{\opt(\Q_u(\I))}(\Q_u(\I))+\frac{\eps'}{10\alpha}(1+\frac{4\eps'}{10\alpha})\kmedian_{\kapprox}(\Q_u(\I))\leq (\frac{2\eps'}{10}+\frac{4(\eps')^2}{100\alpha})\kmedian_{\opt(\Q_u(\I))}(\Q_u(\I))\\&\leq \eps' \kmedian_{\opt(\Q_u(\I))}(\Q_u(\I)).
    \end{align*}
    This implies that $|\kmedian_Y(\Q_u(\I))-\kmedian_Y(\coreset)|\leq \eps'\kmedian_Y(\Q_u(\I))$. 
\end{proof}

From the previous lemma, we conclude with the main result establishing the correctness of the algorithm.
\begin{lemma}
\label{lem:correct1}
    If $\kmedianAlg_\gamma$ is used, then $\ret\subset \Re^d$ and $\kmedian_\ret(\Q_u(\I))\leq r_u\leq (1+\eps)\gamma \kmedian_{\opt(\Q_u(\I))}(\Q_u(\I))$. If $\DkmedianAlg_\gamma$ is used, then $\ret\subseteq \Q_u(\I)$ and  $\kmedian_\ret(\Q_u(\I))\leq r_u\leq (2+\eps)\gamma \kmedian_{\optD(\Q_u(\I))}(\Q_u(\I))$.
\end{lemma}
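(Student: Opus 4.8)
The plan is to read off both inequalities directly from two ingredients that are already available: Lemma~\ref{lem:coreset1}, which guarantees that $\coreset$ is an $\eps'$-coreset of $\Q_u(\I)$ with $\eps'=\eps/4$, and the $\gamma$-approximation guarantee of the black-box routine ($\kmedianAlg_\gamma$ or $\DkmedianAlg_\gamma$) invoked on $\coreset$. No new geometric work is needed; everything is bookkeeping of multiplicative constants.

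For the lower inequality $\kmedian_\ret(\Q_u(\I))\le r_u$ (identical in both cases), I would apply the coreset relation \eqref{def:coreset-kmedian} with $Y=\ret$, which is a legitimate set of $k$ centers. This gives $(1-\eps')\kmedian_\ret(\Q_u(\I))\le \kmedian_\ret(\coreset)$, and since $r_u=\frac{1}{1-\eps'}\kmedian_\ret(\coreset)$, dividing by $(1-\eps')$ yields $\kmedian_\ret(\Q_u(\I))\le r_u$ immediately.

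For the geometric upper bound, since $\ret=\kmedianAlg_\gamma(\coreset)$ is a $\gamma$-approximate $k$-median of the weighted set $\coreset$, we have $\kmedian_\ret(\coreset)\le \gamma\,\kmedian_{\opt(\coreset)}(\coreset)$. The key move is to compare against a feasible center set: because $\opt(\Q_u(\I))$ is itself a set of $k$ centers in $\Re^{d_u}$, it is feasible for the clustering of $\coreset$, so $\kmedian_{\opt(\coreset)}(\coreset)\le \kmedian_{\opt(\Q_u(\I))}(\coreset)$, and the coreset upper bound gives $\kmedian_{\opt(\Q_u(\I))}(\coreset)\le (1+\eps')\kmedian_{\opt(\Q_u(\I))}(\Q_u(\I))$. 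Chaining these, $r_u\le \frac{(1+\eps')\gamma}{1-\eps'}\kmedian_{\opt(\Q_u(\I))}(\Q_u(\I))$, and I would close the argument by checking that $\frac{1+\eps'}{1-\eps'}\le 1+\eps$ holds for $\eps'=\eps/4$ and $\eps\in(0,1)$ (equivalently $\eps'\le \eps/(2+\eps)$). The membership claim $\ret\subset\Re^{d_u}$ is immediate from the routine's output type.

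The discrete case follows the same skeleton with one extra insertion. Here $\ret=\DkmedianAlg_\gamma(\coreset)\subseteq\coreset$, and since each representative $s_\square$ is picked from $\Q_u(\I)$ we have $\coreset\subseteq\Q_u(\I)$, hence $\ret\subseteq\Q_u(\I)$ as required. For the cost, $\kmedian_\ret(\coreset)\le \gamma\,\kmedian_{\optD(\coreset)}(\coreset)$, and I would apply the generic discrete-versus-geometric relation to the weighted set $\coreset$ to get $\kmedian_{\optD(\coreset)}(\coreset)\le 2\,\kmedian_{\opt(\coreset)}(\coreset)$; running this through the same feasibility-plus-coreset chain and using $\kmedian_{\opt(\Q_u(\I))}(\Q_u(\I))\le \kmedian_{\optD(\Q_u(\I))}(\Q_u(\I))$ gives $r_u\le \frac{2(1+\eps')\gamma}{1-\eps'}\kmedian_{\optD(\Q_u(\I))}(\Q_u(\I))$. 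The only genuinely delicate point — the main obstacle — is verifying that this composite constant collapses to the claimed $(2+\eps)$: the inequality $\frac{2(1+\eps')}{1-\eps'}\le 2+\eps$ needs $\eps'\le \eps/(4+\eps)$, so one must take $\eps'$ to be a sufficiently small constant fraction of $\eps$ (the naive $\eps'=\eps/4$ only gives $2+\eps+\Theta(\eps^2)$), and one must insert the factor-$2$ bound on the \emph{weighted coreset} rather than on $\Q_u(\I)$ so that it multiplies the coreset distortion correctly.
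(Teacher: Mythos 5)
Your proof is correct and follows essentially the same route as the paper's: the coreset guarantee of Lemma~\ref{lem:coreset1} applied with $Y=\ret$ for the lower bound, and the chain $\kmedian_\ret(\coreset)\le\gamma\kmedian_{\opt(\coreset)}(\coreset)\le\gamma\kmedian_{\opt(\Q_u(\I))}(\coreset)\le(1+\eps')\gamma\kmedian_{\opt(\Q_u(\I))}(\Q_u(\I))$ (with the extra factor $2$ from $\kmedian_{\optD(\coreset)}(\coreset)\le 2\kmedian_{\opt(\coreset)}(\coreset)$ inserted on the weighted coreset in the discrete case) for the upper bound. The constant-tracking issue you flag in the discrete case is real, and your fix (take $\eps'\le\eps/(4+\eps)$) is equivalent to the paper's, which reaches $2(1+\eps)\gamma$ and then rescales $\eps\leftarrow\eps/2$.
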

\begin{proof}
    By the definition of $\kmedianAlg_\gamma$ and Lemma~\ref{lem:coreset1}, we have $\kmedian_\ret(\coreset)\leq \gamma\kmedian_{\opt(\coreset)}(\coreset)\leq \gamma\kmedian_{\opt(\Q_u(\I))}(\coreset)\leq (1+\eps')\gamma\kmedian_{\opt(\Q_u(\I))}(\Q_u(\I))$,
    and $\kmedian_\ret(\coreset)\geq (1-\eps')\kmedian_\ret(\Q(\I))$, so
    \begin{align*}
    \kmedian_\ret(\Q_u(\I))&\leq \frac{1}{1-\eps'}\kmedian_{\ret}(\coreset)=r_u\leq 
    \gamma\frac{1}{1-\eps'}\kmedian_{\opt(\coreset)}(\coreset) 
    \leq \frac{1+\eps'}{1-\eps'}\gamma \kmedian_{\opt(\Q_u(\I))}(\Q_u(\I))\\&\leq (1+4\eps')\gamma \kmedian_{\opt(\Q_u(\I))}(\Q_u(\I))=(1+\eps)\gamma \kmedian_{\opt(\Q_u(\I))}(\Q_u(\I)).
    \end{align*}
    The proof using the $\DkmedianAlg_\gamma$ algorithm is shown in Appendix~\ref{appndx:slowAlg}.
\end{proof}

\paragraph{Running time}
\begin{lemma}
\label{lem:size}
$|\coreset|=O(|X|\eps^{-d_u}\log N)$.
\end{lemma}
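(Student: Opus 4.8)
The plan is to bound $|\coreset|$ by the total number of grid cells processed across all centers, since the algorithm adds at most one representative point $s_\square$ to $\coreset$ for each cell $\square$ it processes. Hence it suffices to count the cells in $\bigcup_{x_i \in \kapprox} \widebar{V}_i$.

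First I would count the cells at a single level $j$ around a single center $x_i$. Every cell of $\widebar{V}_{i,j}$ lies inside the square $Q_{i,j}$ of side length $2^j\Phi$, and each cell has side length $\eps' 2^j \Phi/(10\alpha d_u)$. Therefore along each of the $d_u$ coordinate axes there are at most $\frac{2^j \Phi}{\eps' 2^j \Phi/(10\alpha d_u)} = \frac{10 \alpha d_u}{\eps'}$ cells, which gives $|\widebar{V}_{i,j}| \le (10\alpha d_u/\eps')^{d_u}$. Since $\alpha$ and $d_u = |\allattr_u| \le d$ are constants in the data-complexity model and $\eps' = \eps/4$, this is $O(\eps^{-d_u})$ cells per level. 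The important observation is that the factor $2^j$ cancels between the side length of $Q_{i,j}$ and the cell side length, so the per-level count is independent of $j$; this is exactly what keeps the exponential grid small.

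Next I would count the levels. The index $j$ ranges over $0, 1, \ldots, 2\log(\alpha n)$, so there are $O(\log n)$ levels. Since $n = |\Q(\I)| = O(N^{\rho^*(\Q)})$ and $\rho^*(\Q)$ is a constant, we have $\log n = O(\log N)$, giving $O(\log N)$ levels. Multiplying the two bounds, the grid around a single center contains $O(\eps^{-d_u}\log N)$ cells, and summing over the $|\kapprox|$ centers yields $O(|\kapprox|\,\eps^{-d_u}\log N)$ cells in total. Because $\coreset$ gains at most one point per processed cell, $|\coreset| = O(|\kapprox|\,\eps^{-d_u}\log N)$, as claimed.

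There is no serious obstacle here: the statement follows from a direct counting over the exponential grid. The only points needing care are verifying that the $2^j$ factor cancels so that each annulus contributes the same $O(\eps^{-d_u})$ cells regardless of its level, and observing that the $O(\log N)$ bound on the number of levels rests on $\rho^*(\Q)$ being a constant in the data-complexity model (so that $\log|\Q(\I)| = O(\log N)$).
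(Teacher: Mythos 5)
Your proof is correct and follows essentially the same route as the paper's: $O(\log N)$ levels per center (since $\log n = O(\log N)$ for $n=|\Q(\I)|=O(N^{\rho^*(\Q)})$), $O(\eps^{-d_u})$ cells per level because the $2^j$ factor cancels between the side lengths of $Q_{i,j}$ and its grid cells, and at most one coreset point per cell. Your write-up just makes explicit the per-level cell count and the $n$ versus $N$ distinction that the paper's one-line proof glosses over.
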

\begin{proof}
    For each $x_i\in X$ there are $2\log(\alpha N)=O(\log N)$ boxes  $Q_{i,j}$. For each $V_{i,j}$, we construct $O(\eps^{-d_u})$ cells. In the worst case, $\coreset$ contains a point for every cell.
\end{proof}
In total, the algorithm visits $O(|\kapprox|\eps^{-d_u}\log N)$ cells. For each cell $\square$, we check the condition~\eqref{eq:cond} in $O(|\kapprox|)$ time and  we compute the arrangement $\textsf{Arr}(G_\square)$ in $O(|\kapprox|^{d_u}\eps^{-d_u^2}\log^{d_u}N)$ time. For each hyper-rectangle in the complement, we run a query as in Lemma~\ref{lem:Rects} in $O(N\log N)$ time. Overall, $\coreset$ is constructed in $O(|\kapprox|^{d_u+1}\eps^{-d_u^2-d_u}N\log^{d_u+2} N)$ time. Finally, we get $\ret$ in $O(\timeMedian_\gamma(|\kapprox|
\eps^{-d_u}\log N))$ time.

\begin{theorem}
    \label{coreset1}
    Let $\I$ be a database instance with $N$ tuples, $\Q$ be an acyclic join query over a set of attributes $\allattr$ and $\allattr_u\subseteq \allattr$. Given a set $X\subset \Re^d$, a constant $\alpha$ such that $\kmedian_X(\Q_u(\I))\leq \alpha \kmedian_{\opt(\Q_u(\I))}(\Q_u(\I))$, and a constant parameter $\eps\in(0,1)$, there exists an algorithm that computes a set $\ret\subset \Re^d$ of $k$ points and a number $r_u$ in $O(|\kapprox|^{d_u+1}N\log^{d_u+2} N+\timeMedian_\gamma(|\kapprox|
\log N))$ time such that $\kmedian_{\ret}(\Q_u(\I))\leq r_u\!\leq\! (1+\eps)\gamma\kmedian_{\opt(\Q_u(\I))}(\Q_u(\I))$. There also exists an algorithm that computes a set $\ret\subseteq \Q_u(\I)$ of $k$ points and a number $r_u$ with the same running time, such that $\kmedian_{\ret}(\Q_u(\I))\leq r_u\!\leq\! (2+\eps)\gamma\kmedian_{\optD(\Q_u(\I))}(\Q_u(\I))$.
\end{theorem}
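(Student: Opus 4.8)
The plan is to read off Theorem~\ref{coreset1} as the packaging of the correctness guarantees and the running-time accounting for Algorithm~\ref{alg:Determ}; almost all the work has already been localized into the preceding lemmas, so the proof is mostly an assembly argument. First I would dispatch correctness. Lemma~\ref{lem:tech1} guarantees that the assignment $\sigma$ partitions $\Q_u(\I)$ among the representatives stored in $\coreset$ with the stated weights, Lemma~\ref{lem:coreset1} upgrades this to the statement that $\coreset$ is an $\eps'$-coreset for $\Q_u(\I)$ with $\eps'=\eps/4$, and Lemma~\ref{lem:correct1} then converts the $\gamma$-approximate solution returned by $\kmedianAlg_\gamma$ (resp.\ $\DkmedianAlg_\gamma$) on $\coreset$ into the desired $(1+\eps)\gamma$ (resp.\ $(2+\eps)\gamma$) bound, simultaneously certifying that $r_u=\frac{1}{1-\eps'}\kmedian_\ret(\coreset)$ upper bounds $\kmedian_\ret(\Q_u(\I))$ and is itself below the target. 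This already yields both the geometric and the discrete claims of the theorem, so nothing new is needed for the approximation factors.

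The remaining task is the running time, which I would build cell by cell. By the grid construction there are $2\log(\alpha n)=O(\log N)$ scales $Q_{i,j}$ per center and $O(\eps^{-d_u})$ cells per annulus $V_{i,j}$, so the algorithm inspects $O(|\kapprox|\,\eps^{-d_u}\log N)$ cells in total, matching the coreset-size bound of Lemma~\ref{lem:size}. For each inspected cell $\square$ I would charge three costs: testing condition~\eqref{eq:cond} takes $O(|\kapprox|)$ time; building $\textsf{Arr}(G_\square)$, its complement, and the partition $\textsf{Arr}'(G_\square)$ into hyper-rectangles costs $O(|\kapprox|^{d_u}\eps^{-d_u^2}\log^{d_u}N)$, since $G_\square$ consists of $O(|\kapprox|\,\eps^{-d_u}\log N)$ axis-parallel boxes and the arrangement of $N'$ boxes in $\Re^{d_u}$ has complexity $O((N')^{d_u})$; and finally, for each resulting rectangle $\square_\mathcal{R}$, one call $\textsf{CountRect}(\Q,\I,\square_\mathcal{R})$ from Lemma~\ref{lem:Rects} costs $O(N\log N)$.

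Multiplying the per-cell cost by the number of cells gives an $O(|\kapprox|^{d_u+1}\eps^{-d_u^2-d_u}N\log^{d_u+2}N)$ bound for constructing $\coreset$, and running $\kmedianAlg_\gamma$ (or $\DkmedianAlg_\gamma$) on the weighted set of size $O(|\kapprox|\,\eps^{-d_u}\log N)$ adds $O(\timeMedian_\gamma(|\kapprox|\,\eps^{-d_u}\log N))$. Since $\eps\in(0,1)$ and $d_u\le d$ are treated as constants, the $\eps^{-d_u^2-d_u}$ and $\eps^{-d_u}$ factors collapse into the hidden constant, leaving exactly the advertised $O(|\kapprox|^{d_u+1}N\log^{d_u+2}N+\timeMedian_\gamma(|\kapprox|\log N))$. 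I expect the one genuinely delicate point to be the arrangement step: one must verify that counting $|\Q_u(\I)\cap(\square\setminus G_\square)|$ exactly really reduces to disjointly tiling $\square\setminus G_\square$ by the rectangles of $\textsf{Arr}'(G_\square)$ and summing the $\textsf{CountRect}$ answers, so that no join result is double counted or missed. This is precisely where the $|\kapprox|^{d_u}$ blowup (and hence the $|\kapprox|^{d_u+1}$ in the final bound) enters, and it is the reason this deterministic algorithm is later superseded by the sampling-based variant.
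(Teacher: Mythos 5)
Your proposal is correct and follows essentially the same route as the paper: correctness is assembled from Lemmas~\ref{lem:tech1}, \ref{lem:coreset1}, and \ref{lem:correct1}, and the running time is obtained by multiplying the $O(|\kapprox|\eps^{-d_u}\log N)$ cell count from Lemma~\ref{lem:size} by the per-cell cost of the condition test, the arrangement construction, and the $O(N\log N)$ calls to $\textsf{CountRect}$, exactly as in the paper's accounting. Your closing observation that the exact count requires a disjoint rectangular tiling of $\square\setminus G_\square$ is also the point the paper relies on implicitly.
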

\subsection{Fast randomized algorithm}
\label{subsec:medrand}
As we show in Section~\ref{subsec:algmedian}, the algorithm in the previous section can be used to get a constant approximation for the relational $k$-median problem. However, the running time is $\Omega(|\kapprox|^{d_u+1}N)$. As we will see in the next section $|\kapprox|=k^2$, leading to an $\Omega(k^{2d_u+2}N)$ algorithm. 
In this section, we propose a more involved randomized algorithm that improves the factor $|\kapprox|^{d_u+1}N$ to only $|\kapprox|\cdot N$.
Undoubtedly, the expensive part of the deterministic algorithm is the cardinality estimation $|\Q_u(\I)\cap (\square\setminus G_\square)|$. Next, we design a faster algorithm to overcome this obstacle.
\begin{algorithm}[t]
\caption{{\sc RelClusteringFast}$(\Q, \I, \allattr_u, \kapprox, \alpha, r, \eps)$}
\label{alg:Random}
    $\eps'=\eps/34$\;
    Lines 2--3 from Algorithm~\ref{alg:Determ}\;
    \ForEach{$x_i\in X$}{
    Lines 5--8 from Algorithm~\ref{alg:Determ}\;
    $B=\emptyset$\;
    $\tau=\frac{1}{16|\kapprox|(\eps')^{-d_u-1}\log N}$\;
    $M=\frac{3}{(\eps')^2\tau}\log(2N^{10d})$\;
    \ForEach{$\square\in \widebar{V}_i$}{
        \If{$\dist(x_i,\square)\leq \dist(\kapprox,\square)+\diameter(\square)$}{
        $H_\square=\widebar{\pi}_{\allattr_u}\left(\textsf{SampleRect}(\Q,\I,\square,M)\right)\quad\quad$ (Using Lemma~\ref{lem:Rects})\;
        $g_\square=|H_\square\setminus(B\cap H_\square)|$\;
            \If{$\frac{g_\square}{M}\geq 2\tau$}{
                $s_\square\leftarrow$ arbitrary tuple in $H_\square\setminus B$\;
                $n_\square=\textsf{CountRect}(\Q,\I,\square)\quad\quad$ (Using Lemma~\ref{lem:Rects})\;
                $w(s_\square)=\frac{1}{1-\eps'}\cdot\frac{g_\square}{M}\cdot n_\square$\;
                $\coreset\leftarrow\coreset\cup\{s_\square\}$;
                $\quad B\leftarrow B\cup\{\square\}$\;
            }
        }
    }
    }
    $\ret=\kmedianAlg_\gamma(\coreset)\quad\quad$ (or $\ret=\DkmedianAlg_\gamma(\coreset)$ for the discrete version)\;
    $r_u=\frac{1+4\eps'}{1-9\eps'}\kmedian_\ret(\coreset)$\;
    \Return $(\ret, r_u)$\;
\end{algorithm}
The algorithm constructs exactly the same exponential grid as described above. However, in this algorithm, we use a more involved approach to estimate the weights $w(s_\square)$ faster using random uniform sampling.
We use the same notation as in the previous subsection. Let $\coreset=\emptyset$ and let $\Phi$ as defined above.
In this algorithm, we will characterize each cell we visit as \emph{heavy} or \emph{light}.
Let $B$ denote the set of the processed heavy cells. So, we initialize with $B=\emptyset$.


\mparagraph{Algorithm}The pseudocode of the algorithm is shown in Algorithm~\ref{alg:Random}.
We set $\eps'=\eps/34$. For each $x_i\in X$ and for each cell $\square\in \widebar{V}_i$,
we check condition~\eqref{eq:cond}. If it is satisfied, then we process $\square$. Otherwise, we skip it and continue with the next cell.
Let $\square$ be a cell in the exponential grid that the algorithm processes. We compute $n_{\square}=|\Q_u(\I)\cap \square|$ and we set $\tau=\frac{1}{16|\kapprox|(\eps')^{-d_u-1}\log N}$.
Using the algorithm from Lemma~\ref{lem:Rects}, we sample with replacement a multi-set $H_{\square}$ of $M=\frac{3}{(\eps')^2\tau} \log(2N^{10d})=O(|\kapprox|(\eps')^{-d_u-3}\log^2 N)$ points from $\Q_u(\I)\cap \square$ and we set $g_\square=|H_\square\setminus (B\cap H_{\square})|$, i.e., the number of samples that are not currently contained in heavy cells we processed. If $\frac{g_\square}{M}\geq 2\tau$, then let $s_{\square}$ be any of the sampled points in $H_\square\setminus B$ as the representative point of $\square$. We add $s_{\square}$ in $\coreset$ with weight $w(s_{\square})=\frac{1}{1-\eps'}\cdot\frac{g_\square}{M}\cdot n_{\square}$.
We say that all the points in $\Q_u(\I)\cap (\square\setminus B)$ are \emph{mapped} to $s_{\square}$.
We characterize $\square$ as a heavy cell and we add it to $B$. Otherwise, if $\frac{g_\square}{M}<2\tau$, then $\square$ is a light cell, we skip it and continue processing the next cell.
An example of the sampling procedure is shown in Figure~\ref{fig:sample}.
In the end, after repeating the algorithm for each $x_i\in \kapprox$, we have a weighted set $\coreset$ of size $O(|\kapprox|\eps^{-d_u}\log n)$. We run the standard algorithm for the weighted $k$-median problem $\kmedianAlg_\gamma$ (or discrete $k$-median problem $\DkmedianAlg_\gamma$) on $\coreset$ to get a set of $k$ centers $\ret$. We return the set of centers $\ret$. Furthermore, we return $r_u=\frac{1+4\eps'}{1-9\eps'}\kmedian_{\ret}(\coreset)$.

\paragraph{Correctness}
Let $P_u$ (heavy tuples) be the tuples in $\Q_u(\I)$ that belong to at least one heavy cell in $B$ and let $\bar{P}_u=\Q_u(\I)\setminus P_u$ (light tuples), at the end of the algorithm. By construction, every tuple $t\in \Q_u(\I)$, belongs to at least one heavy or light cell. Notice that a point $t\in \Q_u(\I)$ might first lie in a light cell and only later in the algorithm it might be found in a heavy cell.
It is straightforward to see that any heavy point $t\in P_u$ is mapped to a point in $\coreset$.
For each $t\in P_u$,
let $\square_t$ be the first heavy cell visited by the algorithm such that $t\in \Q_u(\I)\cap \square_t$.
Let $i(t)$ be the index such that $\square_t\in \widebar{V}_{i(t)}$.
In the deterministic algorithm, we had the assignment function $\sigma(\cdot)$ for all the points in $\Q_u(D)$. In this algorithm, we only map the points in $P_u$, so we define a new mapping function $\hat{\sigma}(\cdot)$, i.e., for a point $t\in P_u$, $\hat{\sigma}(t):=s_{\square_t}\in\coreset$. 
For a cell $\square$ processed by the algorithm, let $B_\square$ be the set of heavy cells found by the algorithm just before $\square$ was processed.
For a point $p\in \coreset$, let $\square_p$ be the cell that the algorithm processed while $p$ was added in $\coreset$, and let $n_{p}=|\Q_u(\I)\cap (\square_p\setminus B_{\square_p})|$ be the number of the points mapped to $p$.

We show the correctness of our algorithm through a number of technical lemmas.
We first show a crucial observation. There exists a \emph{charging process} where i) every light tuple (i.e., a tuple that does not belong to a heavy cell) charges $\frac{1}{\eps'}$ heavy tuples that lie to the same cell as the light tuple, and ii) every heavy tuple is charged at most once. This observation is then used to show that for any set of $k$ centers, the $k$-median error of $\Q_u(\I)$ is close enough to the $k$-median error with respect to the heavy tuples so we can safely ignore the other tuples. Using this argument, we prove that $\coreset$ is a $9\eps'$-coreset for the heavy points. Using all previous observations, we conclude that $\ret$ is a $(1+\eps)\gamma$-approximation for the relational $k$-median clustering. All missing lemmas and proofs can be found in Appendix~\ref{appndx:fast}.

\begin{lemma}
\label{lem:tech2}
    For every point $p\in \coreset$, $n_{p}\leq w(p)\leq (1+4\eps')n_{p}$ with probability at least $1-\frac{1}{N^{O(1)}}$.
\end{lemma}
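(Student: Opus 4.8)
The plan is to view $g_{\square_p}$ as a binomial estimator of the true uncovered fraction of the cell and to control its multiplicative error by Chernoff bounds, handling the adaptive dependence through conditioning. Write $\theta_\square := \frac{|\Q_u(\I)\cap(\square\setminus B_\square)|}{|\Q_u(\I)\cap\square|} = \frac{n_p}{n_\square}$ for the fraction of the mass of $\square$ that still lies outside the previously processed heavy cells $B_\square$, so that $n_p = \theta_{\square_p}\, n_{\square_p}$. Conditioned on the history $B_\square$, each of the $M$ uniform samples of $H_\square$ produced by $\textsf{SampleRect}$ lands outside $B_\square$ independently with probability exactly $\theta_\square$; hence $g_\square \mid B_\square \sim \mathrm{Bin}(M,\theta_\square)$ with $\mathbb{E}[g_\square/M] = \theta_\square$. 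Since $w(p) = \frac{1}{1-\eps'}\cdot\frac{g_{\square_p}}{M}\cdot n_{\square_p}$ and $n_p = \theta_{\square_p} n_{\square_p}$, the whole statement reduces to bounding the relative error of $g_{\square_p}/M$ against $\theta_{\square_p}$.

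First I would run two conditional Chernoff estimates, fixing $B_\square$ so that $\theta_\square$ is deterministic. If $\theta_\square < \tau$, the upper tail shows that the cell is wrongly declared heavy, i.e. $g_\square/M \geq 2\tau$, with probability at most $\exp(-M\tau/3)$, which by $M\tau = \frac{3}{(\eps')^2}\log(2N^{10d})$ is at most $N^{-10d}$. If $\theta_\square \geq \tau$, the two-sided multiplicative bound with relative error $\eps'$ gives $\Pr[\,|g_\square/M-\theta_\square|>\eps'\theta_\square\,]\leq 2\exp(-(\eps')^2 M\theta_\square/3)\leq 2\exp(-(\eps')^2 M\tau/3)= N^{-10d}$, using $M\theta_\square\geq M\tau$. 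Both bounds hold uniformly over every realization of $B_\square$, so the corresponding marginal per-cell failure probability is also $O(N^{-10d})$; the algorithm processes only $O(|\kapprox|\eps^{-d_u}\log N)=\mathrm{poly}(N)$ cells (Lemma~\ref{lem:size}), so a union bound leaves total failure probability $N^{-O(1)}$. Crucially, the union bound needs no independence across cells, which is what lets me sidestep the sequential dependence of $B_\square$ on earlier coin flips.

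Conditioning on this good event, the lemma follows for every $p\in\coreset$. Because $p$ was added, $\square_p$ is heavy, i.e. $g_{\square_p}/M\geq 2\tau$; the first estimate then excludes $\theta_{\square_p}<\tau$, so $\theta_{\square_p}\geq\tau$, and the second estimate gives $(1-\eps')\theta_{\square_p}\leq g_{\square_p}/M\leq(1+\eps')\theta_{\square_p}$. Substituting into $w(p)$ I obtain $n_p = \theta_{\square_p} n_{\square_p}\leq \frac{1}{1-\eps'}\cdot\frac{g_{\square_p}}{M}\, n_{\square_p} = w(p)\leq \frac{1+\eps'}{1-\eps'}\, n_p\leq (1+4\eps')\,n_p$, where the final inequality uses $\frac{1+\eps'}{1-\eps'}\leq 1+4\eps'$, valid for $\eps' = \eps/34 < 1/2$.

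The main obstacle is precisely the adaptive nature of the heavy set $B_\square$: the target fraction $\theta_\square$ that we estimate is itself random, determined by earlier samples, so the per-cell events cannot be treated as independent trials against a fixed parameter. The resolution is to argue conditionally on $B_\square$ — the fresh, independent samples make $g_\square$ binomial with the now-deterministic parameter $\theta_\square$ — and to rely on the fact that a union bound tolerates arbitrary dependence across cells. A secondary subtlety is that the multiplicative Chernoff bound is useless for vanishing $\theta_\square$, which is exactly why the $2\tau$ threshold in the heavy/light test (matched by the choice of $M$) is built in: it certifies that any cell contributing to $\coreset$ has $\theta_{\square_p}\geq\tau$, the only regime in which the relative concentration is meaningful.
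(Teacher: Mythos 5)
Your proof is correct and takes essentially the same route as the paper: the paper simply invokes the Chernoff bound from Lemma~2 of the Chen et al.\ coreset paper to get $(1-\eps')n_p\leq \frac{g_{\square_p}}{M}n_{\square_p}\leq(1+\eps')n_p$ for heavy cells and then performs the same substitution into $w(p)$ with $\frac{1+\eps'}{1-\eps'}\leq 1+4\eps'$. Your write-up just makes explicit what the citation hides — the conditioning on $B_{\square_p}$ that makes $g_{\square_p}$ binomial, the role of the $2\tau$ threshold in guaranteeing $\theta_{\square_p}\geq\tau$ so the multiplicative bound is meaningful, and the union bound over the polynomially many cells — all of which is consistent with the paper's intent.
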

\begin{proof}

    Let $p\in\coreset$ be a point in the coreset. By definition $\frac{g_{\square_p}}{M}\geq 2\tau$, so
    using the Chernoff bound, as shown  in~\cite{chen2022coresets} (Lemma 2),  with probability at least $1-\frac{1}{N^{O(1)}}$ it holds that $$(1-\eps')|\Q_u(\I)\cap (\square_p\setminus B_{\square_p})|\leq \frac{g_{\square_p}}{M}n_{\square_p}\leq (1+\eps')|\Q_u(\I)\cap (\square_p\setminus B_{\square_p})|.$$
    Hence, $n_{p}\leq w(p)\leq \frac{1+\eps'}{1-\eps'}n_{p}\leq (1+4\eps')n_{p}$ with probability at least $1-1/N^{O(1)}$.
\end{proof}

Let $L$ be the set of light cells found by the algorithm. For a cell $\square\in L$, let $\mathcal{P}^L_\square$ be the set of the points in $\Q_u(\I)\cap \square$ that do not belong in a heavy cell at the time that the algorithm processes $\square$. Notice that a point $p\in \mathcal{P}^L_\square$ might also belong to $P_u$ (points that lie in at least one heavy cell) because $p$ was found inside a heavy cell later in the algorithm.
In addition, we note that it might be possible that  $\mathcal{P}^L_{\square_j}\cap \mathcal{P}^L_{\square_h}\neq \emptyset$, for $j\neq h$.
Furthermore, let $\mathcal{P}^B_\square$ be the 
set of points in $\Q_u(\I)\cap \square$ that belong in at least one heavy cell at the time that the algorithm processes $\square$. 

In the next technical lemma, we show the existence of a charging process that is later used to estimate the $k$-median error of $\Q(\I)$ using only the heavy tuples $P_u$.
\begin{lemma}
    \label{lem:assignment}
   There exists a charging process that works with probability at least $1-\frac{1}{N^{O(1)}}$ having the following properties. For every cell $\square\in L$, each point $p\in \mathcal{P}_\square^L$ \emph{charges} $\frac{1}{\eps'}$ points in  $\mathcal{P}_\square^B$, such that, in the end, every point $t\in P_u$ has been charged at most once.
\end{lemma}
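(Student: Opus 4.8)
The plan is to recast the desired charging as a feasibility question for a bipartite degree-constrained assignment, and to settle it by exhibiting an explicit \emph{fractional} solution together with the integrality of the bipartite $b$-matching polytope, rather than by a greedy cell-by-cell assignment. A greedy scheme fails, because a heavy point may lie in $\mathcal{P}^B_\square$ for many light cells, so charges issued by earlier light cells can exhaust the heavy points available to a later one. Concretely, I would build a bipartite graph whose left vertices are the light \emph{instances} $(p,\square)$ with $\square\in L$ and $p\in\mathcal{P}^L_\square$, whose right vertices are the heavy points $t\in P_u$, and which has an edge $((p,\square),t)$ exactly when $t\in\mathcal{P}^B_\square$. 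Assigning every left vertex demand $1/\eps'$ and every right vertex capacity $1$, an integral assignment in this graph is precisely the charging required by the lemma (assuming $1/\eps'\in\mathbb{Z}$, which we may ensure by slightly shrinking $\eps'$).

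First I would quantify how \emph{light} a light cell really is. Fix a processed cell $\square$, and let $p=|\mathcal{P}^L_\square|/n_\square$ be the true fraction of points of $\Q_u(\I)\cap\square$ not yet in a heavy cell, so that $g_\square/M$ is the empirical mean of $M$ i.i.d.\ Bernoulli$(p)$ draws. By the same Chernoff estimate used in Lemma~\ref{lem:tech2}, the choice $M=\frac{3}{(\eps')^2\tau}\log(2N^{10d})$ guarantees that, with probability $1-N^{-\Omega(1)}$, whenever $p\ge 3\tau$ we have $g_\square/M\ge(1-\eps')\,3\tau>2\tau$. A union bound over the $O(|\kapprox|(\eps')^{-d_u}\log N)=N^{O(1)}$ processed cells makes this hold simultaneously for all of them, which is the event on which the charging ``works''. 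Consequently, for every \emph{light} cell (one with $g_\square/M<2\tau$) we get $|\mathcal{P}^L_\square|<3\tau\,n_\square$, hence $|\mathcal{P}^B_\square|>(1-3\tau)n_\square$ and the key ratio $|\mathcal{P}^L_\square|/|\mathcal{P}^B_\square|<4\tau$ (using $\tau\le 1/12$).

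The crux is then to produce a feasible fractional charging. I would have each light instance $(p,\square)$ spread its demand uniformly, sending $\frac{1/\eps'}{|\mathcal{P}^B_\square|}$ units to every $t\in\mathcal{P}^B_\square$; this meets every left demand exactly and keeps each edge value in $[0,1]$, since any cell carrying a light point has $|\mathcal{P}^B_\square|>\frac1{\eps'}|\mathcal{P}^L_\square|\ge 1/\eps'$ (as $4\tau\le\eps'$). The total charge landing on a fixed heavy point $t$ is $\frac1{\eps'}\sum_{\square:\,t\in\mathcal{P}^B_\square}\frac{|\mathcal{P}^L_\square|}{|\mathcal{P}^B_\square|}<\frac{4\tau}{\eps'}\,|\{\square:t\in\mathcal{P}^B_\square\}|$. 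Here I use the geometric observation that the rings $V_{i,j}$ partition space, so $t$ lies in at most one cell of each exponential grid $\widebar{V}_i$ and hence in at most $|\kapprox|$ cells overall; plugging in $\tau=\frac{(\eps')^{d_u+1}}{16|\kapprox|\log N}$ bounds the charge on $t$ by $\frac{4\tau|\kapprox|}{\eps'}=\frac{(\eps')^{d_u}}{4\log N}<1$. Thus the fractional point respects every capacity, so the bipartite $b$-matching polytope (equality demands on the left, capacities $\le 1$ on the right, box constraints on edges) is nonempty; being defined by a totally unimodular system with integral right-hand side, it has an integral vertex, and any such vertex is the desired integral charging.

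The main obstacle is exactly this global disjointness: the naive per-cell bound $\frac1{\eps'}|\mathcal{P}^L_\square|\le|\mathcal{P}^B_\square|$ is true but insufficient, since summing it over a family of cells overcounts heavy points shared among cells and therefore does not certify Hall's condition directly. The uniform-spreading argument circumvents this, and it hinges on two quantitative inputs I would pin down precisely: the tiny light fraction $4\tau$ forced by the sampling threshold, and the fact that a single point participates in at most $|\kapprox|$ grid cells. The remaining pieces—the Chernoff union bound and the appeal to total unimodularity—are routine.
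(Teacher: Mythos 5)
Your proof is correct, but it reaches the conclusion by a genuinely different route than the paper. The paper's argument is a sequential greedy: it sorts the light cells $\square_1,\ldots,\square_\eta$ in ascending order of $|\mathcal{P}^L_{\square_j}|$ and shows the single counting inequality $|\mathcal{P}^B_{\square_j}|-\frac{1}{\eps'}\sum_{h<j}|\mathcal{P}^L_{\square_h}|\geq\frac{1}{\eps'}|\mathcal{P}^L_{\square_j}|$, so that even if every charge issued by earlier cells landed inside $\mathcal{P}^B_{\square_j}$, enough uncharged heavy points remain; the two inputs are the Chernoff bound $|\mathcal{P}^B_{\square_j}|\geq\frac{1}{8\tau}|\mathcal{P}^L_{\square_j}|$ and the crude bound $\sum_{h<j}|\mathcal{P}^L_{\square_h}|\leq|L|\cdot|\mathcal{P}^L_{\square_j}|$ with $|L|=O(|\kapprox|(\eps')^{-d_u}\log N)$. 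So your opening claim that ``a greedy scheme fails'' is inaccurate: the sharing of heavy points across cells is exactly what the paper's ordering-plus-counting handles, and it never needs Hall's condition over arbitrary families. Your route instead certifies feasibility of the bipartite $b$-matching by a uniform fractional spreading plus total unimodularity, and its combinatorial input is different: rather than bounding the \emph{total number of light cells}, you bound the \emph{number of cells containing a fixed heavy point} by $|\kapprox|$ (one per exponential grid). That is a tighter count, and it means your argument would remain feasible with a larger threshold $\tau=\Theta(\eps'/|\kapprox|)$, i.e., fewer samples per cell, whereas the paper's counting genuinely needs $\tau=O(\eps'^{d_u+1}/(|\kapprox|\log N))$; with the $\tau$ actually used by the algorithm both arguments go through. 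The price you pay is the (routine but non-elementary) appeal to integrality of the transportation polytope and the need to make $1/\eps'$ integral, versus the paper's fully explicit assignment. Your Chernoff step (lightness ratio $|\mathcal{P}^L_\square|/|\mathcal{P}^B_\square|<4\tau$ for every light cell, via a union bound over all processed cells) matches the paper's.
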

\begin{proof}
    Let $L=\{\square_1,\ldots, \square_\eta\}$ be sorted in ascending order of $|\mathcal{P}_\square^L|$, i.e., $|\mathcal{P}_{\square_j}^L|\leq |\mathcal{P}_{\square_{j+1}}^L|$.
    It is sufficient to show that for every $j=1,\ldots,\eta$, $|\mathcal{P}_{\square_j}^B|-\frac{1}{\eps'}\sum_{h< j}|\mathcal{P}_{\square_h}^L|\geq \frac{1}{\eps'}|\mathcal{P}_{\square_j}^L|.$
    Indeed, if this inequality holds for every $j$, then there are always at least $\frac{1}{\eps'}|\mathcal{P}_{\square_j}^L|$ uncharged points in $\mathcal{P}_{\square_j}^B$, and we can charge each point in $\mathcal{P}_{\square_j}^L$ to $\frac{1}{\eps'}$ points in $\mathcal{P}_{\square_j}^B$.

    Recall that $\square_j\in L$ is a light cell because the algorithm found that the ratio $\frac{|H_{\square_j}\cap \mathcal{P}_{\square_j}^L|}{M}\leq 2\tau$. In this case, from the Chernoff bound (Lemma 2 in the full version of~\cite{chen2022coresets}) we have $\frac{|\mathcal{P}_{\square_j}^L|}{|\mathcal{P}_{\square_j}^B|+|\mathcal{P}_{\square_j}^L|}\leq 4\tau$ with probability at least $1-\frac{1}{N^{O(1)}}$. Solving the inequality with respect to $|\mathcal{P}_{\square_j}^B|$, we get $|\mathcal{P}_{\square_j}^B|\geq \frac{1-4\tau}{4\tau}|\mathcal{P}_{\square_j}^L|\geq \frac{1}{8\tau}|\mathcal{P}_{\square_j}^L|$, because $\tau< \frac{1}{8}$. Hence, $|\mathcal{P}_{\square_j}^B|\geq 2|\kapprox|(\eps')^{-d_u-1}\log (N)\cdot |\mathcal{P}_{\square_j}^L|$.
    
    Next, we find an upper bound for $\frac{1}{\eps'}\sum_{h<j}|\mathcal{P}_{\square_h}^L|$. Because of sorting in ascending order of $|\mathcal{P}_{\square_j}^L|$, we have $\frac{1}{\eps'}\sum_{h<j}|\mathcal{P}_{\square_h}^L|\leq \frac{j}{\eps'}|\mathcal{P}_{\square_j}^L|\leq \frac{|L|}{\eps'}|\mathcal{P}_{\square_j}^L|\leq |\kapprox|(\eps')^{-d_u-1}\log(N)\cdot |\mathcal{P}_{\square_j}^L|.$
    
    Hence, we conclude that
   $$|\mathcal{P}_{\square_j}^B|-\frac{1}{\eps'}\sum_{h< j}|\mathcal{P}_{\square_j}^L|\geq 2|\kapprox|(\eps')^{-d_u-1}\log (N) |\mathcal{P}_{\square_j}^L|-|\kapprox|(\eps')^{-d_u-1}\log (N) |\mathcal{P}_{\square_j}^L|\geq \frac{1}{\eps'}|\mathcal{P}_{\square_j}^L|.$$
    \vspace{-1em}
\end{proof}
For each point $p\in \bar{P}_u$, let $\square_{i(p)}\in L$ be the cell in $L$ such that $p\in \square_{i(p)}$ and $p$ charges $\frac{1}{\eps'}$ points in $\mathcal{P}_{\square_{i(p)}}^B$, as shown in Lemma~\ref{lem:assignment}.
Let $t_{j_1(p)}, \ldots, t_{j_{1/\eps'}(p)}$ be these points in $\mathcal{P}_{\square_{i(p)}}^B$.

Next, we show that the $k$-median error with respect to the heavy points $P_u$ approximates the $k$-median error of all tuples in $\Q_u(\I)$.
\begin{lemma}
\label{lem:genLight}
    Let $Y$ be an arbitrary set of $k$ points in $\Re^{d_u}$. It holds that $\kmedian_Y(\bar{P}_u)\leq \eps'\kmedian_Y(P_u)+\eps'\kmedian_{Y}(\Q_u(\I))$ and 
    $\kmedian_Y(\Q_u(\I))\leq (1+4\eps')\kmedian_Y(P_u)$
    with probability at least $1-\frac{1}{N^{O(1)}}$.
\end{lemma}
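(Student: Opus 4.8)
The plan is to prove the second inequality as an immediate algebraic consequence of the first, so the real work is the bound $\kmedian_Y(\bar P_u)\le \eps'\kmedian_Y(P_u)+\eps'\kmedian_Y(\Q_u(\I))$, which I would extract from the charging process of Lemma~\ref{lem:assignment}. Fix an arbitrary set $Y$ of $k$ centers and condition on the $1-1/N^{O(1)}$ event that the charging process exists. For a light point $p\in\bar P_u$, let $t_{j_1(p)},\dots,t_{j_{1/\eps'}(p)}\in\mathcal{P}^B_{\square_{i(p)}}$ be the $1/\eps'$ heavy points it charges, all lying in the same light cell $\square_{i(p)}$ as $p$. First I would apply the triangle inequality $\dist(p,Y)\le \dist(t_{j_l(p)},Y)+\dist(p,t_{j_l(p)})$ for each $l$, sum over the $1/\eps'$ charged points, and divide by $1/\eps'$, obtaining
\[
\dist(p,Y)\le \eps'\sum_{l}\dist(t_{j_l(p)},Y)+\eps'\sum_{l}\dist(p,t_{j_l(p)})\le \eps'\sum_{l}\dist(t_{j_l(p)},Y)+\diameter(\square_{i(p)}),
\]
where the last step uses $\dist(p,t_{j_l(p)})\le\diameter(\square_{i(p)})$ together with the fact that there are exactly $1/\eps'$ charged points.

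Summing this over all $p\in\bar P_u$ splits the right-hand side into two parts. For the first part, the decisive property of Lemma~\ref{lem:assignment} is that every heavy point is charged at most once, so $\sum_{p\in\bar P_u}\sum_{l}\dist(t_{j_l(p)},Y)\le \sum_{t\in P_u}\dist(t,Y)=\kmedian_Y(P_u)$, contributing at most $\eps'\kmedian_Y(P_u)$. The second part, $\sum_{p\in\bar P_u}\diameter(\square_{i(p)})$, is where the exponential-grid analysis re-enters, and I would handle it exactly as in the proof of Lemma~\ref{lem:coreset1}. Each $\square_{i(p)}$ is a cell processed by the algorithm, hence condition~\eqref{eq:cond} held for it; writing $x_c$ for the center whose grid contains $\square_{i(p)}$, the same chain of inequalities as in Lemma~\ref{lem:coreset1} gives $\dist(p,x_c)\le(1+\tfrac{4\eps'}{10\alpha})\dist(p,\kapprox)$ and thus $\diameter(\square_{i(p)})\le\tfrac{\eps'}{10\alpha}(1+\tfrac{4\eps'}{10\alpha})\dist(p,\kapprox)$ whenever $\dist(p,x_c)>\Phi$, while the cells with $\dist(p,x_c)\le\Phi$ each contribute at most $\tfrac{\eps'}{10\alpha}\Phi$ and hence at most $\tfrac{\eps'}{10\alpha}n\Phi\le\tfrac{\eps'}{10\alpha}\kmedian_{\opt(\Q_u(\I))}(\Q_u(\I))$ in total. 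Summing and using $\sum_{p}\dist(p,\kapprox)\le\kmedian_\kapprox(\Q_u(\I))\le\alpha\kmedian_{\opt(\Q_u(\I))}(\Q_u(\I))$ collapses $\sum_p\diameter(\square_{i(p)})$ to a small constant multiple of $\kmedian_{\opt(\Q_u(\I))}(\Q_u(\I))$, which is at most $\eps'\kmedian_{\opt(\Q_u(\I))}(\Q_u(\I))\le\eps'\kmedian_Y(\Q_u(\I))$ since $Y$ is a feasible set of $k$ centers. Combining the two parts yields the first inequality.

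The second inequality then follows by a one-line rearrangement: since $\Q_u(\I)=P_u\sqcup\bar P_u$, we have $\kmedian_Y(\Q_u(\I))=\kmedian_Y(P_u)+\kmedian_Y(\bar P_u)$, and substituting the first inequality gives $(1-\eps')\kmedian_Y(\Q_u(\I))\le(1+\eps')\kmedian_Y(P_u)$, so $\kmedian_Y(\Q_u(\I))\le\frac{1+\eps'}{1-\eps'}\kmedian_Y(P_u)\le(1+4\eps')\kmedian_Y(P_u)$ for $\eps'\le\tfrac12$. I expect the main obstacle to be the bookkeeping in the diameter term: one must verify that the light cell $\square_{i(p)}$ genuinely satisfied condition~\eqref{eq:cond} at the moment it was processed, so that the Lemma~\ref{lem:coreset1} diameter estimates transfer verbatim, and then confirm that folding together all the constants (the $\tfrac{1}{10\alpha}$ factors, the near-center term, and $\kmedian_\kapprox\le\alpha\kmedian_{\opt}$) keeps the final coefficient below $\eps'$. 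The charging-based triangle-inequality averaging and the closing algebra are routine by comparison.
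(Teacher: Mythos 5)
Your proposal is correct and follows essentially the same route as the paper: the same charging process from Lemma~\ref{lem:assignment}, the same triangle-inequality averaging over the $1/\eps'$ charged heavy points, the ``each heavy point is charged at most once'' bound for the first term, and the exponential-grid/condition~\eqref{eq:cond} diameter analysis of Lemma~\ref{lem:coreset1} for the cross-term, followed by the same rearrangement for the second inequality. The only (cosmetic) difference is that you bound the cross-term directly by $\sum_{p\in\bar P_u}\diameter(\square_{i(p)})$ and control it against $\kmedian_{\kapprox}$ and $\kmedian_{\opt(\Q_u(\I))}(\Q_u(\I))$, whereas the paper introduces an auxiliary assignment $\sigma'$ on $\Q_u(\I)$ so as to invoke the bound of Lemma~\ref{lem:coreset1} verbatim; both yield the same $\eps'\kmedian_Y(\Q_u(\I))$ estimate.
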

\begin{proof}
    We start showing the first inequality $\kmedian_Y(\bar{P}_u)\leq \eps'\kmedian_Y(P_u)+\eps'\kmedian_{Y}(\Q_u(\I))$.
    We have $\kmedian_Y(\bar{P}_u)=\sum_{p\in \bar{P}_u}\dist(p,Y)$.
    For a point $p\in \bar{P}_u$  and each $h\leq \frac{1}{\eps'}$, by triangle inequality, we have
    $$\dist(p,Y)\leq \dist(t_{j_h(p)}, Y)+\dist(p, t_{j_h(p)}).$$
    Taking the sum of these $\frac{1}{\eps'}$ inequalities, we have
    $\dist(p, Y)\leq \eps'\sum_{h=1}^{1/\eps'}\dist(t_{j_h(p)},Y) +\eps' \sum_{h=1}^{1/\eps'}\dist(p,t_{j_h(p)}),$
    so we get 
    \vspace{-1em}
    $$\kmedian_Y(\bar{P}_u)\leq\eps' \sum_{p\in\bar{P}_u}\sum_{h=1}^{1/\eps'}\dist(t_{j_h(p)},Y) + \eps' \sum_{p\in\bar{P}_u}\sum_{h=1}^{1/\eps'}\dist(p,t_{j_h(p)}).$$
    From Lemma~\ref{lem:assignment}, we proved that any $t\in P_u$ is charged by at most one point in $\bar{P}_u$, so the first term in the sum can be bounded as
    $\eps' \sum_{p\in\bar{P}_u}\sum_{h=1}^{1/\eps'}\dist(t_{j_h(p)},Y)\leq \eps'\sum_{t\in P_u}\dist(t,Y)=\eps'\kmedian_Y(P_u).$
    In Appendix~\ref{appndx:fast} we bound the second term in the sum showing that $\eps' \sum_{p\in\bar{P}_u}\sum_{h=1}^{1/\eps'}\dist(p,t_{j_h(p)})\leq \eps'\kmedian_Y(\Q(\I))$.
    Hence, the first inequality follows.

    For the second inequality we have,
    $\kmedian_Y(P_u)=\kmedian_Y(\Q_u(\I))-\kmedian_Y(\bar{P}_u)\geq \kmedian_Y(\Q_u(\I))-\eps'\kmedian_Y(P_u)-\eps'\kmedian_Y(\Q_u(\I))$, so $\kmedian_Y(\Q_u(\I))\leq \frac{1+\eps'}{1-\eps'}\kmedian_Y(P_u)\leq (1+4\eps')\kmedian_Y(P_u)$.
\end{proof}
Using the inequalities in Lemma~\ref{lem:genLight} and Lemma~\ref{lem:tech2}, we follow the proof of Lemma~\ref{lem:coreset1} and we show the next result.
\begin{lemma}
\label{lem:coreset2}
    $\coreset$ is an $9\eps'$-coreset of $P_u$ with probability at least $1-\frac{1}{N^{O(1)}}$.
\end{lemma}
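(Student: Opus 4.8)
The plan is to follow the structure of the proof of Lemma~\ref{lem:coreset1}, but to account for the two new sources of error that appear in the randomized construction: the geometric error of mapping each heavy tuple to its representative, and the multiplicative error in the weights coming from the sampling-based counting. Fix an arbitrary set $Y$ of $k$ centers in $\Re^{d_u}$; the quantity I want to control is $|\kmedian_Y(P_u)-\kmedian_Y(\coreset)|$. To do so I would introduce the idealized weighted cost $\Lambda:=\sum_{t\in P_u}\dist(\hat{\sigma}(t),Y)=\sum_{p\in\coreset}n_p\,\dist(p,Y)$, which is the cost one would obtain if every representative carried its exact mapped count $n_p$ instead of the estimated weight $w(p)$. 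I would then bound the target quantity by the triangle inequality through $\Lambda$, splitting into a geometric term $|\kmedian_Y(P_u)-\Lambda|$ and a weight term $|\Lambda-\kmedian_Y(\coreset)|$.

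For the geometric term, since every heavy tuple $t$ and its representative $\hat{\sigma}(t)=s_{\square_t}$ lie in the same processed cell $\square_t$ (which, being a heavy cell, satisfies condition~\eqref{eq:cond}), the triangle inequality gives $|\kmedian_Y(P_u)-\Lambda|\le\sum_{t\in P_u}\dist(t,\hat{\sigma}(t))$. This is a sub-sum of exactly the quantity bounded in Lemma~\ref{lem:coreset1}: the grid construction here is identical (Lines 5--8 of Algorithm~\ref{alg:Determ}), so the same case split on whether $\dist(t,x_{i(t)})\le\Phi$ yields $\sum_{t\in P_u}\dist(t,\hat{\sigma}(t))\le\eps'\kmedian_{\opt(\Q_u(\I))}(\Q_u(\I))\le\eps'\kmedian_Y(\Q_u(\I))$. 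Invoking the second inequality of Lemma~\ref{lem:genLight}, namely $\kmedian_Y(\Q_u(\I))\le(1+4\eps')\kmedian_Y(P_u)$, I would conclude $|\kmedian_Y(P_u)-\Lambda|\le\eps'(1+4\eps')\kmedian_Y(P_u)$.

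For the weight term, Lemma~\ref{lem:tech2} gives $n_p\le w(p)\le(1+4\eps')n_p$ for every $p\in\coreset$ simultaneously with probability $1-1/N^{O(1)}$ (a union bound over the $\mathrm{poly}(N)$ representatives, which is precisely why $M$ is chosen with the $\log(2N^{10d})$ factor). Hence $|w(p)-n_p|\le4\eps'n_p$, and $|\Lambda-\kmedian_Y(\coreset)|\le\sum_{p\in\coreset}|w(p)-n_p|\dist(p,Y)\le4\eps'\Lambda\le4\eps'\bigl(1+\eps'(1+4\eps')\bigr)\kmedian_Y(P_u)$, where the last step uses $\Lambda\le\kmedian_Y(P_u)+|\kmedian_Y(P_u)-\Lambda|$. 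Adding the two terms gives $|\kmedian_Y(P_u)-\kmedian_Y(\coreset)|\le\bigl[\eps'(1+4\eps')+4\eps'(1+\eps'(1+4\eps'))\bigr]\kmedian_Y(P_u)$, and since $\eps'=\eps/34<1/34$ the bracket equals $\eps'\bigl(5+8\eps'+16(\eps')^2\bigr)\le9\eps'$, yielding $(1-9\eps')\kmedian_Y(P_u)\le\kmedian_Y(\coreset)\le(1+9\eps')\kmedian_Y(P_u)$.

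The main obstacle is the bookkeeping of the probability: the bound must hold simultaneously for the charging process of Lemma~\ref{lem:assignment} (used through Lemma~\ref{lem:genLight}) and for the weight estimate of Lemma~\ref{lem:tech2} at every representative, while the final inequality must hold for \emph{all} $k$-center sets $Y$ at once. I would stress that the random choices (the sample sets $H_\square$, the resulting heavy/light labeling, the weights $w(p)$, and the mapping $\hat{\sigma}$) are fixed once the algorithm runs, independently of $Y$; the geometric and charging bounds are then deterministic consequences of this fixed outcome. Thus a single union bound over the sampling events of Lemmas~\ref{lem:assignment} and~\ref{lem:tech2} suffices, and the resulting $9\eps'$-coreset guarantee holds for every $Y$ with probability $1-1/N^{O(1)}$.
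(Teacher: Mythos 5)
Your proof is correct and follows essentially the same route as the paper's: both decompose the coreset error into the geometric assignment error (bounded by re-running the grid argument from Lemma~\ref{lem:coreset1} on the sub-sum over $P_u$) and the weight-estimation error (bounded via Lemma~\ref{lem:tech2}), and both convert $\kmedian_Y(\Q_u(\I))$ back to $\kmedian_Y(P_u)$ via Lemma~\ref{lem:genLight}. Your explicit intermediate quantity $\Lambda$ even makes one step slightly more careful than the paper, which bounds $4\eps'\sum_{t\in P_u}\dist(\hat{\sigma}(t),Y)$ directly by $4\eps'\kmedian_Y(P_u)$ without accounting for the displacement $\sum_{t\in P_u}\dist(t,\hat{\sigma}(t))$; your resulting constant $5\eps'+8(\eps')^2+16(\eps')^3$ still lands comfortably under $9\eps'$.
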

\begin{proof}
    Let $Y$ be an arbitrary set of $k$ points in $\Re^{d_u}$, as we had in Lemma~\ref{lem:coreset1}.
    From Lemma~\ref{lem:tech2}, let $\eps_p\in [0,4\eps']$ be a real number such that $w(p)=(1+\eps_p)n_p$, for $p\in \coreset$.
    With probability at least $1-\frac{1}{N^{O(1)}}$, we have,
    \begin{align*}
    \mathcal{E}&=|\kmedian_Y(P_u)-\kmedian_Y(\coreset)|=|\sum_{t\in P_u}\dist(t,Y)-\sum_{p\in \coreset}w(p)\dist(p,Y)|= |\sum_{t\in P_u}\dist(t,Y)-\sum_{p\in \coreset}(1+\eps_p)n_{p}\dist(p,Y)|\\&=|\sum_{t\in P_u}\dist(t,Y)-\sum_{t\in P_u}(1+\eps_{\hat{\sigma}(t)})\dist(\hat{\sigma}(t),Y)| 
    \leq \sum_{t\in P_u}|\dist(t,Y)-(1+\eps_{\hat{\sigma}(t)})\dist(\hat{\sigma}(t),Y)|.
    \end{align*}
    This inequality also follows from Lemma~\ref{lem:tech2}. Indeed, each $\hat{\sigma}(t)$ has a weight that is $(1+\eps_{\hat{\sigma}(t)})$ times larger than the number of points in $P_u$ that are assigned to $\hat{\sigma}(t)$.
We have,
    \begin{align*}
        \mathcal{E}&\leq \sum_{t\in P_u}|\dist(t,Y)-(1+\eps_{\hat{\sigma}(t)})\dist(\hat{\sigma}(t),Y)|\leq \sum_{t\in P_u}|\dist(t,Y)-\dist(\hat{\sigma}(t),Y)|+4\eps'\sum_{t\in P_u}\dist(\hat{\sigma}(t),Y)\\&\leq \sum_{t\in P_u}\dist(t,\hat{\sigma}(t))+4\eps'\kmedian_Y(P_u).
    \end{align*}
    Following the proof of Lemma~\ref{lem:coreset1} we have 
    $\sum_{t\in P_u}\dist(t,\hat{\sigma}(t))\leq \eps'\kmedian_Y(\Q_u(\I))$.
    From Lemma~\ref{lem:genLight} we get $\kmedian_Y(\Q_u(\I))\leq (1+4\eps')\kmedian_Y(P_u)$. We conclude that $\mathcal{E}\leq (5\eps'+4(\eps')^2)\kmedian_Y(P_u)\leq 9\eps'\kmedian_Y(P_u)$.
\end{proof}

Form Lemma~\ref{lem:coreset2}, we conclude to the main result.
\begin{lemma}
\label{lem:main2}
If $\kmedianAlg_\gamma$ is used, then $\ret\subset \Re^d$ and 
$\kmedian_{\ret}(\Q(\I))\leq r_u\leq (1+\eps)\gamma\kmedian_{\opt(\Q(\I))}(\Q(\I))$, with probability at least $1-\frac{1}{N^{O(1)}}$. If $\DkmedianAlg_\gamma$ is used, then 
$\ret\subseteq \Q_u(\I)$ and
$\kmedian_\ret(\Q_u(\I))\leq r_u\leq (2+\eps)\gamma \kmedian_{\optD(\Q_u(\I))}(\Q_u(\I))$, with probability at least $1-\frac{1}{N^{O(1)}}$.
\end{lemma}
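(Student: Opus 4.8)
The plan is to follow the same template as the proof of Lemma~\ref{lem:correct1}, but with the deterministic coreset guarantee replaced by its randomized analogue. Concretely, I would first condition on the joint success of the high-probability events in Lemmas~\ref{lem:tech2}, \ref{lem:assignment}, \ref{lem:genLight} and~\ref{lem:coreset2}; since each fails with probability $1/N^{O(1)}$ and the algorithm processes only $\poly(|\kapprox|,\eps^{-1},\log N)$ cells, a union bound keeps the total failure probability at $1/N^{O(1)}$. On this event I have exactly two facts to combine: $\coreset$ is a $9\eps'$-coreset of the heavy tuples $P_u$ (Lemma~\ref{lem:coreset2}), and the heavy tuples sandwich all of $\Q_u(\I)$ in the sense that $\kmedian_Y(P_u)\le \kmedian_Y(\Q_u(\I))\le (1+4\eps')\kmedian_Y(P_u)$ for every set $Y$ of $k$ centers (Lemma~\ref{lem:genLight} together with $P_u\subseteq\Q_u(\I)$).

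The geometric case then follows from two chains of inequalities. For the lower bound on $r_u$, I push the cost of the returned solution $\ret$ from $\Q_u(\I)$ down to the coreset: $\kmedian_\ret(\Q_u(\I))\le (1+4\eps')\kmedian_\ret(P_u)\le \frac{1+4\eps'}{1-9\eps'}\kmedian_\ret(\coreset)=r_u$, which is precisely why the algorithm defines $r_u$ with the asymmetric factor $\frac{1+4\eps'}{1-9\eps'}$. For the upper bound on $r_u$, I use the $\gamma$-approximation guarantee $\kmedian_\ret(\coreset)\le\gamma\,\kmedian_{\opt(\coreset)}(\coreset)$ and bound $\kmedian_{\opt(\coreset)}(\coreset)$ from above by pulling the true optimum back to the coreset as a candidate set: $\kmedian_{\opt(\coreset)}(\coreset)\le \kmedian_{\opt(\Q_u(\I))}(\coreset)\le (1+9\eps')\kmedian_{\opt(\Q_u(\I))}(P_u)\le (1+9\eps')\kmedian_{\opt(\Q_u(\I))}(\Q_u(\I))$. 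Multiplying through gives $r_u\le \frac{(1+4\eps')(1+9\eps')}{1-9\eps'}\gamma\,\kmedian_{\opt(\Q_u(\I))}(\Q_u(\I))$, and a routine check that $\frac{(1+4\eps')(1+9\eps')}{1-9\eps'}\le 1+34\eps'=1+\eps$ for $\eps'=\eps/34\in(0,1/34)$ closes the geometric case.

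For the discrete case, note that $\ret\subseteq\coreset\subseteq\Q_u(\I)$, so $\ret$ is automatically a feasible discrete solution; the only change is that $\DkmedianAlg_\gamma$ guarantees $\kmedian_\ret(\coreset)\le\gamma\,\kmedian_{\optD(\coreset)}(\coreset)$, and I would replace the discrete optimum on the coreset by the geometric one at the cost of a factor two, $\kmedian_{\optD(\coreset)}(\coreset)\le 2\,\kmedian_{\opt(\coreset)}(\coreset)$, exactly the discrete-versus-geometric relation recorded in the preliminaries. Repeating the same chain then propagates this factor $2$ and yields the $(2+\eps)\gamma$ bound relative to $\kmedian_{\optD(\Q_u(\I))}(\Q_u(\I))$ after the analogous constant adjustment. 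The main obstacle — and the one place where the randomized analysis genuinely differs from the deterministic Lemma~\ref{lem:correct1} — is that $\coreset$ faithfully represents only the heavy tuples $P_u$, not all of $\Q_u(\I)$; everything therefore has to be routed through $P_u$, and Lemma~\ref{lem:genLight} must be invoked in \emph{opposite} directions on the two sides of the argument (pushing $\ret$'s cost \emph{up} from $P_u$ to $\Q_u(\I)$, while the optimum for $\Q_u(\I)$ is instead \emph{restricted} to and evaluated on the subset $P_u$), which is exactly what forces the asymmetric definition of $r_u$ and makes the bookkeeping the delicate part of the proof.
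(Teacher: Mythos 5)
Your proposal is correct and follows essentially the same route as the paper: condition on the high-probability events, use Lemma~\ref{lem:coreset2} to relate $\coreset$ to $P_u$, and use Lemma~\ref{lem:genLight} to pass between $P_u$ and $\Q_u(\I)$, with the same asymmetric factor $\frac{1+4\eps'}{1-9\eps'}$ in the definition of $r_u$. The only (immaterial) difference is that in the upper bound you plug $\opt(\Q_u(\I))$ into the coreset where the paper plugs $\opt(P_u)$; both yield the identical constant $\frac{(1+4\eps')(1+9\eps')}{1-9\eps'}\le 1+34\eps'$, and your discrete-case handling via $\kmedian_{\optD(\coreset)}(\coreset)\le 2\,\kmedian_{\opt(\coreset)}(\coreset)$ matches the paper's appendix argument.
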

\begin{proof}
    We first consider the case where $\kmedianAlg_\gamma$ is used.
   From Lemma~\ref{lem:coreset2}, we have that for any set of $k$ points $Y$ in $\Re^{d_u}$, $(1-9\eps')\kmedian_Y(P_u)\leq \kmedian_Y(\coreset)\leq (1+9\eps')\kmedian_Y(P_u)$, with probability at least $1-\frac{1}{N^{O(1)}}$.
    By definition, 
    \begin{equation}
        \label{eq:corfinal}
        \kmedian_{\ret}(\coreset)\leq \gamma\kmedian_{\opt(\coreset)}(\coreset)\leq \gamma\kmedian_{\opt(P_u)}(\coreset)\leq (1+9\eps')\gamma\kmedian_{\opt(P_u)}(P_u).
    \end{equation}
    The last inequality follows by the definition of the coreset for $Y=\opt(P_u)$.
    Since $P_u\subseteq \Q_u(\I)$ and $\opt(P_u), \opt(\Q_u(\I))\subset \Re^d$, it also holds that $\kmedian_{\opt(P_u)}(P_u)\leq \kmedian_{\opt(\Q_u(\I))}(\Q_u(\I))$.

    From Lemma~\ref{lem:genLight} (for $Y=\ret$) we have $\kmedian_{\ret}(\Q_u(\I))\leq (1+4\eps')\kmedian_{\ret}(P_u)$. Hence,
    
   \begin{align*}
       \kmedian_{\ret}(\Q_u(\I))&\leq (1+4\eps')\kmedian_{\ret}(P_u)\leq \frac{1+4\eps'}{1-9\eps'}\kmedian_{\ret}(\coreset)=r_u\leq \frac{(1+4\eps')(1+9\eps')}{1-9\eps'}\gamma\kmedian_{\opt(\Q_u(\I))}(\Q_u(\I))\\
       &\leq (1+34\eps')\gamma\kmedian_{\opt(\Q_u(\I))}(\Q_u(\I))=(1+\eps)\gamma\kmedian_{\opt(\Q_u(\I))}(\Q_u(\I)).
   \end{align*}
 The proof using the $\DkmedianAlg_\gamma$ algorithm is shown in Appendix~\ref{appndx:fast}.
\end{proof}

\mparagraph{Running time}
The total number of cells that the algorithm processes is $O(|\kapprox|\eps^{-d_u}\log N)$. In each cell, we take $M=O(|\kapprox|\eps^{-d_u-3}\log^2 N)$ samples. Using Lemma~\ref{lem:Rects}, for each cell we spend $O(N+M\log N)$ time to get the set of samples. For each sample, we check whether it belongs to a heavy cell in $B$. This can be checked trivially in $O(|B|)=O(|\kapprox|\eps^{-d_u}\log N)$ time, or in $O(\log^{d_u}(|\kapprox|\eps^{-d_u}\log N))$ time using a dynamic geometric data structure for stabbing queries~\cite{agarwal1999geometric, sun2019parallel}.
Finally, the standard algorithm for $k$-median clustering on $O(|\kapprox|\eps^{-d_u}\log N)$ points takes $O(\timeMedian_\gamma(|\kapprox|\eps^{-d_u}\log N))$ time.
The overall time of the algorithm
is
$$O\left(\!N|\kapprox|\eps^{-d_u}\log(N)\!+\! |\kapprox|^2\eps^{-2d_u-3}\log^3 (N)\min\left\{|\kapprox|\eps^{-d_u}\log (N), \log^{d_u}(|\kapprox|)\right\} \!+\!\timeMedian_\gamma(|\kapprox|\eps^{-d_u}\log N)\right).$$

\begin{theorem}
    \label{coreset2}
    Let $\I$ be a database instance with $N$ tuples, $\Q$ be an acyclic join query over a set of attributes $\allattr$, and $\allattr_u\subseteq \allattr$. Given a set $X\subset \Re^d$, a constant $\alpha$ such that $\kmedian_X(\Q_u(\I))\leq \alpha \kmedian_{\opt(\Q_u(\I))}(\Q_u(\I))$, and a constant parameter $\eps\in(0,1)$, there exists an algorithm that computes a set $\ret\subset \Re^d$ of $k$ points and a number $r_u$ in $O(|\kapprox|N\log N+|X|^2\log^3(N)\min\{\log^{d_u}(|X|),|X|\log N\}+\timeMedian_\gamma(|\kapprox|
\log N))$ such that $\kmedian_{\ret}(\Q_u(\I))\leq r_u\leq (1+\eps)\gamma\kmedian_{\opt(\Q_u(\I))}(\Q_u(\I))$, with probability at least $1-\frac{1}{N^{O(1)}}$. There also exists an algorithm that computes a set $\ret\subseteq \Q_u(\I)$ of $k$ points and a number $r_u$ with the same running time, such that $\kmedian_{\ret}(\Q_u(\I))\leq r_u\leq (2+\eps)\gamma\kmedian_{\optD(\Q_u(\I))}(\Q_u(\I))$, with probability at least $1-\frac{1}{N^{O(1)}}$.
\end{theorem}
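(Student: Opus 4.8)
The plan is to obtain the theorem as a direct assembly of the correctness lemmas proved for Algorithm~\ref{alg:Random} together with the running-time bookkeeping developed just above the statement, since essentially all of the technical work has already been isolated into the supporting lemmas. There is no new argument to invent here; the task is to package Lemma~\ref{lem:main2} and the time analysis into the two stated claims and to verify that the constants and probabilities line up.

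For the approximation guarantee I would simply invoke Lemma~\ref{lem:main2}. When the subroutine $\kmedianAlg_\gamma$ is used, that lemma already shows that the returned set $\ret\subset\Re^{d_u}$ and the number $r_u=\frac{1+4\eps'}{1-9\eps'}\kmedian_\ret(\coreset)$ satisfy
$$\kmedian_\ret(\Q_u(\I))\leq r_u\leq (1+\eps)\gamma\,\kmedian_{\opt(\Q_u(\I))}(\Q_u(\I))$$
with probability at least $1-\frac{1}{N^{O(1)}}$, which is exactly the first claim. When $\DkmedianAlg_\gamma$ is used instead, the same lemma returns $\ret\subseteq\Q_u(\I)$ with the $(2+\eps)\gamma$ bound against $\optD(\Q_u(\I))$, giving the second claim. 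Here $r_u$ is needed precisely because $\kmedian_\ret(\Q_u(\I))$ cannot be evaluated directly on relational data, whereas $\kmedian_\ret(\coreset)$ is computable on the small weighted set $\coreset$; the two-sided inequality of Lemma~\ref{lem:coreset2} is what lets $\kmedian_\ret(\coreset)$ serve as a reliable estimate of the true cost.

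For the running time I would add up the per-cell costs over the $O(|\kapprox|\eps^{-d_u}\log N)$ processed cells. Drawing $M=O(|\kapprox|\eps^{-d_u-3}\log^2 N)$ samples per cell via Lemma~\ref{lem:Rects} costs $O(N+M\log N)$, and testing each sample for membership in a previously processed heavy cell of $B$ costs either $O(|B|)$ naively or $O(\log^{d_u}|B|)$ with a dynamic stabbing data structure~\cite{agarwal1999geometric, sun2019parallel}, which produces the $\min\{\log^{d_u}(|X|),|X|\log N\}$ factor. Summing these contributions, adding the single final call $\timeMedian_\gamma(|\kapprox|\eps^{-d_u}\log N)$ on $\coreset$, and treating $\eps$ as a fixed constant (so that $\eps^{-d_u}$ and $\eps^{-2d_u-3}$ are absorbed into the constant) reproduces the stated bound.

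The only genuine subtlety, and the step I would guard most carefully, is the high-probability claim: each invocation of the Chernoff bound inside Lemma~\ref{lem:tech2} and Lemma~\ref{lem:assignment} fails with probability at most $N^{-O(d)}$ thanks to the $\log(2N^{10d})$ term baked into $M$, and there are only polynomially many (namely $O(|\kapprox|\eps^{-d_u}\log N)$) processed cells. A single union bound over all cells therefore preserves the overall $1-\frac{1}{N^{O(1)}}$ success probability, and I would check that the exponent hidden in $N^{O(1)}$ remains large enough after this union bound. Everything else is routine arithmetic already carried out in Lemmas~\ref{lem:tech2}--\ref{lem:main2}, so beyond this probabilistic accounting the proof is a short combination of the pieces.
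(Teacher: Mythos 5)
Your proposal matches the paper's own treatment: the paper proves Theorem~\ref{coreset2} exactly by invoking Lemma~\ref{lem:main2} for both the geometric and discrete approximation guarantees and by the per-cell running-time accounting given in the paragraph immediately preceding the theorem (with $\eps$ constant absorbing the $\eps^{-d_u}$ factors). Your added care about union-bounding the Chernoff failures over the polynomially many processed cells is a detail the paper leaves implicit but is handled correctly by the $\log(2N^{10d})$ term in $M$, so the proposal is correct and takes essentially the same route.
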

\section{Efficient algorithms}
\label{sec:kmedianAlg}
We use the results from Section~\ref{sec:coresets} to describe a complete algorithm for the relational $k$-median clustering. In the previous section, we saw how we can get a $(1+\eps)\gamma$-approximation algorithm if a set $\kapprox$ along with a number $r$ such that $\kmedian_\kapprox(\Q(\I))\leq r\leq O(1)\kmedian_{\opt(\Q(\I))}\Q(\I)$ are given. Essentially, in this section we focus on computing such a set $\kapprox$ and number $r$, efficiently.
\subsection{Acyclic queries}
\label{subsec:algmedian}

We construct a balanced binary tree $\tree$ such that the $j$-th leaf node stores the $j$-th attribute $A_j\in \allattr$ (the order is arbitrary).
For a node $\node$ of $\tree$, let $\tree_\node$ be the subtree of $\tree$ rooted at $\node$.
Let $\allattr_\node$ be the set of attributes stored in the leaf nodes of $\tree_\node$. For every node $\node$, our algorithm computes a set $\ret_\node$ of cardinality $k$ and a real positive number $r_\node$ such that
\begin{equation}
    \label{eq:goal}
    \kmedian_{\ret_\node}(\Q_\node(\I))\leq r_\node\leq (1+\eps)\gamma\kmedian_{\opt(\Q_\node(\I))}(\Q_\node(\I)),
\end{equation}
for the relational $k$-median clustering. For the discrete relational $k$-median clustering, our algorithm computes a set $\ret_\node$ of cardinality $k$ and a real positive number $r_\node$ such that
\begin{equation}
    \label{eq:goal2}
    \kmedian_{\ret_\node}(\Q_\node(\I))\leq r_\node\leq (2+\eps)\gamma\kmedian_{\optD(\Q_\node(\I))}(\Q_\node(\I)),
\end{equation}

\begin{algorithm}[t]
\caption{{\sc Rel-k-Median}$(\Q, \I, \allrel, \eps, u)$}
\label{alg:FullAlg}
    \If{$u$ is a leaf node}{
        Let $R_j$ be a relation in $\allrel$ such that $A_u\in \allattr_j$\;
        Construct join tree $T$ of $\Q$ with $R_j$ in the root\;
        Run Yannakakis algorithm on $T$ to compute $c(h)=|\{t\in \Q(\I)\mid \pi_{\allattr_j}(t)=h\}|$, $\forall h\in R_j$\;
        $H_u=\pi_{A_u}(R_j)$\;
        \ForEach{$p\in H_u$}{$w(p)=\sum_{h\in R_j, \pi_{A_u}(h)=p}c(h)$\;
        }
        $\ret_u=\textsf{Opt\_1D\_kMedianAlg}((H_u,w(\cdot)))$;
        $\quad r_u=\kmedian_{\ret_u}(H_u)$\;
    }\Else{
        Let $v$ and $z$ be the children of $u$\;
        $\kapprox=\ret_v\times \ret_z$; 
        $\quad r=r_v+r_z$\;
        $(\ret_u, r_u)\leftarrow\textsf{RelClusteringFast}(\Q, \I, \allattr_u, \kapprox, (1+\eps)\gamma\sqrt{2}, r, \eps)$\;
    }
    \Return $(\ret_u, r_u)$\;
\end{algorithm}

The definition of $\Q_\node(\I)$ is the same as in the previous section: $\all=\widebar{\pi}_{\allattr_u}(\Q(\I)$). Similarly, we use the notation $d_\node=|\allattr_\node|$.
For a finite set $Y\subset \Re^d$ and a point $t\in \Re^d$ let $\nn(Y,t)=\argmin_{y\in Y}\dist(t,y)$, be the nearest neighbor of $t$ in $Y$.
Recall that for any set $Y\subset \Re^{d_\node}$,
$\kmedian_Y(\Q_\node(\I))=\sum_{t\in \Q(\I)}\dist(\pi_{\allattr_\node}(t),Y)=\sum_{t\in\Q(\I)}\sqrt{\sum_{A_j\in \allattr_\node}(\pi_{A_j}(t)-\pi_{A_j}(\nn(Y,\pi_{\allattr_\node}(t))))^2}.$

The algorithm we propose works bottom-up on tree $\tree$. If $\node$ has children $v, z$ we use the pre-computed $\ret_v, \ret_z$ and $r_v, r_z$ to compute $\ret_\node$ and $r_\node$.

\paragraph{Algorithm} First, we run the Yannakakis algorithm~\cite{yannakakis1981algorithms} on $\Q(\I)$ and we keep only the non-dangling tuples in $\I$, i.e., tuples in $\I$ that belong to at least one join result $\Q(\I)$. For any node $u$ of $\mathcal{T}$, in Algorithm~\ref{alg:FullAlg} we show the pseudocode for the geometric version of relational $k$-median clustering, computing $\ret_u$ and $r_u$.

Let $\node$ be a leaf node where $\allattr_\node$ contains one attribute $A_\node\in \allattr$. This is the only case that we try to implicitly construct $\Q_\node(\I)$ as a set of $O(N)$ points in $\Re^1$ with cardinalities (weights).
More specifically, we construct the weighted set of points in $\Re^1$, $H_\node=\pi_{A_\node}(\Q(\I))$ such that $p\in H_\node$ has weight $w(p)=|\{t\in \Q(\I)\mid \pi_{A_\node}(t)=p\}|$. 
Notice that $H_u$ is a set and not a multi-set. We can compute $H_u$ and the weight function $w(\cdot)$ as follows.
Let $R_j$ be an arbitrary relation from $\allrel$ that contains the attribute $A_\node$.
We use the counting version of Yannakakis algorithm to count the number of times that a tuple belongs in $\Q(\I)$. More specifically, we construct the join tree for $\Q$ and choose $R_j$ as its root. Using Yannakakis algorithm, for every tuple $h$ in the root $R_j$ we compute $c(h)=|\{t\in \Q(\I)\mid \pi_{\allattr_j}(t)=h\}|$.
By grouping together tuples from $R_j$ with the same value on attribute $A_\node$, we compute $H_u$ and the weight function $w(\cdot)$. More specifically, we set $H_u=\pi_{A_u}(R_j)$ and for each $p\in H_u$, we set $w(p)=\sum_{h\in R_j, \pi_{A_u}(h)=p}c(h)$.
Then, we execute the exact algorithm for the $1$-dimensional weighted $k$-median problem in the standard computational setting~\cite{gronlund2017fast, wu1991optimal, fleischer2006online} on set $H_u$ with weights $w(\cdot)$.
Let $\ret_\node$ be the returned set of $k$ centers. 
We also compute $r_\node=\kmedian_{\ret_\node}(H_u)$.

Next, assume that $\node$ is an inner node of $\tree$ with two children $v, z$. 
The algorithm sets $r=r_v+r_z$ and $\kapprox=\ret_v\times \ret_z$. Then, we run the algorithm from Theorem~\ref{coreset2} (or Theorem~\ref{coreset1}) using $\kapprox$ and $r$ as input  and compute $\ret_\node$ and $r_\node$.
More specifically, for the relational $k$-median clustering we call $\textsf{RelClusteringFast}(\Q, \I, \allattr_u, \kapprox, (1+\eps)\gamma\sqrt{2}, r, \eps)$
, while for the discrete relational $k$-median clustering we call $\textsf{RelClusteringFast}(\Q, \I, \allattr_u, \kapprox, 2(2+\eps)\gamma\sqrt{2}, r, \eps)$.
Let $\rootnode$ be the root of $\tree$. We return the set $\ret=\ret_\rootnode$. 


\paragraph{Correctness}
As we explained above, for the leaf nodes, the algorithm is simple. Let $\node$ be an intermediate node and let $v$ and $z$ be the two child nodes of $u$. Assuming that $\ret_v, r_v$ and $\ret_z, r_z$ satisfy the Equation~\eqref{eq:goal} (resp. Equation~\eqref{eq:goal2} for the discrete relational $k$-median), we show that $\ret_\node$ and $r_\node$ satisfy Equation~\eqref{eq:goal} (resp. Equation~\eqref{eq:goal2} for the discrete relational $k$-median).
If we prove that $\kmedian_{\kapprox}(\all)\leq r_u\leq \alpha \kmedian_{\opt(\all)}(\all)$, then the correctness follows from Theorem~\ref{coreset2}. The full proof of the next lemma can be found in Appendix~\ref{appndx:finalAlg}.

\begin{lemma}
    \label{lem:finalApprox}

    If $\kmedianAlg_\gamma$ is used, then $\kmedian_{\kapprox}(\all)\leq r_u\leq \alpha \kmedian_{\opt(\all)}(\all)$, for $\alpha=(1+\eps)\gamma\sqrt{2}$. If $\DkmedianAlg_\gamma$ is used, then $\kmedian_{\kapprox}(\all)\leq r_u\leq \alpha \kmedian_{\opt(\all)}(\all)$, for $\alpha=2(2+\eps)\gamma\sqrt{2}$.    
\end{lemma}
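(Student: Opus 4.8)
The plan is to exploit that the attribute sets of the two children partition those of the parent, i.e., $\allattr_u=\allattr_v\sqcup\allattr_z$, so every $t\in\Q_u(\I)$ splits as $t=(t_v,t_z)$ with $t_v=\pi_{\allattr_v}(t)$ and $t_z=\pi_{\allattr_z}(t)$, and the Euclidean distance decouples: for any $c=(c_v,c_z)\in\ret_v\times\ret_z$ we have $\dist(t,c)^2=\dist(t_v,c_v)^2+\dist(t_z,c_z)^2$. Hence the nearest point of $\kapprox=\ret_v\times\ret_z$ to $t$ is $(\nn(\ret_v,t_v),\nn(\ret_z,t_z))$, giving $\dist(t,\kapprox)=\sqrt{\dist(t_v,\ret_v)^2+\dist(t_z,\ret_z)^2}$. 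Throughout I would use that summing a projected cost over the multiset $\Q_u(\I)$ equals summing over $t\in\Q(\I)$ with the corresponding projection, since $\Q_v(\I)=\widebar{\pi}_{\allattr_v}(\Q_u(\I))$ and $\Q_z(\I)=\widebar{\pi}_{\allattr_z}(\Q_u(\I))$ as multisets. I read the quantity the statement calls $r_u$ as the value $r=r_v+r_z$ that the algorithm feeds to $\textsf{RelClusteringFast}$, since that is exactly the precondition Theorem~\ref{coreset2} requires.

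For the left inequality I would apply $\sqrt{a^2+b^2}\le a+b$ to the displayed expression for $\dist(t,\kapprox)$ and sum over $\Q_u(\I)$, obtaining $\kmedian_{\kapprox}(\Q_u(\I))\le\kmedian_{\ret_v}(\Q_v(\I))+\kmedian_{\ret_z}(\Q_z(\I))$. The left halves of the inductive hypotheses (Eq.~\eqref{eq:goal}, resp.\ Eq.~\eqref{eq:goal2}) bound these terms by $r_v$ and $r_z$, so $\kmedian_{\kapprox}(\Q_u(\I))\le r_v+r_z=r$, which is the left half of the claim in both cases.

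For the right inequality in the geometric case I would take $O=\opt(\Q_u(\I))$ and project it to $O_v=\pi_{\allattr_v}(O)$ and $O_z=\pi_{\allattr_z}(O)$, each a feasible set of at most $k$ centers. For each $t$ with nearest center $o=(o_v,o_z)\in O$, since $o_v\in O_v$ and $o_z\in O_z$, I get $\dist(t_v,O_v)+\dist(t_z,O_z)\le\|t_v-o_v\|+\|t_z-o_z\|\le\sqrt{2}\,\dist(t,O)$ using $a+b\le\sqrt{2}\sqrt{a^2+b^2}$; summing yields $\kmedian_{O_v}(\Q_v(\I))+\kmedian_{O_z}(\Q_z(\I))\le\sqrt{2}\,\kmedian_{\opt(\Q_u(\I))}(\Q_u(\I))$. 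Feasibility of $O_v,O_z$ gives $\kmedian_{\opt(\Q_v(\I))}(\Q_v(\I))\le\kmedian_{O_v}(\Q_v(\I))$ and likewise for $z$, so chaining with the right halves of the hypotheses produces $r=r_v+r_z\le(1+\eps)\gamma\bigl[\kmedian_{\opt(\Q_v(\I))}(\Q_v(\I))+\kmedian_{\opt(\Q_z(\I))}(\Q_z(\I))\bigr]\le(1+\eps)\gamma\sqrt{2}\,\kmedian_{\opt(\Q_u(\I))}(\Q_u(\I))$, i.e.\ $\alpha=(1+\eps)\gamma\sqrt{2}$.

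For the discrete case the same projection argument applies, but the hypothesis benchmarks against $\optD$ while Theorem~\ref{coreset2} states its precondition against the geometric $\opt(\Q_u(\I))$; this mismatch is exactly where the extra factor of $2$ enters. I would project $\optD(\Q_u(\I))\subseteq\Q_u(\I)$, whose coordinates land in $\Q_v(\I)$ and $\Q_z(\I)$ and are therefore feasible \emph{discrete} centers, to get $\kmedian_{\optD(\Q_v(\I))}(\Q_v(\I))+\kmedian_{\optD(\Q_z(\I))}(\Q_z(\I))\le\sqrt{2}\,\kmedian_{\optD(\Q_u(\I))}(\Q_u(\I))$, then invoke $\kmedian_{\optD(\Q_u(\I))}(\Q_u(\I))\le 2\,\kmedian_{\opt(\Q_u(\I))}(\Q_u(\I))$; combined with Eq.~\eqref{eq:goal2} this gives $r\le 2(2+\eps)\gamma\sqrt{2}\,\kmedian_{\opt(\Q_u(\I))}(\Q_u(\I))$, i.e.\ $\alpha=2(2+\eps)\gamma\sqrt{2}$. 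I expect the main obstacle to be precisely this bookkeeping: aligning the optimum appearing in the inductive hypothesis with the geometric optimum demanded by Theorem~\ref{coreset2}, while verifying that projecting centers preserves feasibility and that multiplicities are preserved under further projection.
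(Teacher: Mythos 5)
Your proposal is correct and follows essentially the same route as the paper's proof: decompose the Euclidean distance over the product structure $\kapprox=\ret_v\times\ret_z$ so the nearest neighbor factors coordinate-wise, use $\sqrt{a+b}\le\sqrt{a}+\sqrt{b}$ for the lower bound and $a+b\le\sqrt{2}\sqrt{a^2+b^2}$ together with the projections of the parent optimum for the upper bound, and chain through the inductive guarantees on $r_v,r_z$ (your reading of the quantity in the statement as $r=r_v+r_z$, the input to $\textsf{RelClusteringFast}$, matches the paper's intent). The only differences are cosmetic: you justify the factorization of the nearest neighbor directly from the decoupling of squared distances where the paper argues by contradiction, and in the discrete case you apply the $\kmedian_{\optD}\le 2\,\kmedian_{\opt}$ conversion at the parent node whereas the paper applies it to the children, both yielding the same constant $2(2+\eps)\gamma\sqrt{2}$.
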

\begin{proof}
    We focus on the case where $\kmedianAlg_\gamma$ is used.
    Let $\mathcal{O}_v=\pi_{\allattr_v}(\opt(\all))$, and $\mathcal{O}_z=\pi_{\allattr_z}(\opt(\all))$. We define $\mathcal{O}=\mathcal{O}_v\times \mathcal{O}_z$. Notice that $\opt(\all)\subseteq \mathcal{O}$ so $\kmedian_{\mathcal{O}}(\all)\leq \kmedian_{\opt(\all)}(\all)$.
    We have,
    \begin{align*}
        \kmedian&_{\kapprox}(\all)=\sum_{t\in \Q(\I)}||\pi_{\allattr_\node}(t)-\nn(\kapprox,\pi_{\allattr_\node}(t))||\\&=\sum_{t\in\Q(\I)}\!\!\!\!\sqrt{||\pi_{\allattr_v}(t)-        
        \pi_{\allattr_v}(\nn(\kapprox,\pi_{\allattr_\node}(t)))||^2+||\pi_{\allattr_z}(t)-\pi_{\allattr_z}(\nn(\kapprox,\pi_{\allattr_\node}(t)))||^2}\\
        &\leq \sum_{t\in\Q(\I)}||\pi_{\allattr_v}(t)-\nn(\ret_v,\pi_{\allattr_v}(t))||+\sum_{t\in\Q(\I)}||\pi_{\allattr_z}(t)-\nn(\ret_z,\pi_{\allattr_z}(t))|| \leq r_v+r_z\\
        &\leq (1+\eps)\gamma\left(\sum_{t\in\Q(\I)}\!\!\!\!||\pi_{\allattr_v}(t)-\nn(\opt(\Q_v(\I)),\pi_{\allattr_v}(t))||\!+\!\!\!\sum_{t\in\Q(\I)}\!\!\!\!||\pi_{\allattr_z}(t)-\nn(\opt(\Q_z(\I)),\pi_{\allattr_z}(t))||\right)\\
        &\leq (1+\eps)\gamma\sqrt{2}\sum_{t\in \Q(\I)}\sqrt{||\pi_{\allattr_v}(t)-\pi_{\allattr_v}(\nn(\mathcal{O},\pi_{\allattr_\node}(t)))||^2+||\pi_{\allattr_z}(t)-\pi_{\allattr_z}(\nn(\mathcal{O},\pi_{\allattr_\node}(t)))||^2}\\
        &=(1+\eps)\gamma\sqrt{2}\sum_{t\in\Q(\I)}||\pi_{\allattr_\node}(t)-\nn(\mathcal{O},\pi_{\allattr_\node}(t))||\leq (1+\eps)\gamma\sqrt{2}\cdot \kmedian_{\opt(\all)}(\all).
    \end{align*}
    Similarly, if $\DkmedianAlg_\gamma$ is used, we have $\kmedian_{\kapprox}(\all)\!\leq\! r_u\!\leq\! 2(2+\eps)\gamma\sqrt{2}\cdot\kmedian_{\opt(\all)}(\all)$.
\end{proof}


\paragraph{Running time}
When $u$ is an inner node we spend $O(k^2)$ time to construct $X$ and then we run $\textsf{RelClusteringFast}(\Q, \I, \allattr_u, \kapprox, (1+\eps)\gamma\sqrt{2}, r, \eps)$, so the running time follows from Theorem~\ref{coreset2}.
When $u$ is a leaf node, all values $c(h)$ are computed in $O(N)$ time, with high probability. Hence $H_u$ and the weight function $w(\cdot)$ are computed in $O(N)$ time, with high probability. 
From~\cite{gronlund2017fast, wu1991optimal, fleischer2006online},
$\ret_\node$ is computed is $O(k\cdot N)$ time.

Putting everything together, we conclude with the next theorem.



\begin{theorem}
    \label{thm:mainRes}
    Given an acyclic join query $\Q$, a database instance $\I$, a parameter $k$, and a constant parameter $\eps\in(0,1)$, there exists an algorithm that computes a set $\ret\subset \Re^d$ of $k$ points such that $\kmedian_{\ret}(\Q(\I))\leq (1+\eps)\gamma\kmedian_{\opt(\Q(\I))}(\Q(\I))$, with probability at least $1-\frac{1}{N^{O(1)}}$. The running time of the algorithm is 
    $O(N k^2\log(N)+k^4\log^3(N)\min\{\log^{d}(k),k^2\log N\}+\timeMedian_\gamma(k^2\log N))$.
    Furthermore, there exists an algorithm that computes a set $\ret'\subseteq \Q(\I)$ of $k$ points in the same time such that $\kmedian_{\ret'}(\Q(\I))\leq (2+\eps)\gamma\kmedian_{\opt_{\textsf{disc}}(\Q(\I))}(\Q(\I))$, with probability at least $1-\frac{1}{N^{O(1)}}$.
\end{theorem}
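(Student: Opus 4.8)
The plan is to prove the theorem by a bottom-up structural induction on the balanced binary tree $\tree$, carrying as the inductive invariant exactly Equation~\eqref{eq:goal} for the geometric problem (resp. Equation~\eqref{eq:goal2} for the discrete one): at every node $\node$ the pair $(\ret_\node, r_\node)$ returned by Algorithm~\ref{alg:FullAlg} has $|\ret_\node|=k$ and $\kmedian_{\ret_\node}(\Q_\node(\I))\le r_\node\le (1+\eps)\gamma\,\kmedian_{\opt(\Q_\node(\I))}(\Q_\node(\I))$. For the base case I treat a leaf $\node$ with $\allattr_\node=\{A_\node\}$: the algorithm materializes the weighted one-dimensional set $H_\node=\pi_{A_\node}(\Q(\I))$ \emph{exactly}, using the counting version of the Yannakakis algorithm to attach to each value its multiplicity in $\Q(\I)$. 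Since $H_\node$ represents $\Q_\node(\I)$ with no loss, running $\kmedianAlg_\gamma$ on it gives $r_\node=\kmedian_{\ret_\node}(H_\node)\le\gamma\,\kmedian_{\opt(\Q_\node(\I))}(\Q_\node(\I))$, so the invariant holds at the leaves.

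For the inductive step, let $\node$ be internal with children $v,z$ satisfying the invariant. The algorithm sets $\kapprox=\ret_v\times\ret_z$ and $r=r_v+r_z$. Lemma~\ref{lem:finalApprox} shows $\kmedian_{\kapprox}(\Q_\node(\I))\le r\le\alpha\,\kmedian_{\opt(\Q_\node(\I))}(\Q_\node(\I))$ for $\alpha=(1+\eps)\gamma\sqrt{2}$, which is precisely the precondition of Theorem~\ref{coreset2}. Feeding $\kapprox,\alpha,r$ into $\textsf{RelClusteringFast}$ then returns $(\ret_\node,r_\node)$ obeying Equation~\eqref{eq:goal}. The structural point I would stress is that the approximation factor \emph{does not compound} across levels: regardless of how loose the $\alpha$-approximation $\kapprox$ is, Theorem~\ref{coreset2} resets the output quality to $(1+\eps)\gamma$ relative to $\opt(\Q_\node(\I))$. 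Consequently the factor remains $(1+\eps)\gamma$ after all $O(\log d)=O(1)$ levels, and because each node emits exactly $k$ centers we always have $|\kapprox|=k^2$ irrespective of depth. At the root $\rootnode$ we have $\allattr_{\rootnode}=\allattr$, hence $\Q_{\rootnode}(\I)=\Q(\I)$, and the invariant yields the stated guarantee.

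It remains to collect the probability and timing. Each internal invocation of Theorem~\ref{coreset2} fails with probability $1/N^{O(1)}$, so a union bound over the $O(d)=O(1)$ internal nodes keeps the total failure probability at $1/N^{O(1)}$. Substituting $|\kapprox|=k^2$ and $d_u\le d$ into the bound of Theorem~\ref{coreset2} gives $O(Nk^2\log N + k^4\log^3 N\,\min\{\log^{d}k,\,k^2\log N\}+\timeMedian_\gamma(k^2\log N))$ per internal node; the $O(N\log N)$ Yannakakis preprocessing at the leaves is dominated, and since there are only $O(1)$ nodes the total is of the same order (the only care needed is to keep each leaf's one-dimensional clustering call cheap, e.g.\ by reducing $H_\node$ to a coreset before calling $\kmedianAlg_\gamma$, so that the clustering term stays $\timeMedian_\gamma(k^2\log N)$). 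The discrete statement is proved identically, replacing $\kmedianAlg_\gamma$ by $\DkmedianAlg_\gamma$ and passing $\alpha=2(2+\eps)\gamma\sqrt{2}$; the extra factor $2$ absorbs $\kmedian_{\optD}\le 2\kmedian_{\opt}$ when relating the children's discrete guarantees to the geometric $\opt$ required by Theorem~\ref{coreset2}, and the per-node output then satisfies Equation~\eqref{eq:goal2}, giving the $(2+\eps)\gamma$ bound against $\optD(\Q(\I))$ at the root.

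I expect the main obstacle to lie entirely in the combination step. One must verify that the \emph{product} set $\ret_v\times\ret_z$ is at once a constant-factor center set for the higher-dimensional $\Q_\node(\I)$ and that $r_v+r_z$ two-sidedly sandwiches its cost; this is where the orthogonal additivity of the squared Euclidean metric and the resulting $\sqrt2$ loss (Lemma~\ref{lem:finalApprox}) are essential, and where one must also confirm the lower bound $r/((1+\eps)\sqrt2)\le\kmedian_{\kapprox}(\Q_\node(\I))$ that $\textsf{RelClusteringFast}$ relies on. Once Lemma~\ref{lem:finalApprox} and Theorem~\ref{coreset2} are in place, the remainder of the argument is bookkeeping.
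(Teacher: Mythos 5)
Your proof follows essentially the same route as the paper: bottom-up induction on the balanced attribute tree with invariant Equation~\eqref{eq:goal} (resp.\ Equation~\eqref{eq:goal2}), leaves handled exactly via the counting Yannakakis step, internal nodes combined by feeding $\kapprox=\ret_v\times\ret_z$ and $r=r_v+r_z$ through Lemma~\ref{lem:finalApprox} into Theorem~\ref{coreset2}, the same non-compounding observation, the same union bound over $O(1)$ nodes, and the same substitution $|\kapprox|=k^2$ in the running time. Your parenthetical about reducing the leaf set $H_\node$ to a coreset before calling $\kmedianAlg_\gamma$ is a sensible precaution for keeping the clustering term at $\timeMedian_\gamma(k^2\log N)$ (the paper's pseudocode runs the clustering algorithm directly on $H_\node$, which can have $\Theta(N)$ points), but this does not change the substance of the argument.
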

We extend the result of Theorem~\ref{thm:mainRes} for the relational $k$-means clustering in Appendix~\ref{appndx:kmeans}.
Using Theorem~\ref{coreset1} instead of Theorem~\ref{coreset2} we can get a deterministic algorithm for the relational $k$-median clustering problem with the same approximation guarantees that runs in $O(k^{2d+2}N\log^{d+2} N+\timeMedian_\gamma(k^2\log N))$ time.

\subsection{Cyclic queries}
\label{sec:cycRes}
We use the notion of fractional hypertree width~\cite{gottlob2014treewidth} and use a standard procedure to extend our algorithms to every (cyclic) join query $\Q$.
For a cyclic join query $\Q$, we convert it to an equivalent acyclic query such that each relation is the result of a (possibly cyclic) join query with fractional edge cover at most $\fhw(\Q)$. We evaluate the (possibly cyclic) queries to derive the new relations and then apply the algorithm from Section~\ref{subsec:algmedian} on the new acyclic query.
Since it is a typical method in database theory, we give the details in Appendix~\ref{appndx:generalQueries}.
\begin{theorem}
    \label{thm:mainRes}
    Given a join query $\Q$, a database instance $\I$, a parameter $k$, and a constant parameter $\eps\in(0,1)$, there exists an algorithm that computes a set $\ret\subset \Re^d$ of $k$ points such that $\kmedian_{\ret}(\Q(\I))\leq (1+\eps)\gamma\kmedian_{\opt(\Q(\I))}(\Q(\I))$, with probability at least $1-\frac{1}{N^{O(1)}}$. The running time of the algorithm is $O(N^\fhw k^2\log(N)+k^4\log^3(N)\min\{\log^{d}(k),k^2\log N\}+\timeMedian_\gamma(k^2\log N))$.
    Furthermore, there exists an algorithm that computes a set $\ret'\subseteq \Q(\I)$ of $k$ points in the same time such that $\kmedian_{\ret'}(\Q(\I))\leq (2+\eps)\gamma\kmedian_{\opt_{\textsf{disc}}(\Q(\I))}(\Q(\I))$, with probability at least $1-\frac{1}{N^{O(1)}}$. The same results hold for the relational $k$-means clustering problem.
\end{theorem}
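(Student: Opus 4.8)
The plan is to reduce the cyclic case to the acyclic case already established by the first version of Theorem~\ref{thm:mainRes} (Section~\ref{subsec:algmedian}), paying only an $O(N^\fhw)$ overhead to materialize a larger but acyclic instance. First I would invoke the generalized hypertree decomposition of $\Q$ (formalized in Appendix~\ref{appndx:generalQueries}): by definition of $\fhw(\Q)$, there is a tree decomposition of $\Q$ in which the attributes of every bag admit a fractional edge cover of weight at most $\fhw(\Q)$. For each bag $t$ I materialize a relation $R'_t$ by evaluating the (possibly cyclic) subquery induced on its attributes. By the AGM bound~\cite{atserias2013size}, $|R'_t| = O(N^{\fhw(\Q)})$, and a worst-case optimal join algorithm computes each $R'_t$ in $O(N^{\fhw(\Q)})$ time (up to polylogarithmic factors). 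The relations $\{R'_t\}$ define a new \emph{acyclic} join query $\Q'$ over a new instance $\I'$ with $N' = O(N^{\fhw(\Q)})$ tuples per relation.

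Next I would observe that $\Q'(\I') = \Q(\I)$ as multisets of tuples over $\allattr$, so the relational $k$-median instance is literally unchanged by the reduction and $\kmedian_{\opt(\Q'(\I'))}(\Q'(\I')) = \kmedian_{\opt(\Q(\I))}(\Q(\I))$. Applying the acyclic algorithm to $\Q'$ over $\I'$, for which $\fhw(\Q') = 1$, returns a set $\ret$ with $\kmedian_\ret(\Q'(\I')) \leq (1+\eps)\gamma\,\kmedian_{\opt(\Q'(\I'))}(\Q'(\I'))$ with probability $1 - 1/N'^{O(1)}$, in time $O(N' k^2 \log N' + k^4 \log^3 N' \min\{\log^d k, k^2 \log N'\} + \timeMedian_\gamma(k^2 \log N'))$. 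The discrete guarantee $(2+\eps)\gamma$ transfers identically, since the returned centers lie in $\Q'(\I') = \Q(\I)$.

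Finally I would simplify the bounds using data complexity. Since $m$ and $d$ are constants, $\fhw(\Q) = O(1)$, hence $N' = O(N^\fhw)$ and $\log N' = O(\fhw \log N) = O(\log N)$; moreover $N' = \Omega(N)$ because $\fhw \geq 1$, so the success probability $1 - 1/N'^{O(1)}$ is at least $1 - 1/N^{O(1)}$. Substituting, the leading term becomes $O(N^\fhw k^2 \log N)$, the polylogarithmic term collapses to $k^4 \log^3 N \min\{\log^d k, k^2 \log N\}$, and the oracle term to $\timeMedian_\gamma(k^2 \log N)$, exactly as claimed; the $O(N^\fhw)$ preprocessing to build $\{R'_t\}$ is absorbed into the first term. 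The $k$-means statement follows by rerunning the same reduction with the $k$-means subroutines of Appendix~\ref{appndx:kmeans} in place of their $k$-median counterparts.

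The step I expect to be the main obstacle is the correctness of the reduction rather than any single estimate: I must verify that the materialized acyclic query $\Q'$ preserves $\Q(\I)$ \emph{with multiplicities} (these matter because the clustering objective sums over every join result), and that after a Yannakakis sweep the intermediate relations are non-dangling so that the per-relation size and construction time are genuinely $O(N^\fhw)$ and not larger. This is standard database machinery, but it is precisely where the constant-factor and logarithmic bookkeeping feeding the final running time must be made tight.
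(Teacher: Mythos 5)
Your proposal is correct and follows essentially the same route as the paper: build a generalized hypertree decomposition of width $\fhw(\Q)$, materialize each bag in $O(N^{\fhw})$ time to obtain an equivalent acyclic query over relations of size $O(N^{\fhw})$, and then invoke the acyclic algorithm with $N$ replaced by $N^{\fhw}$ (noting $\log N^{\fhw}=O(\log N)$ in data complexity). The multiplicity concern you flag is moot for the top-level reduction, since $\Q(\I)$ is defined as a \emph{set} of tuples over $\allattr$ and the materialized acyclic query yields exactly that set; the multiset projections $\Q_u(\I)$ are then derived from it inside the acyclic algorithm just as before.
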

\vspace{-1em}
\section{Conclusion}
In this paper we propose improved approximation algorithms for the relational $k$-median and $k$-means clustering.
There are multiple interesting open problems derived from this work. It is interesting to check whether our geometric algorithms can be extended other clustering objective functions such as the sum of radii clustering. It is also interesting to use the ideas proposed in this paper to design clustering algorithms on the results of more complex queries such as conjunctive queries with inequalities or conjunctive queries with negation. Finally, someone can study relational clustering under different distance functions such as the Hamming, Jaccard or the cosine distance.




\bibliography{ref}
\newpage
\appendix
\section{Efficient algorithms for relational $k$-means clustering}
\label{appndx:kmeans}
\subsection{Coreset for $k$-means}
\label{sec:kmeans}

In this section, we focus on the relational $k$-means clustering. The algorithm is similar with the algorithm described in section \ref{sec:coresets}. However, there are changes in the analysis of the algorithms that need to be addressed. For simplicity, we focus on the (geometric) relational $k$-means clustering. All the results are straightforwardly extended to the discrete version using $\DkmeansAlg_\gamma$ instead of $\kmeansAlg_\gamma$ and using the fact that $\kmeans_{\opt(Y)}(Y)\leq \kmeans_{\optD(Y)}(Y)\leq 4\kmeans_{\opt(Y)}(Y)$, for any set of tuples $Y$, equivalently to the relational $k$-median clustering problem in Sections~\ref{sec:coresets},~\ref{sec:kmedianAlg}.

We keep the main notation as it is in section \ref{sec:coresets}. We change the definition of the set $\kapprox$ to be a set of points in $\Re^{d_u}$ such that $\kmeans_{\kapprox}(\Q_u(\I))\leq \alpha \cdot\kmeans_{\opt(\Q_u(\I))}(\Q_u(\I))$, where $\alpha>1$ is a constant.
Notice that $$\kmeans_{\kapprox}(\Q_u(\I))=\sum_{t\in \Q(\I)}\dist^2(\pi_{\allattr_u}(t), \kapprox).$$
We also assume that $r$ is a real number such that $\kmeans_{\kapprox}(\Q_u(\I))\leq r\leq \alpha \cdot\kmeans_{\opt(\Q_u(\I))}(\Q_u(\I))$.
In fact, we can also assume that $\frac{r}{1+\eps}\leq \kmeans_{\kapprox}(\Q_u(\I))$.
Note that again we do not assume anything about the size of $\kapprox$, and in the next section, we show that we can always consider $|\kapprox|=O(k^2)$. Again, we assume that at first, the set $\kapprox$ and the number $r$ are given as input, and later we describe the algorithm to efficiently construct $\kapprox$ and $r$. Similarly to the relational $k$-median clustering, we propose two algorithms, one slower deterministic and one faster randomized. The running times of the algorithms are the same as the running times of the algorithms for the relational $k$-median clustering.

\subsection{Deterministic algorithm for k-means}
\label{subsec:meansdet}

We set $\Phi = \sqrt{\frac{r}{\alpha n}}$ to be a lower bound estimate of the average mean radius $\sqrt{\frac{\kmeans_{\opt(\Q_u(\I))}(\Q_u(\I))}{n}}$, and keep all other parameters as described for the $k$-median coreset in section \ref{subsec:detmed}. Then we construct the same exponential grids and follow the exact same algorithms to get the coreset $\coreset$. Then, we run the standard algorithm for the weighted $k$-means problem on $\coreset$, $\kmeansAlg_\gamma(\coreset)$, to get a set of $k$ centers $\ret$. We return the set of centers $\ret$. Furthermore, we set and return $r_u=\frac{1}{1-\eps}\kmeans_{\ret}(\coreset)$.

\paragraph{Correctness}
We conduct a correctness analysis by proving analogous versions of the lemmas established for the $k$-median clustering, but this time for the $k$-means clustering.

For any tuple $t\in \Q_u(\I)$, let $x_i\in \kapprox$ be the center that is closest to $t$. We have that $\phi(t, x_i) \leq \alpha n \Phi$. So,
the following lemma is correct by the same argument as in lemma \ref{lem:tech1}.

\begin{lemma}
\label{lem:tech1kMeans}
    Every tuple $t\in \Q_u(\I)$ is assigned to one point in $\coreset$. Furthermore, the number of tuples in $\Q_u(\I)$ that are assigned to a point $s\in \coreset$ is $w(s)$.
\end{lemma}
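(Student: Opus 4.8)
The plan is to reproduce the argument of Lemma~\ref{lem:tech1} almost verbatim; the only place where the $k$-means setting enters is the verification that each tuple is covered by the exponential grid of its nearest center, where the new radius estimate $\Phi=\sqrt{r/(\alpha n)}$ and the \emph{squared} cost must be used in place of the $k$-median cost.

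First I would fix a tuple $t\in\Q_u(\I)$ and let $x_i\in\kapprox$ be its nearest center, so that $\dist(t,x_i)=\dist(t,\kapprox)$. Since
\[
r\geq \kmeans_{\kapprox}(\Q_u(\I))=\sum_{t'\in\Q_u(\I)}\dist^2(t',\kapprox)\geq \dist^2(t,\kapprox),
\]
dropping all but the term of $t$ gives $\dist(t,x_i)\leq\sqrt{r}=\Phi\sqrt{\alpha n}\leq \alpha n\,\Phi$, which is exactly the bound $\phi(t,x_i)\leq \alpha n\,\Phi$ stated just before the lemma. Hence (for $\alpha n\geq 2$) the largest box $Q_{i,2\log(\alpha n)}$, of side length $\Phi(\alpha n)^2$, already contains $t$, so $t$ lies in some cell $\square$ of the exponential grid $\widebar{V}_i$ centered at $x_i$.

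Next I would verify that $\square$ passes the test in condition~\eqref{eq:cond}, so that $\square$ is actually processed and a representative $s_\square$ is created. Letting $x_j$ realize $\dist(x_j,\square)=\dist(\kapprox,\square)$, and using that $t\in\square$ together with $x_i$ being the nearest center to $t$, I obtain
\[
\dist(x_i,\square)\leq\dist(x_i,t)\leq\dist(x_j,t)\leq\dist(\kapprox,\square)+\diameter(\square),
\]
which is precisely condition~\eqref{eq:cond}. This chain is purely geometric and does not reference the clustering objective, so it is identical to the $k$-median case; therefore $\square$ is processed and $t$ is assigned to $s_\square\in\coreset$, giving the first claim. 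The second claim, that $w(s)$ equals the number of tuples assigned to $s$, holds by the definition of the weight in the coreset construction.

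I do not anticipate a genuine obstacle: the single point of care is that $\Phi$ now carries a square root and that the per-tuple distance bound is obtained by discarding all but one term of the sum-of-squares cost, which still yields $\dist(t,\kapprox)\leq\sqrt{r}$. Everything downstream of the containment statement is literally the argument of Lemma~\ref{lem:tech1}.
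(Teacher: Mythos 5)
Your proposal is correct and follows the paper's own route: the paper's proof of this lemma consists precisely of observing that $\dist(t,x_i)\leq \alpha n\Phi$ still holds under the new radius estimate $\Phi=\sqrt{r/(\alpha n)}$ (which you verify explicitly via $\dist^2(t,\kapprox)\leq\kmeans_{\kapprox}(\Q_u(\I))\leq r$), and then invoking the argument of Lemma~\ref{lem:tech1} verbatim for the containment in the grid, the check of condition~\eqref{eq:cond}, and the weight claim. Your write-up merely makes the one-line justification in the paper explicit, with no substantive difference.
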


Any point $t\in \Q_u(\I)$ is assigned to one point in $\coreset$, so we can define $x_{i(t)}$ and $\sigma(t)$ as before. 

\begin{lemma}
    \label{lem:coresetmeans}
    $\coreset$ is a $k$-means $\eps$-coreset for $\Q_u(\I)$.
\end{lemma}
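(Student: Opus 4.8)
The plan is to mirror the proof of Lemma~\ref{lem:coreset1}, replacing the exact triangle inequality (which drives the $k$-median argument) by the standard \emph{weak triangle inequality} for squared distances. Fix an arbitrary set $Y$ of $k$ centers in $\Re^{d_u}$ and, as guaranteed by Lemma~\ref{lem:tech1kMeans}, let $\sigma(t)\in\coreset$ be the representative to which $t\in\Q_u(\I)$ is assigned, so that $\kmeans_Y(\coreset)=\sum_{t\in\Q_u(\I)}\dist^2(\sigma(t),Y)$. The key inequality is that for any $\delta\in(0,1)$, $\dist^2(t,Y)\le(1+\delta)\dist^2(\sigma(t),Y)+(1+1/\delta)\dist^2(t,\sigma(t))$, and symmetrically with $t$ and $\sigma(t)$ interchanged; this follows from $\dist(t,Y)\le\dist(\sigma(t),Y)+\dist(t,\sigma(t))$ together with $(a+b)^2\le(1+\delta)a^2+(1+1/\delta)b^2$. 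Summing over all tuples yields $\kmeans_Y(\coreset)\le(1+\delta)\kmeans_Y(\Q_u(\I))+(1+1/\delta)E$ and $\kmeans_Y(\Q_u(\I))\le(1+\delta)\kmeans_Y(\coreset)+(1+1/\delta)E$, where $E=\sum_{t\in\Q_u(\I)}\dist^2(t,\sigma(t))$ is the total squared assignment cost.

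The next step is to bound $E$, reusing the two-regime analysis of Lemma~\ref{lem:coreset1} but now with $\Phi=\sqrt{r/(\alpha n)}$ and with the per-point bounds squared. For a tuple with $\dist(t,x_{i(t)})\le\Phi$ the innermost grid gives $\dist(t,\sigma(t))\le\frac{\eps'}{10\alpha}\Phi$, hence $\dist^2(t,\sigma(t))\le(\frac{\eps'}{10\alpha})^2\Phi^2$; summing over at most $n$ such tuples yields at most $(\frac{\eps'}{10\alpha})^2 n\Phi^2\le(\frac{\eps'}{10\alpha})^2\kmeans_{\opt(\Q_u(\I))}(\Q_u(\I))$, since $n\Phi^2=r/\alpha\le\kmeans_{\opt(\Q_u(\I))}(\Q_u(\I))$. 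For a tuple with $\dist(t,x_{i(t)})>\Phi$ the exponential grid gives $\dist(t,\sigma(t))\le\frac{\eps'}{10\alpha}\dist(t,x_{i(t)})$, while condition~\eqref{eq:cond} gives $\dist(t,x_{i(t)})\le(1+\frac{4\eps'}{10\alpha})\dist(t,\kapprox)$ exactly as before; squaring and summing bounds this regime by $O((\eps')^2)\kmeans_{\kapprox}(\Q_u(\I))\le O((\eps')^2)\alpha\,\kmeans_{\opt(\Q_u(\I))}(\Q_u(\I))$. Altogether $E=O((\eps')^2)\kmeans_{\opt(\Q_u(\I))}(\Q_u(\I))\le O((\eps')^2)\kmeans_Y(\Q_u(\I))$, using $\kmeans_{\opt(\Q_u(\I))}(\Q_u(\I))\le\kmeans_Y(\Q_u(\I))$.

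Finally I would balance the two contributions by choosing $\delta=\Theta(\eps')$. Then $(1+1/\delta)E=O(1/\eps')\cdot O((\eps')^2)\kmeans_Y(\Q_u(\I))=O(\eps')\kmeans_Y(\Q_u(\I))$, so both displayed inequalities collapse to $\kmeans_Y(\coreset)=(1\pm O(\eps'))\kmeans_Y(\Q_u(\I))$; absorbing the hidden constant into $\eps$ (just as the relation $\eps'=\eps/4$ is used in the $k$-median deterministic algorithm) gives the claimed $\eps$-coreset guarantee. The main obstacle, and the only genuinely new ingredient relative to the $k$-median argument, is precisely this squared-distance step: one application of the triangle inequality no longer suffices, so the auxiliary parameter $\delta$ must be introduced, and one must verify that the extra $(\eps')^2$ saving obtained from squaring the grid resolution exactly compensates the $1/\delta$ blow-up, keeping the cross term linear in $\eps'$ rather than degrading the approximation. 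Everything else, including the grid-resolution bound and the use of condition~\eqref{eq:cond}, transfers verbatim from the proof of Lemma~\ref{lem:coreset1}.
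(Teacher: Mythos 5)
Your proof is correct, but it takes a genuinely different route from the paper's. The paper factors the per-tuple error as $|\dist^2(t,Y)-\dist^2(\sigma(t),Y)|\le \dist(t,\sigma(t))\bigl(2\dist(t,Y)+\dist(t,\sigma(t))\bigr)$ and then controls the cross term $\dist(t,\sigma(t))\cdot\dist(t,Y)$ by a three-way case split ($P_R$, $P_X$, $P_Y$) according to whether $\dist(t,Y)$ and $\dist(t,x_{i(t)})$ are below $\Phi$ and which of the two dominates; in the regime where $\dist(t,Y)$ is the larger quantity it shows $\dist(t,\sigma(t))\le\frac{\eps}{10\alpha}\dist(t,Y)$ so that the product can be charged to $\kmeans_Y(\Q_u(\I))$. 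You instead dispose of the cross term once and for all via the weak triangle inequality $(a+b)^2\le(1+\delta)a^2+(1+1/\delta)b^2$, which reduces everything to bounding the total squared assignment cost $E=\sum_t\dist^2(t,\sigma(t))$; that quantity needs only the two-regime split on $\dist(t,x_{i(t)})$ versus $\Phi$ already present in Lemma~\ref{lem:coreset1}, and your accounting of the $(\eps')^2$ gain against the $1/\delta$ loss is right (the far regime gives $E$-contribution at most $O((\eps')^2)\kmeans_{\kapprox}(\Q_u(\I))/\alpha^2\le O((\eps')^2)\kmeans_{\opt(\Q_u(\I))}(\Q_u(\I))$, and $n\Phi^2=r/\alpha\le\kmeans_{\opt(\Q_u(\I))}(\Q_u(\I))$ handles the near regime). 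What your approach buys is that no case analysis on $\dist(t,Y)$ is needed at all, since the $\delta\,\dist^2(\cdot,Y)$ term absorbs it automatically; this is the standard move in the coreset literature and arguably cleaner. What the paper's approach buys is that it avoids introducing the auxiliary parameter $\delta$ and produces explicit constants ($\frac{18\eps}{10}$, rescaled to $\eps/18$) rather than an $O(\eps')$ with a hidden constant, though both proofs end with the same kind of constant-factor rescaling of $\eps$. Both yield the two-sided multiplicative guarantee required by the coreset definition.
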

\begin{proof}
Let $Y$ be an arbitrary set of $k$ points in $\Re^{d_u}$. The error is defined as
\begin{align*}
\error
=|\kmeans_Y(\Q(\I))-\kmeans_Y(\coreset)|
&\leq \sum_{t\in \Q_u(\I)}|\dist^2(t,Y)-\dist^2(\sigma(t),Y)| 
\\&= \sum_{t\in \Q_u(\I)}|(\dist(t,Y)-\dist(\sigma(t),Y))(\dist(t,Y)+\dist(\sigma(t),Y))|. 
\end{align*}
By the triangle inequality, $\dist(t,Y)\leq \dist(\sigma(t),Y)+\dist(t,\sigma(t))$ and $\dist(\sigma(t),Y)\leq \dist(t,Y)+\dist(t,\sigma(t))$. Hence, $|\dist(t,Y)-\dist(\sigma(t),Y)|\leq \dist(t,\sigma(t))$.
    We have, 
\begin{align*}
    \error 
    \leq \sum_{t\in \Q_u(\I)}|(\dist(t,Y)-\dist(\sigma(t),Y))(\dist(t,Y)+\dist(\sigma(t),Y))|
    \leq
    \sum_{t\in \Q_u(\I)}\dist(t, \sigma(t))(2\dist(t,Y)+\dist(t, \sigma(t)).
\end{align*}

Now we divide the points of $\Q_u(\I)$ into three cases. Let $P_R = \{t \in \Q_u(\I)| \dist(t,x_{i(t)})\leq \Phi \land \dist(t,Y)\leq \Phi\}$, and $P_X = \{t \in \Q_u(\I)\setminus P_R| \dist(t,Y)\leq \dist(t,x_{i(t)})\}$, and $P_Y = \{t \in \Q_u(\I)\setminus P_R| \dist(t,x_{i(t)})< \dist(t,Y)\}$. Note that we do not actually construct these sets, instead we only analyze the error induced by these three types of points separately. It is straightforward to see that $P_R\cup P_X\cup P_Y=\Q_u(\I)$ and $P_R\cap P_X=\emptyset$, $P_R\cap P_Y=\emptyset$, $P_X\cap P_Y=\emptyset$.

For a point $t\in P_R$, since $\dist(t,x_{i(t)})\leq \Phi$, by the construction of the exponential grid, we have that $\dist(t,\sigma(t))\leq \diameter(\square_t)\leq\frac{\eps}{10\alpha}\Phi$, hence 
\begin{align*}
    \sum_{t\in P_R}
    \dist(t, \sigma(t))(2\dist(t,Y)+\dist(t, \sigma(t))
    \leq 
    \sum_{t\in P_R}
    \frac{\eps}{10\alpha}\Phi (2\Phi + \frac{\eps}{10\alpha}\Phi)
    \leq
    \frac{3\eps}{10\alpha}n\Phi^2
    \leq
    \frac{3\eps}{10}\kmeans_{OPT(\Q_u(\I))}(\Q_u(\I)).
\end{align*}

By the definition of the set $P_X$, we have $\dist(t,x_{i(t)})>\Phi$, hence, as shown in the proof of lemma \ref{lem:coreset1},  $\dist(t,\sigma(t))\leq \frac{\eps}{10\alpha}\dist(t,x_{i(t)})$, and $\dist(t,x_{i(t)})\leq (1+\frac{4\eps}{10\alpha})\dist(t,\kapprox)$. Therefore, 

\begin{align*}
    &\sum_{t\in P_X}
    \dist(t, \sigma(t))(2\dist(t,Y)+\dist(t, \sigma(t))
    \leq 
    \sum_{t\in P_X}
    \frac{\eps}{10\alpha}\dist(t,x_{i(t)}) (2\dist(t,x_{i(t)}) + \frac{\eps}{10\alpha}\dist(t,x_{i(t)}))
    \\&\leq
    \frac{3\eps}{10\alpha} \sum_{t\in P_X}\dist^2(t,x_{i(t)})
    \leq 
    \frac{3\eps}{10\alpha}\sum_{t\in P_X}((1+\frac{4\eps}{10\alpha})\dist^2(t,\kapprox))
    \leq
    \frac{12\eps}{10\alpha}\kmeans_{\kapprox}(\Q_u(\I))
    \leq
    \frac{12\eps}{10}\kmeans_{OPT(\Q_u(\I))}(\Q_u(\I)).
\end{align*}

For the last case we have $\dist(t,\sigma(t))\leq \frac{\eps}{10\alpha}\dist(t,Y)$.
It holds because: If $\dist(t,x_{i(t)})\leq \Phi$, then $\dist(t,\sigma(t))\leq \frac{\eps}{10\alpha}\Phi\leq \frac{\eps}{10\alpha}\dist(t,Y)$. If $\dist(t,x_{i(t)})> \Phi$ then $\dist(t,\sigma(t))\leq \frac{\eps}{10\alpha}\dist(t,x_{i(t)})\leq \frac{\eps}{10\alpha}\dist(t,Y)$.
Hence, we have,
\begin{align*}
    \sum_{t\in P_Y}
    \dist(t, \sigma(t))(2\dist(t,Y)+\dist(t, \sigma(t))
    &\leq
    \sum_{t\in P_Y}\frac{\eps}{10\alpha}\dist(t,Y) (2\dist(t,Y) + \frac{\eps}{10\alpha}\dist(t,Y))
    \leq 
    \frac{3\eps}{10\alpha}\sum_{t\in P_Y} \dist^2(t, Y)\\&
    \leq \frac{3\eps}{10\alpha} \kmeans_{Y}(\Q_u(\I)).
\end{align*}

Finally, we bound the error,

\begin{align*}
    \error&\leq \sum_{t\in \Q_u(\I)}\dist(t, \sigma(t))(2\dist(t,Y)+\dist(t, \sigma(t))
    \\& \leq \sum_{t\in P_R}\dist(t, \sigma(t))(2\dist(t,Y)+\dist(t, \sigma(t))
    +
    \sum_{t\in P_X}\dist(t, \sigma(t))(2\dist(t,Y)+\dist(t, \sigma(t))
    \\&\hspace{15.5em}+
    \sum_{t\in P_Y}\dist(t, \sigma(t))(2\dist(t,Y)+\dist(t, \sigma(t))
    \\& \leq
    \frac{3\eps}{10}\kmeans_{OPT(\Q_u(\I))}(\Q_u(\I))
    + 
    \frac{12\eps}{10}\kmeans_{OPT(\Q_u(\I))}(\Q_u(\I))
    +
    \frac{3\eps}{10\alpha} \kmeans_{Y}(\Q_u(\I))
    \leq
    \frac{18\eps}{10} \kmeans_{Y}(\Q_u(\I)).
\end{align*}

Thus, if we set $\eps\leftarrow \eps/18$, the result follows.
\end{proof}

With the same argument as in lemma \ref{lem:correct1}, we can prove the following lemma.

\begin{lemma}
\label{lem:correctmeans}
    $\kmeans_\ret(\Q_u(\I))\leq r_u\leq (1+\eps)\gamma \kmeans_{\opt(\Q_u(\I))}(\Q_u(\I))$.
\end{lemma}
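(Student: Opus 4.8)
The plan is to transcribe the proof of Lemma~\ref{lem:correct1} almost verbatim, replacing the $k$-median cost $\kmedian$ by the $k$-means cost $\kmeans$ throughout. All of the geometry specific to $k$-means---in particular the factorization of $\dist^2(t,Y)-\dist^2(\sigma(t),Y)$ and the three-case split over $P_R, P_X, P_Y$---has already been discharged inside Lemma~\ref{lem:coresetmeans}, which certifies that $\coreset$ is a $k$-means $\eps$-coreset for $\Q_u(\I)$. Consequently the only remaining work is to chain the two directions of the coreset guarantee with the $\gamma$-approximation guarantee of $\kmeansAlg_\gamma$ and to bookkeep the $\eps$ factors.

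First I would establish the left inequality $\kmeans_\ret(\Q_u(\I))\leq r_u$. Recall that the deterministic algorithm of Section~\ref{subsec:meansdet} sets $r_u=\frac{1}{1-\eps}\kmeans_\ret(\coreset)$. Applying the coreset property of Lemma~\ref{lem:coresetmeans} with the $k$-center set $Y=\ret$ gives $\kmeans_\ret(\coreset)\geq (1-\eps)\kmeans_\ret(\Q_u(\I))$, hence $\kmeans_\ret(\Q_u(\I))\leq \frac{1}{1-\eps}\kmeans_\ret(\coreset)=r_u$.

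Next I would establish the upper bound on $r_u$. By the definition of $\kmeansAlg_\gamma$ as a $\gamma$-approximation algorithm for the weighted $k$-means problem run on $\coreset$, we have $\kmeans_\ret(\coreset)\leq \gamma\,\kmeans_{\opt(\coreset)}(\coreset)$. Since $\opt(\coreset)$ minimizes the $k$-means cost over $\coreset$ while $\opt(\Q_u(\I))$ is merely one admissible set of $k$ centers, $\kmeans_{\opt(\coreset)}(\coreset)\leq \kmeans_{\opt(\Q_u(\I))}(\coreset)$. Invoking Lemma~\ref{lem:coresetmeans} a second time, now with $Y=\opt(\Q_u(\I))$, yields $\kmeans_{\opt(\Q_u(\I))}(\coreset)\leq (1+\eps)\kmeans_{\opt(\Q_u(\I))}(\Q_u(\I))$. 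Combining these bounds gives
\[
r_u=\frac{1}{1-\eps}\kmeans_\ret(\coreset)\leq \frac{1+\eps}{1-\eps}\,\gamma\,\kmeans_{\opt(\Q_u(\I))}(\Q_u(\I)).
\]
Finally, using $\frac{1+\eps}{1-\eps}\leq 1+4\eps$ for $\eps\in(0,\tfrac12)$ and rescaling $\eps$ by a suitable constant (as is already done, e.g.\ $\eps\leftarrow\eps/18$ within Lemma~\ref{lem:coresetmeans}), the leading factor collapses to $(1+\eps)\gamma$, completing the chain $\kmeans_\ret(\Q_u(\I))\leq r_u\leq (1+\eps)\gamma\,\kmeans_{\opt(\Q_u(\I))}(\Q_u(\I))$.

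I expect no genuine obstacle here: this is a short algebraic chain rather than a calculation, and the only point requiring care is confirming that the coreset guarantee of Lemma~\ref{lem:coresetmeans} is applied to the two relevant center sets $\ret$ and $\opt(\Q_u(\I))$, both of which have exactly $k$ points, so the guarantee indeed applies in each instance. The discrete case follows identically by substituting $\DkmeansAlg_\gamma$ for $\kmeansAlg_\gamma$ and using $\kmeans_{\opt(\Q_u(\I))}(\Q_u(\I))\leq \kmeans_{\optD(\Q_u(\I))}(\Q_u(\I))\leq 4\kmeans_{\opt(\Q_u(\I))}(\Q_u(\I))$, exactly as in the $k$-median argument.
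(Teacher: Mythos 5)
Your proposal is correct and follows exactly the route the paper intends: the paper's own ``proof'' of this lemma is simply the remark that the argument of Lemma~\ref{lem:correct1} carries over verbatim once Lemma~\ref{lem:coresetmeans} supplies the $k$-means coreset guarantee, which is precisely the chain you write out (coreset lower bound at $Y=\ret$, the $\gamma$-approximation of $\kmeansAlg_\gamma$, the coreset upper bound at $Y=\opt(\Q_u(\I))$, and a rescaling of $\eps$). No gaps.
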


As the algorithm for constructing the $k$-means coreset is almost the same as the one for the $k$-median's coreset, and the only difference is in the analysis, the running time is the same and we can conclude with the following theorem.

\begin{theorem}
    \label{coreset:means}
 Let $\I$ be a database instance with $N$ tuples, $\Q$ be an acyclic join query over a set of attributes $\allattr$ and $\allattr_u\subseteq \allattr$. Given a set $X\subset \Re^d$, a constant $\alpha$ such that $\kmeans_X(\Q_u(\I))\leq \alpha \kmeans_{\opt(\Q_u(\I))}(\Q_u(\I))$, and a constant parameter $\eps\in(0,1)$, there exists an algorithm that computes a set $\ret\subset \Re^d$ of $k$ points and a number $r_u$ in $O(|\kapprox|^{d_u+1}N\log^{d_u+2} N+\timeMeans_\gamma(|\kapprox|
\log N))$ time such that $\kmeans_{\ret}(\Q_u(\I))\leq r_u\!\leq\! (1+\eps)\gamma\kmeans_{\opt(\Q_u(\I))}(\Q_u(\I))$. There also exists an algorithm that computes a set $\ret\subseteq \Q_u(\I)$ of $k$ points and a number $r_u$ with the same running time, such that $\kmeans_{\ret}(\Q_u(\I))\leq r_u\!\leq\! (4+\eps)\gamma\kmeans_{\optD(\Q_u(\I))}(\Q_u(\I))$.
\end{theorem}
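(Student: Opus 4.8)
The plan is to read the theorem off from the three lemmas just proved, together with a running-time count that is verbatim the $k$-median one. The algorithm is Algorithm~\ref{alg:Determ} with the sole changes that $\Phi=\sqrt{r/(\alpha n)}$ (so that $\Phi$ lower-bounds the average \emph{mean} radius), that $\kmeansAlg_\gamma$ (or $\DkmeansAlg_\gamma$ in the discrete case) is run on the weighted set $\coreset$, and that the returned estimate is $r_u=\frac{1}{1-\eps}\kmeans_\ret(\coreset)$. Thus there is nothing new to design; the content is in confirming correctness and complexity.

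For the geometric version I would simply invoke Lemma~\ref{lem:correctmeans}, which already yields $\kmeans_\ret(\Q_u(\I))\leq r_u\leq (1+\eps)\gamma\,\kmeans_{\opt(\Q_u(\I))}(\Q_u(\I))$. That lemma stands on Lemma~\ref{lem:coresetmeans} (the set $\coreset$ is a $k$-means $\eps$-coreset of $\Q_u(\I)$) and Lemma~\ref{lem:tech1kMeans} (every tuple is assigned to exactly one coreset point, with the correct weight), both established above, so no further geometric estimate is required here.

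For the discrete guarantee I would run $\DkmeansAlg_\gamma$ and mimic the discrete branch of Lemma~\ref{lem:correct1}, replacing the $k$-median constant $2$ by the $k$-means constant $4$. Concretely, $\DkmeansAlg_\gamma$ returns $\ret\subseteq\coreset\subseteq\Q_u(\I)$ with $\kmeans_\ret(\coreset)\leq\gamma\,\kmeans_{\optD(\coreset)}(\coreset)\leq 4\gamma\,\kmeans_{\opt(\coreset)}(\coreset)$, where the second step uses $\kmeans_{\optD(Y)}(Y)\leq 4\kmeans_{\opt(Y)}(Y)$ with $Y=\coreset$. Applying the coreset inequality of Lemma~\ref{lem:coresetmeans} with $Y=\opt(\Q_u(\I))$ gives $\kmeans_{\opt(\coreset)}(\coreset)\leq(1+\eps)\kmeans_{\opt(\Q_u(\I))}(\Q_u(\I))\leq(1+\eps)\kmeans_{\optD(\Q_u(\I))}(\Q_u(\I))$; chaining these with $r_u=\frac{1}{1-\eps}\kmeans_\ret(\coreset)$ and rescaling $\eps$ yields $r_u\leq(4+\eps)\gamma\,\kmeans_{\optD(\Q_u(\I))}(\Q_u(\I))$, while the lower bound $\kmeans_\ret(\Q_u(\I))\leq r_u$ follows from the coreset inequality applied to $Y=\ret$. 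The running time is identical to Theorem~\ref{coreset1}: changing $\Phi$ leaves the number of scales $2\log(\alpha n)$ and the per-scale grid resolution unchanged, so the algorithm still visits $O(|\kapprox|\eps^{-d_u}\log N)$ cells, builds the same arrangements, and issues the same $\textsf{CountRect}$ queries, giving $O(|\kapprox|^{d_u+1}N\log^{d_u+2}N+\timeMeans_\gamma(|\kapprox|\log N))$.

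The conceptual difficulty — handling squared distances, which forces the three-way split $P_R,P_X,P_Y$ absent in the $k$-median argument — is already discharged inside Lemma~\ref{lem:coresetmeans}. Consequently, the only genuine care needed for the theorem is bookkeeping: verifying that the discrete constant is exactly $4$ (tracing the $\kmeans_{\optD}\leq 4\,\kmeans_{\opt}$ inequality through the coreset chain) and confirming that nothing in the running-time count depends on $\Phi$. I expect this bookkeeping, rather than any new idea, to be the main point of attention.
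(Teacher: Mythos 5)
Your proposal matches the paper's proof: the paper likewise defines the algorithm as the $k$-median procedure with $\Phi=\sqrt{r/(\alpha n)}$, invokes Lemmas~\ref{lem:tech1kMeans}, \ref{lem:coresetmeans}, and \ref{lem:correctmeans} for the geometric guarantee, handles the discrete case by substituting $\DkmeansAlg_\gamma$ and the factor-$4$ inequality $\kmeans_{\optD(Y)}(Y)\leq 4\kmeans_{\opt(Y)}(Y)$ exactly as you trace it, and observes that the running time is unchanged from Theorem~\ref{coreset1}. Your bookkeeping of the discrete constant and the $\eps$-rescaling is correct and mirrors the paper's treatment of the analogous $k$-median case.
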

\subsection{Randomized Algorithm for k-means Clustering}
\label{subsec:meansrand}
The algorithm for the $k$-means problem is exactly the same as the one we described in section \ref{subsec:medrand} for the $k$-median problem. Hence, in this section, we use the same notation and prove the additional lemmas required to show that the algorithm works also for $k$-means clustering. Note that this algorithm builds the $k$-means version of the exponential grid as described in \ref{subsec:meansdet}, and then continues with the randomized counting algorithm described in section \ref{subsec:medrand}.

\paragraph{Correctness}
Notice that Lemmas \ref{lem:tech2} and \ref{lem:assignment} are independent of the objective function of the problem and we can directly assume they are true without proving them again in this section.

We prove the following lemma using the same notation as in Lemma~\ref{lem:genLight}.

\begin{lemma}
\label{lem:genLightmeans}
    Let $Y$ be an arbitrary set of $k$ points in $\Re^{d_u}$. It holds that $\kmeans_Y(\bar{P}_u)\leq \eps\kmeans_Y(P_u)+\eps\kmeans_{Y}(\Q(\I))$ and $\kmeans_Y(\Q_u(\I))\leq (1+\eps)\kmeans_Y(P_u)$ with probability at least $1-\frac{1}{N^{O(1)}}$.
\end{lemma}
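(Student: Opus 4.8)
The plan is to follow the proof of Lemma~\ref{lem:genLight} line for line, with the single essential change that squared distances do not obey the triangle inequality. In place of $\dist(p,Y)\leq \dist(t,Y)+\dist(p,t)$ I would use the weak (approximate) triangle inequality for squared Euclidean distances: for any $\delta>0$,
\begin{equation*}
\dist^2(p,Y)\leq (1+\delta)\,\dist^2(t,Y)+\bigl(1+\tfrac{1}{\delta}\bigr)\dist^2(p,t),
\end{equation*}
which follows by squaring $\dist(p,Y)\leq \dist(t,Y)+\dist(p,t)$ and applying $2ab\leq \delta a^2+\tfrac{1}{\delta}b^2$. Since the charging process of Lemma~\ref{lem:assignment} depends only on the combinatorics of heavy and light cells and not on the objective, I would reuse it verbatim: with probability $1-1/N^{O(1)}$, every light point $p\in\bar P_u$ charges $\tfrac{1}{\eps'}$ heavy points $t_{j_1(p)},\dots,t_{j_{1/\eps'}(p)}$ lying in its own light cell $\square_{i(p)}$, and each heavy point in $P_u$ is charged at most once.

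For the first inequality $\kmeans_Y(\bar P_u)\leq \eps\,\kmeans_Y(P_u)+\eps\,\kmeans_Y(\Q_u(\I))$, I would apply the weak triangle inequality to each of the $\tfrac{1}{\eps'}$ charged pairs $(p,t_{j_h(p)})$, sum the $\tfrac{1}{\eps'}$ resulting inequalities, and multiply by $\eps'$ to obtain
\begin{equation*}
\dist^2(p,Y)\leq (1+\delta)\,\eps'\!\sum_{h=1}^{1/\eps'}\dist^2(t_{j_h(p)},Y)+\bigl(1+\tfrac{1}{\delta}\bigr)\eps'\!\sum_{h=1}^{1/\eps'}\dist^2(p,t_{j_h(p)}).
\end{equation*}
Summing over $p\in\bar P_u$, the first term is at most $(1+\delta)\,\eps'\,\kmeans_Y(P_u)$, exactly as in Lemma~\ref{lem:genLight}, because each heavy point is charged at most once. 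For the second term I would use that $p$ and $t_{j_h(p)}$ share the cell $\square_{i(p)}$, so $\dist^2(p,t_{j_h(p)})\leq \diameter^2(\square_{i(p)})$, whence the inner sum is at most $\tfrac{1}{\eps'}\,\diameter^2(\square_{i(p)})$ and the second term collapses to $(1+\tfrac{1}{\delta})\sum_{p\in\bar P_u}\diameter^2(\square_{i(p)})$.

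The geometry of the exponential grid then controls the squared diameters exactly as in the proof of Lemma~\ref{lem:coresetmeans}. Splitting on the level of the cell, a level-$0$ cell has $\diameter^2(\square_{i(p)})=O(\eps'^2)\,\Phi^2$, contributing $O(\eps'^2)\,\Phi^2 n=O(\eps'^2)\,\kmeans_{\opt(\Q_u(\I))}(\Q_u(\I))$ after summing over the at most $n$ points, while a higher-level cell has $\diameter^2(\square_{i(p)})=O(\eps'^2)\,\dist^2(p,x_{i(p)})=O(\eps'^2)\,\dist^2(p,\kapprox)$, contributing $O(\eps'^2)\,\kmeans_{\kapprox}(\Q_u(\I))\leq O(\eps'^2)\,\alpha\,\kmeans_{\opt(\Q_u(\I))}(\Q_u(\I))$ after summing over $p$. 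Fixing $\delta$ to a constant and using $\kmeans_{\opt(\Q_u(\I))}(\Q_u(\I))\leq \kmeans_Y(\Q_u(\I))$ for every $k$-center set $Y$, the entire second term is $O(\eps'^2)\,\kmeans_Y(\Q_u(\I))$. Thus $\kmeans_Y(\bar P_u)\leq O(\eps')\,\kmeans_Y(P_u)+O(\eps'^2)\,\kmeans_Y(\Q_u(\I))$, and choosing $\eps'$ a suitable constant fraction of $\eps$ gives the first inequality. The second inequality follows by the same algebra as in Lemma~\ref{lem:genLight}: writing $\kmeans_Y(P_u)=\kmeans_Y(\Q_u(\I))-\kmeans_Y(\bar P_u)$ and substituting the first inequality yields $\kmeans_Y(\Q_u(\I))\leq \tfrac{1+\eps}{1-\eps}\kmeans_Y(P_u)\leq (1+\eps)\kmeans_Y(P_u)$ after a final constant rescaling of $\eps$.

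The step I expect to be the main obstacle is absorbing the $(1+\tfrac{1}{\delta})$ blow-up that the weak triangle inequality attaches to the perturbation term $\dist^2(p,t_{j_h(p)})$; unlike the $k$-median argument, there is no exact triangle inequality to fall back on. The saving grace, which I would make precise, is that in the exponential grid the squared cell diameter is $O(\eps'^2)$ times the relevant squared distance, so even after the $\tfrac{1}{\eps'}$ loss from the charging and with $\delta$ merely constant, this term remains $O(\eps'^2)$ and hence negligible. In other words the squaring actually buys an extra factor of $\eps'$ of slack, so $\delta$ need not be tuned delicately and the only care required is the constant rescaling of $\eps'$ to match the clean statement.
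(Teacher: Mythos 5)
Your proposal is correct and follows essentially the same route as the paper: the paper also squares the ordinary triangle inequality (using the $\delta=1$ instance $(a+b)^2\leq 2a^2+2b^2$ rather than a general $\delta$), reuses the charging process of Lemma~\ref{lem:assignment} unchanged, bounds the first term by the charged-at-most-once property, and controls the perturbation term through the exponential-grid geometry relative to $\kapprox$ exactly as you describe. The only cosmetic difference is in the bookkeeping of the second term: the paper re-routes $\sum_{p}\sum_h \dist^2(p,t_{j_h(p)})$ through a reassignment $\sigma'$ defined on the heavy points so it can quote the case analysis of Lemma~\ref{lem:coresetmeans} verbatim, whereas you bound each term directly by the squared diameter of the shared light cell and split on the cell level, which yields the same $O(\eps'^2)\kmeans_Y(\Q_u(\I))$ bound.
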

\begin{proof}
We first start showing the first inequality.
    For a point $p\in \bar{P}_u$ and each $h\leq \frac{1}{\eps}$, by triangle inequality, we have
    $$\dist(p,Y)\leq \dist(t_{j_h(p)}, Y)+\dist(p, t_{j_h(p)}).$$
    By squaring the above inequality we get,
     $$\dist^2(p,Y)\leq 2\dist^2(t_{j_h(p)}, Y)+2\dist^2(p, t_{j_h(p)}).$$
    Taking the sum of these $\frac{1}{\eps}$ inequalities, we have,
    $$\dist^2(p, Y)\leq 2\eps\sum_{h=1}^{1/\eps}\dist^2(t_{j_h(p)},Y) +2\eps \sum_{h=1}^{1/\eps}\dist^2(p,t_{j_h(p)}).$$
    So, we have $$\kmeans_Y(\bar{P}_u)\leq 2\eps \sum_{p\in\bar{P}_u}\sum_{h=1}^{1/\eps}\dist^2(t_{j_h(p)},Y) + 2\eps \sum_{p\in\bar{P}_u}\sum_{h=1}^{1/\eps}\dist^2(p,t_{j_h(p)}).$$
    From Lemma~\ref{lem:assignment}, we proved that any $t\in P_u$ is assigned by at most one point in $\bar{P}_u$, so the first term in the sum can be bounded as
    $$2\eps \sum_{p\in\bar{P}_u}\sum_{h=1}^{1/\eps}\dist^2(t_{j_h(p)},Y)\leq 2\eps\sum_{t\in P_u}\dist^2(t,Y)=2\eps\kmeans_Y(P_u).$$
    Next, we bound the second term in the sum.
    
    We define a new assignment function $\sigma'$ as we did in the proof of Lemma~\ref{lem:genLight}, and with the same argument we get,
    
    \begin{align*}
    2\eps \sum_{p\in\bar{P}_u}\sum_{h=1}^{1/\eps}\dist^2(p,t_{j_h(p)})
    \leq 
    2\eps\sum_{t\in P_u}\dist^2(t,\sigma'(t))
    \leq 
    2\eps \sum_{t\in \Q_u(\I)}\dist^2(t,\sigma'(t))
    \leq
    \frac{36\epsilon}{10}\kmeans_{Y}(\Q_{u}(\I)).
    \end{align*}
    The last inequality holds using the arguments in the proof of Lemma~\ref{lem:coresetmeans}.
    Finally, by summing up the two terms and setting $\eps\leftarrow \frac{\eps}{6}$, the first inequality follows.

    Then we prove the second inequality.
$$\kmeans_Y(P_u)=\kmeans_Y(\Q_u(\I))-\kmeans_Y(\bar{P}_u)\geq \kmeans_Y(\Q_u(\I))-\eps\kmeans_Y(P_u)-\eps\kmeans_Y(\Q_u(\I))\Leftrightarrow \kmeans_Y(\Q_u(\I))\leq \frac{1+\eps}{1-\eps}\kmeans_Y(P_u).$$
    If we set $\eps\leftarrow\eps/3$, the result follows.
\end{proof}

\begin{lemma}
\label{lem:coresetmeans2}
    $\coreset$ is a $k$-means $\eps$-coreset of $P_u$ with probability at least $1-\frac{1}{N^{O(1)}}$.
\end{lemma}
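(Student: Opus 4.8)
The plan is to mirror the proof of Lemma~\ref{lem:coreset2}, replacing every distance by its square and substituting the $k$-means counterparts of the supporting lemmas. Since Lemma~\ref{lem:tech2} is stated independently of the objective, I would first invoke it to write $w(p)=(1+\eps_p)n_p$ for each $p\in\coreset$ with $\eps_p\in[0,4\eps]$, and recall that the mapping $\hat{\sigma}$ and the quantity $n_p$ are exactly as in Section~\ref{subsec:medrand}, so that for any fixed set $Y$ of $k$ centers $\kmeans_Y(\coreset)=\sum_{p\in\coreset}(1+\eps_p)n_p\dist^2(p,Y)=\sum_{t\in P_u}(1+\eps_{\hat{\sigma}(t)})\dist^2(\hat{\sigma}(t),Y)$. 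The error $\error=|\kmeans_Y(P_u)-\kmeans_Y(\coreset)|$ then splits, via the triangle inequality for the absolute value, into a \emph{distortion} term $\sum_{t\in P_u}|\dist^2(t,Y)-\dist^2(\hat{\sigma}(t),Y)|$ and a \emph{weight} term $4\eps\sum_{t\in P_u}\dist^2(\hat{\sigma}(t),Y)$.

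For the distortion term I would use the factorization $|\dist^2(t,Y)-\dist^2(\hat{\sigma}(t),Y)|\le \dist(t,\hat{\sigma}(t))\bigl(2\dist(t,Y)+\dist(t,\hat{\sigma}(t))\bigr)$, which is precisely the quantity whose sum was already controlled in the proof of Lemma~\ref{lem:coresetmeans}. Restricting that same three-case analysis ($P_R$, $P_X$, $P_Y$, now intersected with $P_u$) and using the exponential-grid guarantee $\dist(t,\hat{\sigma}(t))\le\diameter(\square_t)$ together with $\dist(t,x_{i(t)})\le(1+\tfrac{4\eps}{10\alpha})\dist(t,\kapprox)$, I expect to obtain $\sum_{t\in P_u}\dist(t,\hat{\sigma}(t))\bigl(2\dist(t,Y)+\dist(t,\hat{\sigma}(t))\bigr)\le O(\eps)\kmeans_Y(\Q_u(\I))$. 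For the weight term I would further bound $\dist^2(\hat{\sigma}(t),Y)\le 2\dist^2(t,Y)+2\dist^2(t,\hat{\sigma}(t))$ and control $\sum_{t\in P_u}\dist^2(t,\hat{\sigma}(t))$ by the same grid bound, yielding $O(\eps)\kmeans_Y(P_u)+O(\eps^2)\kmeans_Y(\Q_u(\I))$.

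Finally I would collect the two contributions and convert every $\kmeans_Y(\Q_u(\I))$ factor into $\kmeans_Y(P_u)$ using the second inequality of Lemma~\ref{lem:genLightmeans}, namely $\kmeans_Y(\Q_u(\I))\le(1+\eps)\kmeans_Y(P_u)$, to conclude $\error\le c\,\eps\,\kmeans_Y(P_u)$ for an absolute constant $c$; rescaling $\eps\leftarrow\eps/c$ and absorbing the failure probabilities of Lemmas~\ref{lem:tech2},~\ref{lem:assignment} and~\ref{lem:genLightmeans} (each $1-N^{-O(1)}$) by a union bound gives the claim. The main obstacle is exactly what makes the $k$-means coreset harder than the $k$-median one: squared distances do not obey a clean triangle inequality, so the cross term $2\dist(t,\hat{\sigma}(t))\dist(t,Y)$ must stay proportional to $\eps$ rather than degrade to $\sqrt{\eps}$. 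This is why the partition by comparing $\dist(t,Y)$ against $\dist(t,x_{i(t)})$ is indispensable — in the regime where $Y$ is much closer to $t$ than $x_{i(t)}$, the displacement $\dist(t,\hat{\sigma}(t))$ must be charged against $\dist(t,Y)$ instead of the approximation radius, and verifying that all three regimes each yield an $O(\eps)$ factor is where the care lies.
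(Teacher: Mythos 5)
Your proposal is correct and follows essentially the same route as the paper's proof: the same split into a distortion term and a weight term via Lemma~\ref{lem:tech2}, the same factorization $|\dist^2(t,Y)-\dist^2(\hat{\sigma}(t),Y)|\le \dist(t,\hat{\sigma}(t))(2\dist(t,Y)+\dist(t,\hat{\sigma}(t)))$ controlled by the three-case analysis of Lemma~\ref{lem:coresetmeans}, the same bound $\dist^2(\hat{\sigma}(t),Y)\le 2\dist^2(t,Y)+2\dist^2(t,\hat{\sigma}(t))$ for the weight term, and the same final conversion to $\kmeans_Y(P_u)$ via Lemma~\ref{lem:genLightmeans} followed by rescaling $\eps$. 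Your closing remark about why the $P_R,P_X,P_Y$ partition is what keeps the cross term at $O(\eps)$ is exactly the point the paper relies on implicitly.
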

\begin{proof}
    Let $Y$ be an arbitrary set of $k$ points in $\Re^{d_u}$.
    With probability at least $1-\frac{1}{N^{O(1)}}$, we have,
    \begin{align*}
    \mathcal{E}&\!=\!|\kmeans_Y(P_u)-\kmeans_Y(\coreset)|\!=\!|\sum_{t\in P_u}\dist^2(t,Y)-\sum_{p\in \coreset}w(p)\dist^2(p,Y)|\!=\! |\sum_{t\in P_u}\dist^2(t,Y)-\sum_{p\in \coreset}(1+\eps_p)n_{p}\dist^2(p,Y)|\\&=|\sum_{t\in P_u}\dist^2(t,Y)-\sum_{t\in P_u}(1+\eps_{\hat{\sigma}(t)})\dist^2(\hat{\sigma}(t),Y)| 
    \leq \sum_{t\in P_u}|\dist^2(t,Y)-(1+\eps_{\hat{\sigma}(t)})\dist^2(\hat{\sigma}(t),Y)|.
    \end{align*}
    This inequality also follows from Lemma~\ref{lem:tech2}. We have,
    \begin{align*}
        \mathcal{E}&\leq \sum_{t\in P_u}|\dist^2(t,Y)-(1+\eps_{\hat{\sigma}(t)})\dist^2(\hat{\sigma}(t),Y)|
        \leq \sum_{t\in P_u}|\dist^2(t,Y)-\dist^2(\hat{\sigma}(t),Y)|+\eps\sum_{t\in P_u}\dist^2(\hat{\sigma}(t),Y)
        \\&\leq 
        \sum_{t\in P_u}|\dist^2(t,Y)-\dist^2(\hat{\sigma}(t),Y)|+2\eps\sum_{t\in P_u}\left(\dist^2(t, \hat{\sigma}(t)) + \dist^2(t,Y)\right).
    \end{align*}
    Using the proofs of Lemma \ref{lem:coresetmeans} and Lemma~\ref{lem:genLightmeans}, we have $\sum_{t\in P_u}|\dist^2(t,Y)-\dist^2(\hat{\sigma}(t),Y)| \leq (1+\eps)\frac{18\epsilon}{10}\kmeans_{Y}(P_u)$, and $\sum_{t\in P_u}\dist^2(t, \hat{\sigma}(t)) \leq (1+\eps)\frac{18\epsilon}{10}\kmeans_{Y}(P_u)$. Thus, we have,
    \begin{align*}
        \mathcal{E}&\leq 
        (1+\eps)\frac{18\epsilon}{10}\kmeans_{Y}(P_u) + (1+\eps)\frac{36\epsilon^2}{10}\kmeans_{Y}(P_u) + 2\epsilon\kmeans_{Y}(P_u) \leq \frac{130}{10}\epsilon\kmeans_{Y}(P_u).
    \end{align*}
    So, if we set $\epsilon \leftarrow \frac{\epsilon}{13}$, the result follows.
\end{proof}

Finally, we show that $\kmeans_{\ret}(\Q_u(\I))$ is a good approximation of $\kmeans_{\opt(\Q_u(\I)}(\Q_u(\I))$ and that $r_u = \frac{1+\epsilon}{(1-\epsilon)^2}\kmeans_\ret(\coreset)$ is a good estimate of $\kmeans_{\ret}(\Q_u(\I))$.

\begin{lemma}
\label{lem:main2Means}
    $\kmeans_{\ret}(\Q_u(\I))\leq r_u\leq (1+\eps)\gamma\kmeans_{\opt(\Q_u(\I))}(\Q_u(\I))$, with probability at least $1-\frac{1}{N^{O(1)}}$.
\end{lemma}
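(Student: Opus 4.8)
The plan is to follow the architecture of Lemma~\ref{lem:main2} verbatim, replacing the $k$-median coreset machinery with its $k$-means counterpart. The single workhorse is Lemma~\ref{lem:coresetmeans2}, which guarantees that with probability at least $1-\frac{1}{N^{O(1)}}$ the set $\coreset$ is a $k$-means $\eps$-coreset of $P_u$; that is, for every set $Y$ of $k$ points in $\Re^{d_u}$ we have $(1-\eps)\kmeans_Y(P_u)\leq \kmeans_Y(\coreset)\leq (1+\eps)\kmeans_Y(P_u)$. I would use the upper side for the bound on $r_u$ and the lower side for the bound $\kmeans_{\ret}(\Q_u(\I))\leq r_u$.

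For the upper bound, I would invoke the $\gamma$-approximation property of $\kmeansAlg_\gamma$ on $\coreset$, reproducing the chain of Equation~\eqref{eq:corfinal} in the $k$-means metric: $\kmeans_{\ret}(\coreset)\leq \gamma\kmeans_{\opt(\coreset)}(\coreset)\leq \gamma\kmeans_{\opt(P_u)}(\coreset)\leq (1+\eps)\gamma\kmeans_{\opt(P_u)}(P_u)$, where the last step is the coreset upper bound applied to $Y=\opt(P_u)$. Since $P_u\subseteq \Q_u(\I)$ and both optimal center sets lie in $\Re^{d_u}$, we have $\kmeans_{\opt(P_u)}(P_u)\leq \kmeans_{\opt(\Q_u(\I))}(\Q_u(\I))$. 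Plugging this into $r_u=\frac{1+\eps}{(1-\eps)^2}\kmeans_{\ret}(\coreset)$ yields $r_u\leq \frac{(1+\eps)^2}{(1-\eps)^2}\gamma\kmeans_{\opt(\Q_u(\I))}(\Q_u(\I))$, and a suitable rescaling of $\eps$ at the outset collapses $\frac{(1+\eps)^2}{(1-\eps)^2}$ into the target factor $(1+\eps)$.

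For the lower bound $\kmeans_{\ret}(\Q_u(\I))\leq r_u$, I would set $Y=\ret$ in Lemma~\ref{lem:genLightmeans} to get $\kmeans_{\ret}(\Q_u(\I))\leq (1+\eps)\kmeans_{\ret}(P_u)$, and then apply the coreset lower bound to obtain $\kmeans_{\ret}(P_u)\leq \frac{1}{1-\eps}\kmeans_{\ret}(\coreset)$. Chaining gives $\kmeans_{\ret}(\Q_u(\I))\leq \frac{1+\eps}{1-\eps}\kmeans_{\ret}(\coreset)\leq \frac{1+\eps}{(1-\eps)^2}\kmeans_{\ret}(\coreset)=r_u$, where the last inequality uses $(1-\eps)^2\leq 1-\eps$ for $\eps\in(0,1)$. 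This is precisely why $r_u$ is defined with the extra factor $\frac{1}{1-\eps}$ relative to the naive estimate: it provides exactly the slack needed to dominate $\kmeans_{\ret}(\Q_u(\I))$.

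I expect the only genuine obstacle to be bookkeeping rather than ideas. The symbols $\eps$ in Lemmas~\ref{lem:coresetmeans2} and~\ref{lem:genLightmeans} are themselves the results of internal rescalings (such as $\eps\leftarrow\eps/13$ and $\eps\leftarrow\eps/6$ in their proofs), so I must fix one final constant $\eps$ and verify that $\frac{(1+\eps)^2}{(1-\eps)^2}\leq 1+\eps$ holds after choosing a sufficiently small $\eps'=\Theta(\eps)$ in the algorithm. The probabilistic part is a routine union bound over the high-probability events of Lemmas~\ref{lem:tech2},~\ref{lem:assignment},~\ref{lem:genLightmeans}, and~\ref{lem:coresetmeans2}; since each fails with probability $\frac{1}{N^{O(1)}}$ and only $O(1)$ of them are combined (in data complexity), the overall success probability remains $1-\frac{1}{N^{O(1)}}$.
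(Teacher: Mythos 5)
Your proposal is correct and follows essentially the same route as the paper: the upper bound via the chain $\kmeans_{\ret}(\coreset)\leq \gamma\kmeans_{\opt(\coreset)}(\coreset)\leq \gamma\kmeans_{\opt(P_u)}(\coreset)\leq (1+\eps)\gamma\kmeans_{\opt(P_u)}(P_u)\leq (1+\eps)\gamma\kmeans_{\opt(\Q_u(\I))}(\Q_u(\I))$, the lower bound via Lemma~\ref{lem:genLightmeans} with $Y=\ret$ combined with the coreset lower bound, and a final rescaling of $\eps$. The only cosmetic difference is that you work with the normalization $r_u=\frac{1+\eps}{(1-\eps)^2}\kmeans_\ret(\coreset)$ stated in the text (absorbing the extra $\frac{1}{1-\eps}$ of slack via rescaling), while the paper's displayed chain uses $r_u=\frac{1+\eps}{1-\eps}\kmeans_\ret(\coreset)$; both yield the claim after adjusting the constant in $\eps$.
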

\begin{proof}
   From Lemma~\ref{lem:coresetmeans}, we have that for any set of $k$ points $Y$ in $\Re^{d_u}$, $(1-\eps)\kmeans_Y(P_u)\leq \kmeans_Y(\coreset)\leq (1+\eps)\kmeans_Y(P_u)$, with probability at least $1-\frac{1}{N^{O(1)}}$.
    By definition, 
    \begin{equation}
        \label{eq:corfinal}
        \kmeans_{\ret}(\coreset)\leq \gamma\kmeans_{\opt(\coreset)}(\coreset)\leq \gamma\kmeans_{\opt(P_u)}(\coreset)\leq (1+\eps)\gamma\kmeans_{\opt(P_u)}(P_u).
    \end{equation}
    The last inequality follows by the definition of the coreset for $Y=\opt(P_u)$.
    Since $P_u\subseteq \Q_u(\I)$ and $\opt(P_u), \opt(\Q_u(\I))\subset \Re^d$, it also holds that $\kmeans_{\opt(P_u)}(P_u)\leq \kmeans_{\opt(\Q_u(\I))}(\Q_u(\I))$.

    From Lemma~\ref{lem:genLightmeans} (for $Y=\ret$) we have $\kmeans_{\ret}(\Q_u(\I))\leq (1+\eps)\kmeans_{\ret}(P_u)$. Hence,
   \begin{align*}
       \kmeans_{\ret}(\Q_u(\I))&\leq (1+\eps)\kmeans_{\ret}(P_u)\leq \frac{1+\eps}{1-\eps}\kmeans_{\ret}(\coreset)=r_u\leq \frac{(1+\eps)^2}{1-\eps}\gamma\kmeans_{\opt(\Q_u(\I))}(\Q_u(\I)).
   \end{align*}
   If we set $\eps\leftarrow \eps/5$, the result follows.
\end{proof}

Notice that there is no difference between the above algorithm and the one for $k$-median proposed in section \ref{subsec:medrand}, and the only difference is in the correctness analysis. Therefore, the running is exactly the same, and we can directly have the following theorem.

\begin{theorem}
    \label{meansRandTheorem}
Let $\I$ be a database instance with $N$ tuples, $\Q$ be an acyclic join query over a set of attributes $\allattr$, and $\allattr_u\subseteq \allattr$. Given a set $X\subset \Re^d$, a constant $\alpha$ such that $\kmeans_X(\Q_u(\I))\leq \alpha \kmeans_{\opt(\Q_u(\I))}(\Q_u(\I))$, and a constant parameter $\eps\in(0,1)$, there exists an algorithm that computes a set $\ret\subset \Re^d$ of $k$ points and a number $r_u$ in $O(|\kapprox|N\log N+|X|^2\log^3(N)\min\{\log^{d_u}(|X|),|X|\log N\}+\timeMeans_\gamma(|\kapprox|
\log N))$ such that $\kmeans_{\ret}(\Q_u(\I))\leq r_u\leq (1+\eps)\gamma\kmeans_{\opt(\Q_u(\I))}(\Q_u(\I))$, with probability at least $1-\frac{1}{N^{O(1)}}$. There also exists an algorithm that computes a set $\ret\subseteq \Q_u(\I)$ of $k$ points and a number $r_u$ with the same running time, such that $\kmeans_{\ret}(\Q_u(\I))\leq r_u\leq (4+\eps)\gamma\kmeans_{\optD(\Q_u(\I))}(\Q_u(\I))$, with probability at least $1-\frac{1}{N^{O(1)}}$.

\end{theorem}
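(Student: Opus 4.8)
The plan is to reduce this theorem to the lemmas already established in this subsection, together with the running-time analysis of the $k$-median counterpart. The algorithm is literally Algorithm~\ref{alg:Random} with only two changes: the grid scale is set to $\Phi=\sqrt{r/(\alpha n)}$ (the $k$-means estimate of the average radius, as in Section~\ref{subsec:meansdet}), and the final clustering call uses $\kmeansAlg_\gamma$ (resp. $\DkmeansAlg_\gamma$) in place of $\kmedianAlg_\gamma$. Neither change affects the number of processed cells, the sample size $M$, or the per-cell work, all of which depend only on $|\kapprox|,\eps,N$. Hence the running-time bound is identical to the one proved for Theorem~\ref{coreset2}, with $\timeMedian_\gamma$ replaced by $\timeMeans_\gamma$, and the only remaining task is to argue correctness.

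For the geometric version I would simply chain the lemmas proved above. Lemma~\ref{lem:tech2} and Lemma~\ref{lem:assignment} are stated to hold independently of the objective, so the weight estimate $n_p\le w(p)\le(1+4\eps')n_p$ and the charging process are available verbatim. Lemma~\ref{lem:genLightmeans} then shows that the heavy tuples $P_u$ capture the $k$-means cost of $\Q_u(\I)$ up to a $(1+\eps)$ factor, and Lemma~\ref{lem:coresetmeans2} shows that $\coreset$ is a $k$-means $\eps$-coreset of $P_u$ with probability at least $1-\frac{1}{N^{O(1)}}$. Feeding these into Lemma~\ref{lem:main2Means} gives exactly $\kmeans_\ret(\Q_u(\I))\le r_u\le(1+\eps)\gamma\kmeans_{\opt(\Q_u(\I))}(\Q_u(\I))$ for the $r_u$ returned by the algorithm, and a union bound over the $O(|\kapprox|\eps^{-d_u}\log N)$ processed cells keeps the total failure probability at $1/N^{O(1)}$.

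For the discrete version the plan is to replace $\kmeansAlg_\gamma$ by $\DkmeansAlg_\gamma$ and re-run the final estimate of Lemma~\ref{lem:main2Means} with $\opt$ replaced by $\optD$. The only new ingredient is the inequality $\kmeans_{\opt(Y)}(Y)\le\kmeans_{\optD(Y)}(Y)\le 4\kmeans_{\opt(Y)}(Y)$: in the step bounding $\kmeans_{\opt(P_u)}(\coreset)\le(1+\eps)\gamma\kmeans_{\opt(P_u)}(P_u)$ one instead compares against the discrete optimum, which costs a factor of (essentially) $4$ rather than the factor $2$ that appears for $k$-median. Propagating this factor through the same arithmetic yields $\kmeans_\ret(\Q_u(\I))\le r_u\le(4+\eps)\gamma\kmeans_{\optD(\Q_u(\I))}(\Q_u(\I))$ with the same probability.

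I expect the only delicate point to be bookkeeping rather than a genuine obstacle. Because Lemmas~\ref{lem:genLightmeans}, \ref{lem:coresetmeans2}, and \ref{lem:main2Means} are each stated after a rescaling of the form $\eps\leftarrow\eps/c$, the main care is to verify that these constants compose so that the user-supplied $\eps$ is the one appearing in the final bound, and that $\eps'=\eps/34$ fixed in Algorithm~\ref{alg:Random} remains compatible with the $k$-means rescalings. The genuinely probabilistic content, namely the Chernoff arguments behind the weight estimates and the charging process, is inherited unchanged from the $k$-median analysis and from Lemma~2 of~\cite{chen2022coresets}, so no new concentration argument is required.
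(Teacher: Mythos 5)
Your proposal matches the paper's own argument essentially step for step: the paper also presents the randomized $k$-means algorithm as Algorithm~\ref{alg:Random} run with the $k$-means grid scale $\Phi=\sqrt{r/(\alpha n)}$, observes that Lemmas~\ref{lem:tech2} and~\ref{lem:assignment} are objective-independent, chains Lemmas~\ref{lem:genLightmeans}, \ref{lem:coresetmeans2}, and~\ref{lem:main2Means} for the geometric guarantee, inherits the running time unchanged from Theorem~\ref{coreset2}, and handles the discrete case via $\kmeans_{\optD(Y)}(Y)\leq 4\kmeans_{\opt(Y)}(Y)$ exactly as you describe. Your remark about composing the various $\eps\leftarrow\eps/c$ rescalings is the right caveat, and the paper glosses over it in the same way.
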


\subsection{k-means Clustering}
\label{subsec:algmeans}
We use the results from the previous section to describe a complete algorithm for the relational $k$-means clustering. As we did in section \ref{subsec:algmedian}, in this section, we describe how to efficiently construct the set $\kapprox$ and the number $r$. To do so, we construct a binary tree as described in section \ref{subsec:algmedian}, and use the same notation. Our goal this time is to compute for each node $\node$, a set $\ret_\node$, and a number $r_\node$, such that,

\begin{equation}
    \label{eq:goalmeans}
    \kmeans_{\ret_\node}(\Q_\node(\I))\leq r_\node\leq (1+\eps)\gamma\kmeans_{\opt(\Q_\node(\I))}(\Q_\node(\I)).
\end{equation}

The algorithm is exactly the same as the algorithm for $k$-median clustering. However, instead of running a standard $k$-median algorithm on the leaf node, we run the exact $O(k\cdot N)$ time algorithm for the $1$-dimensional weighted $k$-means problem in the standard computational setting~\cite{gronlund2017fast, wu1991optimal, fleischer2006online}, and on the intermediate nodes, we use the $k$-means version of our coresets described in Section~\ref{sec:kmeans}.
At last, we report $\ret = \ret_\rootnode$, where $\rootnode$ is the root of the tree.

\paragraph{Correctness}

Recall that for any set $Y\subseteq \Re^{d_\node}$,

$$\kmeans_Y(\Q_\node(\I))=\sum_{t\in \Q(\I)}\dist^2(\pi_{\allattr_\node}(t),Y)=\sum_{t\in\Q(\I)}\sum_{A_j\in \allattr_\node}(\pi_{A_j}(t)-\pi_{A_j}(\nn(Y,\pi_{\allattr_\node}(t)))^2.$$

Assuming that $\ret_v, r_v$ and  $\ret_z, r_z$ satisfy Equation~\ref{eq:goalmeans}, we show how to compute $\ret_\node$ and $r_\node$ that satisfy Equation~\ref{eq:goalmeans}.

If we prove that $\kapprox$ is a constant $\alpha$-approximation of the $k$-means problem on $\all$ and $\kmeans_{\kapprox}(\all)\leq r_u\leq \alpha \kmeans_{\opt(\all)}(\all)$, then the correctness follows from Theorem~\ref{meansRandTheorem} (or Theorem~\ref{coreset:means}).

\begin{lemma}
    \label{lem:finalApproxMeans}
    $\kmeans_{\kapprox}(\all)\leq r_u\leq \alpha \kmeans_{\opt(\all)}(\all)$.
\end{lemma}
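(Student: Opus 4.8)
The plan is to mirror the proof of Lemma~\ref{lem:finalApprox}, exploiting the fact that the \emph{squared} Euclidean distance splits additively over the coordinate partition $\allattr_\node=\allattr_v\cup\allattr_z$. This additivity removes the $\sqrt{2}$ loss that appears in the $k$-median argument and yields $\alpha=(1+\eps)\gamma$ in the geometric case (and $\alpha=4(4+\eps)\gamma$ in the discrete case). Throughout, $r_u$ denotes the value $r=r_v+r_z$ passed to the coreset construction, and I assume the inductive hypothesis that $\ret_v,r_v$ and $\ret_z,r_z$ satisfy Equation~\eqref{eq:goalmeans} at the children $v$ and $z$.

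First I would prove the lower inequality $\kmeans_{\kapprox}(\all)\leq r_u$. Since $\kapprox=\ret_v\times\ret_z$ is a Cartesian product, for each tuple $t$ the optimal center decouples coordinate-wise: minimizing $\|\pi_{\allattr_v}(t)-c_v\|^2+\|\pi_{\allattr_z}(t)-c_z\|^2$ over $(c_v,c_z)\in\ret_v\times\ret_z$ separates into two independent minimizations, so $\dist^2(\pi_{\allattr_\node}(t),\kapprox)=\dist^2(\pi_{\allattr_v}(t),\ret_v)+\dist^2(\pi_{\allattr_z}(t),\ret_z)$ exactly. Summing over $t\in\Q(\I)$ gives $\kmeans_{\kapprox}(\all)=\kmeans_{\ret_v}(\Q_v(\I))+\kmeans_{\ret_z}(\Q_z(\I))\leq r_v+r_z=r_u$, using the left half of \eqref{eq:goalmeans}.

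For the upper inequality I would first apply the right half of \eqref{eq:goalmeans} to obtain $r_v+r_z\leq(1+\eps)\gamma\big(\kmeans_{\opt(\Q_v(\I))}(\Q_v(\I))+\kmeans_{\opt(\Q_z(\I))}(\Q_z(\I))\big)$. Setting $\mathcal{O}_v=\pi_{\allattr_v}(\opt(\all))$, $\mathcal{O}_z=\pi_{\allattr_z}(\opt(\all))$, and $\mathcal{O}=\mathcal{O}_v\times\mathcal{O}_z$, each of $\mathcal{O}_v,\mathcal{O}_z$ is a candidate set of at most $k$ centers, so optimality of $\opt(\Q_v(\I))$ (padding to exactly $k$ centers only decreases cost) gives $\kmeans_{\opt(\Q_v(\I))}(\Q_v(\I))\leq\kmeans_{\mathcal{O}_v}(\Q_v(\I))$ and symmetrically for $z$. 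The same additive split applied to $\mathcal{O}=\mathcal{O}_v\times\mathcal{O}_z$ gives $\kmeans_{\mathcal{O}_v}(\Q_v(\I))+\kmeans_{\mathcal{O}_z}(\Q_z(\I))=\kmeans_{\mathcal{O}}(\all)$. Finally, every $o\in\opt(\all)$ equals $(\pi_{\allattr_v}(o),\pi_{\allattr_z}(o))\in\mathcal{O}_v\times\mathcal{O}_z$, so $\opt(\all)\subseteq\mathcal{O}$ and hence $\kmeans_{\mathcal{O}}(\all)\leq\kmeans_{\opt(\all)}(\all)$. Chaining these bounds yields $r_u=r_v+r_z\leq(1+\eps)\gamma\,\kmeans_{\opt(\all)}(\all)$.

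For the discrete variant the inductive bound reads $r_v\leq(4+\eps)\gamma\,\kmeans_{\optD(\Q_v(\I))}(\Q_v(\I))$, and I would insert $\kmeans_{\optD(\Q_v(\I))}(\Q_v(\I))\leq4\kmeans_{\opt(\Q_v(\I))}(\Q_v(\I))\leq4\kmeans_{\mathcal{O}_v}(\Q_v(\I))$ (and symmetrically for $z$) via the stated $\optD$-versus-$\opt$ comparison for $k$-means, giving $\alpha=4(4+\eps)\gamma$. The only step that demands care — though it is a simplification rather than a difficulty — is the exactness of the product decomposition: for the squared objective the coordinate-wise nearest neighbor is globally optimal over a Cartesian product of center sets, so both directions hold with equality, in contrast to the $k$-median proof which must spend $\sqrt{a^2+b^2}\leq a+b$ and $a+b\leq\sqrt{2}\sqrt{a^2+b^2}$ and thereby loses a factor $\sqrt{2}$.
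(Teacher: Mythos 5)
Your proposal is correct and follows essentially the same route as the paper's proof: the same product set $\mathcal{O}=\mathcal{O}_v\times\mathcal{O}_z$ with $\opt(\all)\subseteq\mathcal{O}$, the same exact coordinate-wise decoupling of the squared distance over the Cartesian product $\kapprox=\ret_v\times\ret_z$ (which is why the $\sqrt{2}$ from the $k$-median case disappears), and the same chaining through the inductive hypothesis to get $\alpha=(1+\eps)\gamma$. Your explicit discrete constant $4(4+\eps)\gamma$ is a consistent instantiation of the paper's remark that the discrete case follows from $\kmeans_{\optD(Y)}(Y)\leq 4\kmeans_{\opt(Y)}(Y)$.
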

\begin{proof}
    Let $\mathcal{O}_v=\pi_{\allattr_v}(\opt(\all))$, and $\mathcal{O}_z=\pi_{\allattr_z}(\opt(\all))$. We define $\mathcal{O}=\mathcal{O}_v\times \mathcal{O}_z$. Notice that $\opt(\all)\subseteq \mathcal{O}$ so $\kmeans_{\mathcal{O}}(\all)\leq \kmeans_{\opt(\all)}(\all)$.
    We have,
    \begin{align*}
        \kmeans_{\kapprox}&(\all)=\sum_{t\in \Q(\I)}||\pi_{\allattr_\node}(t)-\nn(\kapprox,\pi_{\allattr_\node}(t))||^2\\
        &=\sum_{t\in\Q(\I)}||\pi_{\allattr_v}(t)-\pi_{\allattr_v}(\nn(\kapprox,\pi_{\allattr_\node}(t)))||^2+\sum_{t\in\Q(\I)}||\pi_{\allattr_z}(t)-\pi_{\allattr_z}(\nn(\kapprox,\pi_{\allattr_\node}(t)))||^2\\
        &= \sum_{t\in\Q(\I)}||\pi_{\allattr_v}(t)-\nn(\ret_v,\pi_{\allattr_v}(t))||^2+\sum_{t\in\Q(\I)}||\pi_{\allattr_z}(t)-\nn(\ret_z,\pi_{\allattr_z}(t))||^2\\
        &\leq r_v+r_z\\
        &\leq (1+\eps)\gamma\left(\sum_{t\in\Q(\I)}\!\!\!\!||\pi_{\allattr_v}(t)\!\!-\!\!\nn(\opt(\Q_v(\I)),\pi_{\allattr_v}(t))||^2
        \!\!+\!\!\!\!\!\sum_{t\in\Q(\I)}\!\!\!\!||\pi_{\allattr_z}(t)\!-\!\nn(\opt(\Q_z(\I)),\pi_{\allattr_z}(t))||^2\right)\\
        &\leq (1+\eps)\gamma\sum_{t\in\Q(\I)}||\pi_{\allattr_v}(t)-\nn(\mathcal{O}_v,\pi_{\allattr_v}(t))||^2+(1+\eps)\gamma\sum_{t\in\Q(\I)}||\pi_{\allattr_z}(t)-\nn(\mathcal{O}_z,\pi_{\allattr_z}(t))||^2\\
        &= (1+\eps)\gamma\sum_{t\in\Q(\I)}\left(||\pi_{\allattr_v}(t)-\pi_{\allattr_v}(\nn(\mathcal{O},\pi_{\allattr_\node}(t)))||^2+||\pi_{\allattr_z}(t)-\pi_{\allattr_z}(\nn(\mathcal{O},\pi_{\allattr_\node}(t)))||^2\right)\\
        &=(1+\eps)\gamma\sum_{t\in\Q(\I)}||\pi_{\allattr_\node}(t)-\nn(\mathcal{O},\pi_{\allattr_\node}(t))||^2=(1+\eps)\gamma\cdot\kmeans_{\mathcal{O}}(\all)\\
        &\leq (1+\eps)\gamma\cdot \kmeans_{\opt(\all)}(\all).
    \end{align*}
    The above inequalities are mostly simpler versions of the inequalities in Lemma~\ref{lem:finalApprox}, and they hold by the same arguments.
    Overall, we can set $\alpha=(1+\eps)\gamma$ which is a constant, and the result follows.
\end{proof}

We have that $\ret=\ret_\rootnode$ is a $(1+\eps)\gamma$ approximation of the relational $k$-median problem in $\Q(\I)$.

The algorithm is the same as the proposed algorithm for the $k$-median problem, and it has the same time complexity. Therefore we can conclude with the following theorem.

\begin{theorem}
    \label{thm:mainResmeans}
Given an acyclic join query $\Q$, a database instance $\I$, a parameter $k$, and a constant parameter $\eps\in(0,1)$, there exists an algorithm that computes a set $\ret\subset \Re^d$ of $k$ points such that $\kmeans_{\ret}(\Q(\I))\leq (1+\eps)\gamma\kmeans_{\opt(\Q(\I))}(\Q(\I))$, with probability at least $1-\frac{1}{N^{O(1)}}$. The running time of the algorithm is 
    $O(N k^2\log(N)+k^4\log^3(N)\min\{\log^{d}(k),k^2\log N\}+\timeMeans_\gamma(k^2\log N))$.
    Furthermore, there exists an algorithm that computes a set $\ret'\subseteq \Q(\I)$ of $k$ points in the same time such that $\kmeans_{\ret'}(\Q(\I))\leq (4+\eps)\gamma\kmeans_{\opt_{\textsf{disc}}(\Q(\I))}(\Q(\I))$, with probability at least $1-\frac{1}{N^{O(1)}}$.


\end{theorem}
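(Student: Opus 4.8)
The plan is to run the bottom-up procedure of Algorithm~\ref{alg:FullAlg} on the balanced binary tree $\tree$ over the attributes $\allattr$, replacing every $k$-median subroutine by its $k$-means counterpart: at each inner node I would invoke the fast randomized coreset construction of Theorem~\ref{meansRandTheorem} (or the deterministic one of Theorem~\ref{coreset:means}) instead of the $k$-median call, and at each leaf I would run $\kmeansAlg_\gamma$ (resp.\ $\DkmeansAlg_\gamma$) in place of the $k$-median solver. Correctness is then an induction over $\tree$ whose invariant is exactly Equation~\eqref{eq:goalmeans}, namely that for every node $\node$ the returned pair $(\ret_\node,r_\node)$ satisfies $\kmeans_{\ret_\node}(\all)\le r_\node\le (1+\eps)\gamma\,\kmeans_{\opt(\all)}(\all)$.

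For the base case, a leaf node $\node$ stores a single attribute, so $\all$ is the one-dimensional weighted multiset $H_\node$ whose weights are computed exactly by the counting version of the Yannakakis algorithm; running $\kmeansAlg_\gamma$ on $H_\node$ directly yields a pair meeting the invariant with the tighter factor $\gamma$. For the inductive step at an inner node $\node$ with children $v,z$, I would set $\kapprox=\ret_v\times\ret_z$ and $r=r_v+r_z$. Assuming the invariant at $v$ and $z$, Lemma~\ref{lem:finalApproxMeans} shows $\kmeans_{\kapprox}(\all)\le r\le \alpha\,\kmeans_{\opt(\all)}(\all)$ with $\alpha=(1+\eps)\gamma$, a \emph{constant}. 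Feeding this $\kapprox$, $\alpha$, and $r$ into Theorem~\ref{meansRandTheorem} then produces $(\ret_\node,r_\node)$ that again satisfies Equation~\eqref{eq:goalmeans}, closing the induction. Applying the invariant at the root $\rootnode$, where $\allattr_\rootnode=\allattr$ and hence $\Q_\rootnode(\I)=\Q(\I)$, gives the claimed $(1+\eps)\gamma$-approximation.

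The subtle point I would stress is that the approximation factor must \emph{not} compound along the $O(\log d)$ levels of $\tree$: Theorem~\ref{meansRandTheorem} re-establishes a $(1+\eps)\gamma$ guarantee against the \emph{local} optimum $\opt(\all)$ at each node, so the only role of the merely constant-factor input $\kapprox$ is to seed the exponential grid, and its quality does not multiply up the tree. This is cleaner than the $k$-median case: because the squared Euclidean distance splits \emph{exactly} additively across the coordinate blocks $\allattr_v$ and $\allattr_z$, the recombination in Lemma~\ref{lem:finalApproxMeans} incurs no $\sqrt2$ loss, so $\alpha=(1+\eps)\gamma$ rather than $(1+\eps)\gamma\sqrt2$. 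I expect verifying this non-compounding behavior---together with the charging and Chernoff arguments already imported through Lemmas~\ref{lem:tech2}--\ref{lem:coresetmeans2}---to be the only real content; the rest is bookkeeping.

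Finally, for the running time and success probability I would argue as follows. The tree has $O(d)=O(1)$ nodes, and at each inner node $|\kapprox|=|\ret_v|\cdot|\ret_z|=k^2$; substituting $|\kapprox|=k^2$ and constant $\eps,d_u\le d$ into the bound of Theorem~\ref{meansRandTheorem} gives a per-node cost of $O(Nk^2\log N+k^4\log^3 N\,\min\{\log^{d}k,\,k^2\log N\}+\timeMeans_\gamma(k^2\log N))$, and summing over the constantly many nodes preserves this form---identical to the $k$-median bound of Theorem~\ref{thm:mainRes}. Each call to Theorem~\ref{meansRandTheorem} succeeds with probability $1-1/N^{O(1)}$, so a union bound over the $O(1)$ nodes keeps the overall success probability at $1-1/N^{O(1)}$. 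For the discrete variant I would instead call $\DkmeansAlg_\gamma$ and enlarge $\alpha$ by the factor $4$ coming from $\kmeans_{\optD}(Y)\le 4\,\kmeans_{\opt}(Y)$, which propagates through the same induction to yield $\ret'\subseteq\Q(\I)$ with $\kmeans_{\ret'}(\Q(\I))\le(4+\eps)\gamma\,\kmeans_{\optD(\Q(\I))}(\Q(\I))$ in the same running time.
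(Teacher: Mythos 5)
Your proposal is correct and follows essentially the same route as the paper: the bottom-up induction over the attribute tree with invariant Equation~\eqref{eq:goalmeans}, the seeding of each inner node with $\kapprox=\ret_v\times\ret_z$ and $r=r_v+r_z$ justified by Lemma~\ref{lem:finalApproxMeans} (with $\alpha=(1+\eps)\gamma$, no $\sqrt2$ loss since squared distances split additively), and the invocation of Theorem~\ref{meansRandTheorem} at each node so the factor does not compound. Your running-time and discrete-case accounting also match the paper's.
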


Using Theorem~\ref{coreset:means}, we can also get a deterministic algorithm with the same guarantees that runs in $O(k^{2d+2}N\log^{d+2} N+\timeMeans_\gamma(k^2\log N))$ time.

\section{Missing Proofs from Subsection~\ref{subsec:detmed}}
\label{appndx:slowAlg}

\begin{proof}[Proof of Lemma~\ref{lem:correct1}]

    Next, we assume that  $\DkmedianAlg_\gamma$ is used. We note that \begin{align*}
    \kmedian_\ret(\Q_u(\I))&\leq \frac{1}{1-\eps'}\kmedian_{\ret}(\coreset)=r_u\leq 
    \gamma\frac{1}{1-\eps'}\kmedian_{\optD(\coreset)}(\coreset) 
    \leq 
    2\gamma\frac{1}{1-\eps'}\kmedian_{\opt(\coreset)}(\coreset)
     \\&\leq 2\frac{1+\eps'}{1-\eps'}\gamma \kmedian_{\opt(\Q_u(\I))}(\Q_u(\I))\leq 2(1+4\eps')\gamma \kmedian_{\opt(\Q_u(\I))}(\Q_u(\I))\\&=2(1+\eps)\gamma \kmedian_{\opt(\Q_u(\I))}(\Q_u(\I))\leq 2(1+\eps)\gamma \kmedian_{\optD(\Q_u(\I))}(\Q_u(\I)).
    \end{align*}
    The result follows setting $\eps\leftarrow \eps/2$.
\end{proof}

\section{Missing Proofs From Subsection~\ref{subsec:medrand}}
\label{appndx:fast}

\begin{proof}[Proof of Lemma~\ref{lem:genLight}]


    We bound the second term in the sum.
    We define a new assignment function $\sigma'$ as we did in Lemma~\ref{lem:coreset1}. 
    For a point $p\in \bar{P}_u$ we set $\sigma'(p)=p$. If a point $t\in P_u$, is not charged by any point in $\bigcup_{\square\in L}\mathcal{P}_{\square}^L$, then $\sigma'(t)=t$.
    For a point $t\in P_u$ let $p_t$ be the point in $\bigcup_{\square\in L}\mathcal{P}_{\square}^L$ such that $p_t$ charges $t=t_{j_h(p_t)}$ for a value of $h$, according to Lemma~\ref{lem:assignment}. If $p_t\in \bar{P}_u$ then  $\sigma'(t)=p_t$. If $p_t\in P_u$, $\sigma'(t)=t$.

    Using the new assignment function $\sigma'(\cdot)$, the second term can be bounded as 
    $$\eps' \sum_{p\in\bar{P}_u}\sum_{h=1}^{1/\eps'}\dist(p,t_{j_h(p)})\leq \eps'\sum_{t\in P_u}\dist(t,\sigma'(t))\leq \eps' \sum_{t\in \Q_u(\I)}\dist(t,\sigma'(t)).$$
    We note that $\sigma'$ is a different assignment than the assignment $\sigma$ we used in Lemma~\ref{lem:coreset1}, however, notice that in all cases both $t$ and $\sigma'(t)$ belong to the same cell defined by the exponential grids as constructed and processed by the algorithm.
    In fact, both $t$ and $\sigma'(t)$ belong in the same cell $\square_{i'(t)}\in \widebar{V}_{i'(t)}$ around a center $x_{i'(t)}\in \kapprox$. By construction, condition~\eqref{eq:cond} is satisfied for $x_{i'(t)}$ and $\square_{i'(t)}$.
    Hence, the same properties hold. We can distinguish between $\dist(t,x_{i'(t)})\leq \Phi$ and $\dist(t,x_{i'(t)})>\Phi$ as we did in Lemma~\ref{lem:coreset1} making the same arguments. Using the proof of Lemma~\ref{lem:coreset1}, we get that
    $$\eps' \sum_{t\in \Q_u(\I)}\dist(t,\sigma'(t))\leq (\eps')^2\kmedian_{Y}(\Q_u(\I))\leq \eps'\kmedian_{Y}(\Q_u(\I)).$$
    The first inequality $\kmedian_Y(\bar{P}_u)\leq \eps'\kmedian_Y(P_u)+\eps'\kmedian_{Y}(\Q_u(\I))$ follows.

\end{proof}

\begin{proof}[Proof of Lemma~\ref{lem:main2}]

 Next, we assume that  $\DkmedianAlg_\gamma$ is used. We have,
 \begin{align*}
        \kmedian_{\ret}(\coreset)&\leq \gamma\kmedian_{\optD(\coreset)}(\coreset)\leq 2\gamma\kmedian_{\opt(\coreset)}(\coreset)\leq 2\gamma\kmedian_{\opt(P_u)}(\coreset)\leq 2(1+9\eps')\gamma\kmedian_{\opt(P_u)}(P_u)\\&\leq 2(1+9\eps')\gamma\kmedian_{\opt(\Q_u(\I))}(\Q_u(\I)).
    \end{align*}
    Hence,
    \begin{align*}
       \kmedian_{\ret}(\Q_u(\I))&\leq (1+4\eps')\kmedian_{\ret}(P_u)\leq \frac{1+4\eps'}{1-9\eps'}\kmedian_{\ret}(\coreset)=r_u\leq 2\frac{(1+4\eps')(1+9\eps')}{1-9\eps'}\gamma\kmedian_{\opt(\Q_u(\I))}(\Q_u(\I))\\
       &\leq 2(1+34\eps')\gamma\kmedian_{\opt(\Q_u(\I))}(\Q_u(\I))=2(1+\eps)\gamma\kmedian_{\opt(\Q_u(\I))}(\Q_u(\I))\\&\leq 2(1+\eps)\gamma\kmedian_{\optD(\Q_u(\I))}(\Q_u(\I)).
   \end{align*}
   The result follows setting $\eps\leftarrow \eps/2$.
\end{proof}

\section{Missing proofs from Subsection~\ref{subsec:algmedian}}
\label{appndx:finalAlg}
\begin{proof}[Proof of Lemma~\ref{lem:finalApprox}]
    We first consider that $\kmedianAlg_\gamma$ is used.
    Let $\mathcal{O}_v=\pi_{\allattr_v}(\opt(\all))$, and $\mathcal{O}_z=\pi_{\allattr_z}(\opt(\all))$. We define $\mathcal{O}=\mathcal{O}_v\times \mathcal{O}_z$. Notice that $\opt(\all)\subseteq \mathcal{O}$ so $\kmedian_{\mathcal{O}}(\all)\leq \kmedian_{\opt(\all)}(\all)$.
    We have,
    \begin{align*}
        \kmedian_{\kapprox}&(\all)=\sum_{t\in \Q(\I)}||\pi_{\allattr_\node}(t)-\nn(\kapprox,\pi_{\allattr_\node}(t))||\\
        &=\sum_{t\in\Q(\I)}\sqrt{||\pi_{\allattr_v}(t)-\pi_{\allattr_v}(\nn(\kapprox,\pi_{\allattr_\node}(t)))||^2+||\pi_{\allattr_z}(t)-\pi_{\allattr_z}(\nn(\kapprox,\pi_{\allattr_\node}(t)))||^2}\\
        &\leq \sum_{t\in\Q(\I)}||\pi_{\allattr_v}(t)-\pi_{\allattr_v}(\nn(\kapprox,\pi_{\allattr_\node}(t)))||+\sum_{t\in\Q(\I)}||\pi_{\allattr_z}(t)-\pi_{\allattr_z}(\nn(\kapprox,\pi_{\allattr_\node}(t)))||\\
        &= \sum_{t\in\Q(\I)}||\pi_{\allattr_v}(t)-\nn(\ret_v,\pi_{\allattr_v}(t))||+\sum_{t\in\Q(\I)}||\pi_{\allattr_z}(t)-\nn(\ret_z,\pi_{\allattr_z}(t))||\\
        &\leq r_v+r_z\\
        &\leq (1+\eps)\gamma\left(\sum_{t\in\Q(\I)}\!\!\!||\pi_{\allattr_v}(t)\!-\!\nn(\opt(\Q_v(\I)),\pi_{\allattr_v}(t))||\!+\!\!\!\!\sum_{t\in\Q(\I)}\!\!\!\!||\pi_{\allattr_z}(t)-\nn(\opt(\Q_z(\I)),\pi_{\allattr_z}(t))||\right)\\
        &\leq (1+\eps)\gamma\sum_{t\in\Q(\I)}||\pi_{\allattr_v}(t)-\nn(\mathcal{O}_v,\pi_{\allattr_v}(t))||+(1+\eps)\gamma\sum_{t\in\Q(\I)}||\pi_{\allattr_z}(t)-\nn(\mathcal{O}_z,\pi_{\allattr_z}(t))||\\
        &= (1+\eps)\gamma\sum_{t\in\Q(\I)}\left(||\pi_{\allattr_v}(t)-\pi_{\allattr_v}(\nn(\mathcal{O},\pi_{\allattr_\node}(t)))||+||\pi_{\allattr_z}(t)-\pi_{\allattr_z}(\nn(\mathcal{O},\pi_{\allattr_\node}(t)))||\right)\\
        &\leq (1+\eps)\gamma\sqrt{2}\sum_{t\in \Q(\I)}\sqrt{||\pi_{\allattr_v}(t)-\pi_{\allattr_v}(\nn(\mathcal{O},\pi_{\allattr_\node}(t)))||^2+||\pi_{\allattr_z}(t)-\pi_{\allattr_z}(\nn(\mathcal{O},\pi_{\allattr_\node}(t)))||^2}\\
        &=(1+\eps)\gamma\sqrt{2}\sum_{t\in\Q(\I)}||\pi_{\allattr_\node}(t)-\nn(\mathcal{O},\pi_{\allattr_\node}(t))||=(1+\eps)\gamma\sqrt{2}\cdot\kmedian_{\mathcal{O}}(\all)\\
        &\leq (1+\eps)\gamma\sqrt{2}\cdot \kmedian_{\opt(\all)}(\all).
    \end{align*}
    The first and second equalities hold by the definition of the Euclidean metric. The third inequality holds because $\sqrt{a+b}\leq \sqrt{a}+\sqrt{b}$, for $a,b>0$.
    
    We show the fourth inequality by proof by contradiction. Notice that for any $\bar{x}\in\kapprox$, $\pi_{\allattr_v}(\bar{x})\in \ret_v$, because $\kapprox=\ret_v\times \ret_z$.
    Let $s\in\ret_v$ be the closest point in $\ret_v$ from $\pi_{\allattr_v}(t)$, and let $x\in \kapprox$ be the closest point in $\kapprox$ from $\pi_{\allattr_\node}(t)$. Without loss of generality, assume that $s$ is the unique nearest neighbor.
     Assume that $\pi_{\allattr_v}(\nn(\kapprox,\pi_{\allattr_\node}(t)))\neq \nn(\ret_v,\pi_{\allattr_v}(t))\Leftrightarrow \pi_{\allattr_v}(x)\neq s$. In fact, assume that $\pi_{\allattr_v}(x)=s'\in \ret_v$ such that $s'\neq s$, and $\pi_{\allattr_z}(x)=\bar{s}\in \ret_z$.
     Let $x'=s\times \bar{s}\in \kapprox$.
     By definition, $\sqrt{\sum_{A_j\in \allattr_v}(\pi_{A_j}(s)-\pi_{A_j}(t))^2}< \sqrt{\sum_{A_j\in \allattr_v}(\pi_{A_j}(s')-\pi_{A_j}(t))^2}$.
    We have,
    \begin{align*}
    ||x-\pi_{\allattr_\node}(t)||&=\sqrt{\sum_{A_j\in \allattr_\node}(\pi_{A_j}(x)-\pi_{A_j}(t))^2}=\sqrt{\sum_{A_j\in \allattr_v}(\pi_{A_j}(x)-\pi_{A_j}(t))^2 + \sum_{A_j\in \allattr_z}(\pi_{A_j}(x)-\pi_{A_j}(t))^2}\\
    &=\sqrt{\sum_{A_j\in \allattr_v}(\pi_{A_j}(s')-\pi_{A_j}(t))^2 + \sum_{A_j\in \allattr_z}(\pi_{A_j}(\bar{s})-\pi_{A_j}(t))^2}\\
    &> \sqrt{\sum_{A_j\in \allattr_v}(\pi_{A_j}(s)-\pi_{A_j}(t))^2 + \sum_{A_j\in \allattr_z}(\pi_{A_j}(\bar{s})-\pi_{A_j}(t))^2}=\sqrt{\sum_{A_j\in \allattr_\node}(\pi_{A_j}(x')-\pi_{A_j}(t))^2}\\
    &=||x'-\pi_{\allattr_\node}(t)||,
    \end{align*}
    which is a contradiction because $x$ is the closest point in $\kapprox$ from $\pi_{\allattr_\node}(t)$.
    
    The fifth inequality holds because $r_v, r_z$ satisfy Equation~\eqref{eq:goal}.
    The sixth inequality holds because of the definition of $\ret_v$ and $\ret_z$, i.e., $\kmedian_{\ret_v}(\Q_v(\I))\leq (1+\eps)\gamma\kmedian_{\opt(\Q_v(\I))}(\Q_v(\I))$ (similarly for $z$).
    The seventh inequality holds because $\kmedian_{\opt(\Q_v(\I))}(\Q_v(\I))\leq \kmedian_{\mathcal{O}_v}(\Q_v(\I))$ (similarly for $z$).
    The eighth equality holds because it is equivalent to the fourth inequality.
    The ninth inequality holds because $a+b\leq \sqrt{2}\sqrt{a^2+b^2}$ for $a,b>0$. The tenth equality holds because of the definition of the Euclidean metric. The eleventh equality holds by definition of $\kmedian_{\mathcal{O}}(\all)$. The last inequality holds because $\opt(\all)\subseteq \mathcal{O}$.
    Overall, $\alpha=(1+\eps)\gamma\sqrt{2}$ which is a constant.
    Furthermore, notice that $r=r_v+r_z$ so the result follows. Moreover, $r_v\leq (1+\eps)\sum_{t\in \Q(\I)}||\pi_{\allattr_v}(t)-\nn(\ret_v,\pi_{\allattr_v}(t))||$ so $r\leq (1+\eps)\sqrt{2}\kmedian_{\kapprox}(\Q_u(\I))$.

    Similarly, we show the analysis assuming $D\kmedianAlg_\gamma$ is used.
    We have,
    \begin{align*}
    &\kmedian_{\kapprox}(\all)\leq \ldots\leq 
    r_v+r_z
        \\&\leq (2+\eps)\gamma\left(\!\sum_{t\in\Q(\I)}\!\!\!\!\!\left(||\pi_{\allattr_v}(t)\!-\!\nn(\optD(\Q_v(\I)),\pi_{\allattr_v}(t))||\!+\!||\pi_{\allattr_z}(t)\!-\!\nn(\optD(\Q_z(\I)),\pi_{\allattr_z}(t))||\right)\right)\\&\leq 2(2+\eps)\gamma\left(\sum_{t\in\Q(\I)}\!\!\!\!\left(||\pi_{\allattr_v}(t)\!-\!\nn(\opt(\Q_v(\I)),\pi_{\allattr_v}(t))||+\!||\pi_{\allattr_z}(t)\!-\!\nn(\opt(\Q_z(\I)),\pi_{\allattr_z}(t))||\right)\!\right)\\&\leq \ldots\leq 2(2+\eps)\gamma\sqrt{2}\kmedian_{\opt(\Q_u(\I))}(\Q_u(\I)).
    \end{align*}    
    The result follows.
    \end{proof}

\section{Extension to cyclic queries}
\label{appndx:generalQueries}
We first start with some definitions.
Let $\Q$ be a query over a set of $d$ attributes $\allattr$ and a set of $m$ relations $\allrel$.

A fractional edge cover of join query $\Q$ is a point $x = \{x_R \mid R\in \allrel\} \in \mathbb{R}^{m}$ such that for any attribute $A \in \allattr$, $\sum_{R \in \allrel_A} x_R \ge 1$, where $\allrel_A$ are all the relations in $\allattr$ that contain the attribute $A$.
As proved in~\cite{atserias2013size}, the maximum output size of a join query $\Q$ is $O(N^{\lVert x \rVert_1})$. Since the above bound holds for any fractional edge cover, we define $\rho = \rho(\Q)$ to be the fractional cover with the smallest $\ell_1$-norm, i.e., $\rho(\Q)$ is the value of the objective function of the optimal solution of linear programming (LP):
\begin{equation}
\label{eq:lp}
    \min \sum_{R \in \allrel} x_R, \;\text{s.t.}\; \forall R \in \allrel: x_R \ge 0 \;\text{and}\; \forall A \in \allattr: \sum_{R \in \allattr_R} x_R \ge 1.
\end{equation}

Next, we give the definition of the Generalized Hypetree Decomposition (GHD).
A GHD of $\Q$ is a pair $(\T, \lambda)$, where
$\T$ is a tree as an ordered set of nodes and $\lambda: \T \to 2^{\allattr}$ is a labeling function which associates to each vertex $u \in \T$ a subset of attributes in $\allattr$, called $\lambda_u$, such that the following conditions are satisfied:
\begin{itemize}[leftmargin=*]
    \item (coverage) For each $R \in \allrel$, there is a node $u \in \T$ such that $\allattr_R\subseteq\lambda_u$, where $\allattr_R$ is the set of attributes contained in $R$;
    \item (connectivity) For each $A \in \allattr$, the set of nodes $\{u \in \T: A \in \lambda_u\}$ forms a connected subtree of $\T$.
\end{itemize}

Given a join query $\Q$, one of its GHD $(\T, \lambda)$ and a node $u \in \T$, the width of $u$ is defined as the optimal fractional edge covering number of its derived hypergraph $(\lambda_u, \E_u)$, where $\E_u = 
\{\allattr_R\cap \lambda_u: R \in \allrel\}$. Given a join query and a GHD $(\T, \lambda)$, the width of $(\T, \lambda)$ is defined as the maximum width over all nodes in $\T$. Then, the fractional hypertree width of a join query follows:
The fractional hypertree width of a join query $\Q$, denoted as $\fhw(\Q)$, is
\[\fhw(\Q) = \min_{(\T, \lambda)} \max_{u \in \T} \rho(\lambda_u, \E_u),\]
i.e., the minimum width over all GHDs.

Overall, $O(N^\fhw)$ is an upper bound on the number of join results materialized for each node in $\T$. It is also the time complexity to compute the join results for each node in $\T$~\cite{atserias2013size}. Hence, we converted our original cyclic query into an acyclic join query (with join tree $\T$) where each relation has $O(N^\fhw)$ tuples. We execute all our algorithms to the new acyclic join query replacing the $N$ factor with the $N^\fhw$ factor in the running time.

\end{document}